\definecolor{mygreen}{HTML}{3C8031}
\definecolor{myorange}{HTML}{F58137}
\definecolor{rot}{RGB}{200,34,84}
\tikzset{
	katexpr/.style={
		line width=0.2pt,
		rounded corners=0.1cm,
		draw,
		fill=rot,
		opacity=.2,
		color=rot,
		inner xsep=-2.5pt,
		inner ysep=-1pt
	},
	collapsewp/.style={
		line width=1.5pt,
		dashed,
		draw,
		color=DodgerBlue3,
		inner xsep=-2.5pt,
		inner ysep=-1pt
	},
	collapsesp/.style={
		line width=1.5pt,
		rounded corners=0.2cm,
		draw,
		color=DodgerBlue3,
		inner xsep=-2.5pt,
		inner ysep=-1pt
	},
	contrapos/.style={
		draw=mygreen,
		thick,
		dotted,
		line width=.75pt
	},
	weakcontrapos/.style={
		draw=mygreen!40,
		thick,
		dotted,
		line width=1pt
	},
	implication/.style={
		-implies,double equal sign distance, thick,shorten <=1pt,shorten >=1pt
	},
	weakimplication/.style={
		-implies,double equal sign distance, thick,shorten <=1pt,shorten >=1pt,lightgray
	},
	implicationshort/.style={ 
		-implies,double equal sign distance, thick,shorten <=4pt,shorten >=4pt
	},
	implicationup/.style={
		-implies,double equal sign distance, thick, shorten >=6pt, shorten <=1pt
	},
	implicationdown/.style={
		-implies,double equal sign distance, thick, shorten <=6pt, shorten >=1pt
	},
	implicationupkat/.style={
		-implies,double equal sign distance, thick, shorten >=7pt, shorten <=1pt
	},
	implicationdownkat/.style={
		-implies,double equal sign distance, thick, shorten <=7pt, shorten >=1pt
	},
	implicationupmid/.style={
		-implies,double equal sign distance, thick, shorten >=11pt, shorten <=1pt
	},
	implicationdownmid/.style={
		-implies,double equal sign distance, thick, shorten <=11pt, shorten >=1pt
	},
	equivalence/.style={
		implies-implies,double equal sign distance, thick
	},
	equivalencemid/.style={
		implies-implies,double equal sign distance, thick, shorten >=1pt, shorten <=11pt
	},
	equivalencekat/.style={
		implies-implies,double equal sign distance, thick, shorten <=2pt, shorten >=2pt
	},
	galois/.style={
		draw=myorange,
		implies-implies,
		double equal sign distance,
		thick,
		shorten <=1pt,shorten >=1pt
	},
	axis/.style={
		line width=2pt,
		semitransparent,
	},
	weakgalois/.style={
		draw=myorange!40,
		implies-implies,
		double equal sign distance,
		thick,
		shorten <=1pt,shorten >=1pt
	},
	galoisshort/.style={
		draw=myorange,
		implies-implies,
		double equal sign distance,
		thick,
		shorten <=6pt
	},
	implic/.style={
		draw=myorange,
		thick
	},	
	mystate/.style = {
		circle,
		inner sep=3pt,
		draw,
		font=\small
	},
	pre/.style={
		fill=mygreen!40
	},
	post/.style={
		fill=DodgerBlue3!40,
	},
	selected/.style={
		line width=2.0pt,
		color=rot
	},
	nonselected/.style={
		line width=1pt,
		color=black 
	},
	selectnode/.style={
		line width=2.0pt,
		draw=rot, 
	},
	program/.style={
		->,thick
	}
}
\newcommand{\qedtriangle}{\hfill\raisebox{-.15ex}{\rotatebox{90}{$\triangle$}}}
\newcommand{\hoare}[3]{\left\langle \,{#1}\vphantom{#3}\, \right\rangle \mathrel{#2} \left\langle \, {#3}\vphantom{#1} \, \right\rangle}
\newcommand{\sfsymbol}[1]{\textsf{\upshape {#1}}}
\newcommand{\ttsymbol}[1]{\texttt{\upshape {#1}}}
\newcommand{\wpsymbol}{\sfsymbol{wp}}
\renewcommand{\wp}[2]{\wpsymbol\,\llbracket#1\rrbracket(#2)}
\newcommand{\wpC}[1]{\wpsymbol\llbracket#1\rrbracket}
\newcommand{\boldawp}[2]{\boldawpsymbol\,\boldsymbol{\llbracket#1\rrbracket(#2)}}
\newcommand{\bolddwp}[2]{\bolddwpsymbol\,\boldsymbol{\llbracket#1\rrbracket(#2)}}
\newcommand{\boldawlp}[2]{\boldawlpsymbol\,\boldsymbol{\llbracket#1\rrbracket(#2)}}
\newcommand{\spsymbol}{\sfsymbol{sp}}
\renewcommand{\sp}[2]{\spsymbol\,\llbracket#1\rrbracket(#2)}
\newcommand{\spC}[1]{\spsymbol\llbracket#1\rrbracket}
\newcommand{\aspsymbol}{\sfsymbol{asp}}
\newcommand{\boldaspsymbol}{\textbf{\sfsymbol{asp}}}
\newcommand{\asp}[2]{\aspsymbol\,\llbracket#1\rrbracket(#2)}
\newcommand{\boldasp}[2]{\boldaspsymbol\,\boldsymbol{\llbracket#1\rrbracket(#2)}}
\newcommand{\dspsymbol}{\sfsymbol{dsp}}
\newcommand{\dsp}[2]{\dspsymbol\,\llbracket#1\rrbracket(#2)}
\newcommand{\slpsymbol}{\sfsymbol{slp}}
\newcommand{\slp}[2]{\slpsymbol\llbracket#1\rrbracket(#2)}
\newcommand{\aslpsymbol}{\sfsymbol{aslp}}
\newcommand{\aslp}[2]{\aslpsymbol\llbracket#1\rrbracket(#2)}
\newcommand{\bolddslp}[2]{\bolddslpsymbol\,\boldsymbol{\llbracket#1\rrbracket(#2)}}
\newcommand{\dslpsymbol}{\sfsymbol{dslp}}
\newcommand{\bolddslpsymbol}{\textbf{\sfsymbol{dslp}}}
\newcommand{\dslp}[2]{\dslpsymbol\llbracket#1\rrbracket(#2)}
\newcommand{\awpsymbol}{\sfsymbol{awp}}
\newcommand{\boldawpsymbol}{\textbf{\sfsymbol{awp}}}
\newcommand{\awp}[2]{\awpsymbol\llbracket#1\rrbracket(#2)}
\newcommand{\dwpsymbol}{\sfsymbol{dwp}}
\newcommand{\bolddwpsymbol}{\textbf{\sfsymbol{dwp}}}
\newcommand{\dwp}[2]{\dwpsymbol\llbracket#1\rrbracket(#2)}
\newcommand{\wlpsymbol}{\sfsymbol{wlp}}
\newcommand{\wlp}[2]{\wlpsymbol\llbracket#1\rrbracket(#2)}
\newcommand{\bolddwlp}[2]{\bolddwlpsymbol\,\boldsymbol{\llbracket#1\rrbracket(#2)}}
\newcommand{\dwlpsymbol}{\sfsymbol{dwlp}}
\newcommand{\bolddwlpsymbol}{\textbf{\sfsymbol{dwlp}}}
\newcommand{\dwlp}[2]{\dwlpsymbol\llbracket#1\rrbracket(#2)}
\newcommand{\awlpsymbol}{\sfsymbol{awlp}}
\newcommand{\boldawlpsymbol}{\textbf{\sfsymbol{awlp}}}
\newcommand{\awlp}[2]{\awlpsymbol\llbracket#1\rrbracket(#2)}
\newcommand{\conditionalPair}[2]{{\let\oldarraystretch\arraystretch}\renewcommand{\arraystretch}{1}~\holter{~\raisebox{.5ex}{${#1}$}~}{~\raisebox{.125ex}{${#2}$}~}~\renewcommand{\arraystretch}{\oldarraystretch}}
\newcommand{\guard}{\ensuremath{\guardb}} 
\newcommand{\guardb}{\ensuremath{b}} 
\newcommand{\ee}{\ensuremath{e}} 
\newcommand{\SKIP}{\ttsymbol{skip}}
\newcommand{\DIVERGE}{\ensuremath{\textnormal{\texttt{diverge}}}}
\newcommand{\AssignSymbol}{\coloneqq}
\newcommand{\ASSIGN}[2]{\ensuremath{#1 \AssignSymbol #2}}
\newcommand{\AVAILLOC}[1]{\PosNats}
\newcommand{\COMPOSE}[2]{\ensuremath{{#1}{\,\fatsemi}~ {#2}}}
\newcommand{\NDCHOICE}[2]{\ensuremath{\left\{\, {#1} \,\right\}\mathrel{\Box}\left\{\, {#2} \,\right\}}}
\newcommand{\IFSYMBOL}{\ensuremath{\textnormal{\texttt{if}}}}
\newcommand{\ELSESYMBOL}{\ensuremath{\textnormal{\texttt{else}}}}
\newcommand{\ITE}[3]{\ensuremath{\IFSYMBOL\,\left(\, {#1} \,\right)\,\left\{\, {#2} \,\right\}\,\ELSESYMBOL\,\left\{\, {#3} \,\right\}}}
\newcommand{\WHILESYMBOL}{\ensuremath{\textnormal{\texttt{while}}}}
\newcommand{\WHILEDO}[2]{\ensuremath{\WHILESYMBOL \left(\, {#1} \,\right)\left\{\, {#2} \,\right\}}}
\newcommand{\ngcl}{\textnormal{\sfsymbol{nGCL}}\xspace}   
\newcommand{\Vars}{\ensuremath{\mathsf{Vars}}\xspace}   
\newcommand{\Pred}{\ensuremath{\textsf{Pred}}} 
\newcommand{\PosNats}{\ensuremath{\mathbb{N}_{>0}}\xspace}
\newcommand{\Ints}{\ensuremath{\mathbb{Z}}\xspace}
\newcommand{\B}{\mathbb{B}}
\newcommand{\iverson}[1]{\left[ {#1} \right]}
\newcommand{\subst}[2]{\left[ {#1} \middle/ {#2}\right]}
\newcommand{\eval}[1]{\ensuremath{\llbracket {#1} \rrbracket}}
\newcommand{\evalInv}[1]{\ensuremath{\eval{#1}^{-1}}}
\newcommand{\sem}[2]{\ensuremath{\llbracket {#1} \rrbracket}(#2)}
\newcommand{\seminv}[2]{\ensuremath{\llbracket {#1} \rrbracket}^{{-}1}(#2)}
\newcommand{\States}{\Sigma}
\newcommand{\true}{\mathsf{true}}
\newcommand{\false}{\mathsf{false}}
\newcommand{\mydot}{\text{{\Large\textbf{.}}~}\xspace}
\newcommand{\mylambda}[1]{\lambda\, #1\mydot}
\newcommand*\oline[1]{%
  \vbox{%
    \hrule height 0.3pt
    \kern0.25ex
    \hbox{%
      \kern0.04em
      \ifmmode#1\else\ensuremath{#1}\fi
      \kern0.04em
    }
  }
}
\newcommand*\olinedbl[1]{%
  \vbox{%
    \hrule height 0.6pt
    \kern0.25ex
    \hbox{%
      \kern0.04em
      \ifmmode#1\else\ensuremath{#1}\fi
      \kern0.04em
    }
  }
}
\newcommand{\negate}[1]{\oline{#1}}	
\newcommand{\negatedbl}[1]{\olinedbl{#1}}	
\newcommand{\vspaace}{\vphantom{{\LARGE A}}}
\newcommand{\qiff}{\quad\textnormal{iff}\quad}
\newcommand{\qqiff}{\qquad\textnormal{iff}\qquad}
\newcommand{\cupdot}{\xspace \ensuremath{\dot{\cup}} \xspace}
\newcommand{\lordot}{\xspace \ensuremath{\dot{\lor}} \xspace}
\newcommand{\qqand}{\qquad\textnormal{and}\qquad}
\newcommand{\qqorequiv}{\qqmorespace{\textnormal{or equivalently}}}
\newcommand{\qqimplies}{\qquad\textnormal{implies}\qquad}
\newcommand{\morespace}[1]{~{}#1{}~}
\newcommand{\qqmorespace}[1]{\qquad{}#1{}\qquad}
\newcommand{\eeq}{~{}={}~}
\newcommand{\ccup}{~{}\cup{}~}
\newcommand{\qmid}{\quad{}|{}\quad}
\newcommand{\qeq}{\quad{}={}\quad}
\newcommand{\mmodels}{~{}\models{}~}
\newcommand{\setcomp}[2]{\left\{\, {#1} ~\middle|~ {#2} \,\right\}}
\definecolor{webgreen}{rgb}{0,.5,0}
\newcommand{\lightgray}[1]{\textcolor{lightgray}{#1}}
\newcommand{\green}[1]{\textcolor{webgreen}{#1}}
\newcounter{computationarrowsone}
\newcounter{computationarrowstwo}
\newcounter{sarrow}
\newcommand{\lfp}{\ensuremath{\textnormal{\sfsymbol{lfp}}~}}
\newcommand{\gfp}{\ensuremath{\textnormal{\sfsymbol{gfp}}~}}
\newcommand{\Conf}{\ensuremath{\ensuremath{\mathcal{P}(\States)}}}
\newcommand{\pre}{\ensuremath{b}}
\newcommand{\post}{\ensuremath{c}}
\newcommand{\program}{\ensuremath{p}}
\newcommand{\secprogram}{\ensuremath{q}}
\newcommand{\plogin}{\ensuremath{\program_{\text{login}}}}
\newcommand{\pwdcorrect}{\ensuremath{\textit{pwd correct}}}
\newcommand{\pwdincorrect}{\ensuremath{\textit{pwd incorrect}}}
\newcommand{\pwdabcd}{\ensuremath{\textit{pwd = }1234}}
\newcommand{\access}{\ensuremath{\textit{access granted}}}
\newcommand{\noaccess}{\ensuremath{\textit{access denied}}}
\newcommand{\error}{\ensuremath{\textit{error}}}
\newcommand{\NDCHOICEC}{\ensuremath{\NDCHOICE{\program_1}{\program_2} }}
\newcommand{\KAT}{\textup{\textsf{KAT}}\xspace}
\newcommand{\TopKAT}{\textup{\textsf{TopKAT}}\xspace}
\newcommand{\awpCAPdwlp}[2]{\ensuremath{(\awpsymbol \cap \dwlpsymbol) \llbracket #1 \rrbracket\left( #2 \right)}}
\newcommand{\awpCUPdwlp}[2]{\ensuremath{(\awpsymbol \cup \dwlpsymbol) \llbracket #1 \rrbracket\left( #2 \right)}}
\newcommand{\aspCAPdslp}[2]{\ensuremath{(\aspsymbol \cap \dslpsymbol) \llbracket #1 \rrbracket\left( #2 \right)}}
\newcommand{\aspCUPdslp}[2]{\ensuremath{(\aspsymbol \cup \dslpsymbol) \llbracket #1 \rrbracket\left( #2 \right)}}
\definecolor{reeed}{rgb}{0.59, 0.0, 0.09}
\newcommand{\katexpr}[1]{ \color{reeed}{#1}}
\newcommand{\statelabel}[1]{{\Large \boldsymbol{\mathsf{#1}}}}
\theoremstyle{acmdefinition}
\newtheorem{thm}{Theorem} 
\newtheorem{remark}[thm]{Remark}
\newtheorem{observation}[thm]{Observation}
\begin{document}

\title{A Taxonomy of Hoare-Like Logics}
\subtitle{Towards a Holistic View using Predicate Transformers and Kleene Algebras with Top and Tests}



\author{Lena Verscht}
\email{lverscht@cs.uni-saarland.de}
\orcid{0000-0001-6823-7918}
\affiliation{%
	\institution{Saarland University}
    \city{Saarbrücken}
	\country{Germany}
}
\affiliation{%
	\institution{RWTH Aachen}
    \city{Aachen}
	\country{Germany}
}

\author{Benjamin Lucien Kaminski}
\email{kaminski@cs.uni-saarland.de}
\orcid{0000-0001-5185-2324}
\affiliation{%
	\institution{Saarland University}
    \city{Saarbrücken}
	\country{Germany}
}
\affiliation{%
    \institution{University College London}
    \city{London}
    \country{UK}
}



\begin{abstract}
	We study Hoare-like logics, including partial and total correctness Hoare logic, incorrectness logic, Lisbon logic, and many others through the lens of predicate transformers à la Dijkstra and through the lens of Kleene algebra with top and tests (\TopKAT).
	Our main goal is to give an overview -- a taxonomy -- of how these program logics relate, in particular under different assumptions like for example program termination, determinism, and reversibility.
	As a byproduct, we obtain a \TopKAT characterization of Lisbon logic, which \mbox{-- to the} best of \mbox{our knowledge --} is a novel result.
\end{abstract}

\begin{CCSXML}
<ccs2012>
<concept>
<concept_id>10003752.10003790.10002990</concept_id>
<concept_desc>Theory of computation~Logic and verification</concept_desc>
<concept_significance>300</concept_significance>
</concept>
<concept>
<concept_id>10003752.10003790.10003806</concept_id>
<concept_desc>Theory of computation~Programming logic</concept_desc>
<concept_significance>500</concept_significance>
</concept>
<concept>
<concept_id>10003752.10010124.10010131.10010135</concept_id>
<concept_desc>Theory of computation~Axiomatic semantics</concept_desc>
<concept_significance>300</concept_significance>
</concept>
<concept>
<concept_id>10003752.10010124.10010138.10010141</concept_id>
<concept_desc>Theory of computation~Pre- and post-conditions</concept_desc>
<concept_significance>500</concept_significance>
</concept>
<concept>
<concept_id>10003752.10010124.10010138.10010142</concept_id>
<concept_desc>Theory of computation~Program verification</concept_desc>
<concept_significance>300</concept_significance>
</concept>
<concept>
<concept_id>10003752.10010124.10010138.10010143</concept_id>
<concept_desc>Theory of computation~Program analysis</concept_desc>
<concept_significance>300</concept_significance>
</concept>
</ccs2012>
\end{CCSXML}

\ccsdesc[300]{Theory of computation~Logic and verification}
\ccsdesc[500]{Theory of computation~Programming logic}
\ccsdesc[300]{Theory of computation~Axiomatic semantics}
\ccsdesc[500]{Theory of computation~Pre- and post-conditions}
\ccsdesc[300]{Theory of computation~Program verification}
\ccsdesc[300]{Theory of computation~Program analysis}

\keywords{program logics, predicate transformers, Kleene algebra with top and tests}  


\maketitle



\section{Introduction}
\label{sec:intro}


Arguably one of the most prominent and well-studied program logics is \citet{Hoa69} logic for verifying program \emph{correctness}: the \emph{\underline{abilit}y of} a specified set of \emph{initial states to reach \underline{onl}y} some specified set of (safe) \emph{final states}.
More recently, \emph{incorrectness} logic has attracted considerable attention \cite{OHearn19,zhang2022incorrectness,ZhangKaminski22,zilberstein2023outcome}.
Here, the property of interest is to prove that an entire specified set of undesired \emph{final states \underline{is reachable} from} a specified set of \emph{initial states}, thus proving the (true positive) presence of a bug.
Program correctness and incorrectness were described as \enquote{two sides of the same coin} \cite{OHearn19}.
We argue that there are at least three sides or rather \emph{dimensions} to the program logic coin:%
\begin{enumerate}
	\item
		\emph{correctness} (being able to reach) vs.~\emph{incorrectness} (being reachable)
	\item
		\emph{totality} vs.~\emph{partiality}
	\item
		\emph{angelic} vs.~\emph{demonic} resolution of nondeterminism
\end{enumerate}%
We will explore how one can pigeonhole \emph{total} and \emph{partial correctness}, as well as \emph{incorrectness} into this view and explore further program properties that emerge from exhausting all combinations of the above dimensions.
Our primary objective is to build a taxonomy of program logics that is in a sense exhaustive and provides an overview of existing logics, which are often referenced to under different names (e.g.\ reverse Hoare logic \cite{VK11} essentially is the same as incorrectness logic \cite{OHearn19}).

%
We choose \emph{predicate transformers} \cite{dijkstra1976discipline} as our primary formalism for expressing program properties. 
For example, using \emph{weakest liberal preconditions}, we can express the validity of partial correctness Hoare triples as $\pre \subseteq \wlp{\program}{\post}$, expressing that the precondition $\pre$ is included in the set $\wlp{\program}{\post}$ of states on which $\program$ either terminates in $\post$ or not at all.
As a secondary formalism, we choose \emph{Kleene algebra with tests} \cite{kozen1997kleene} or rather \emph{Kleene algebra with top and tests (\TopKAT)} \cite{zhang2022incorrectness}.
The basic idea is to model programs (as well as pre- and postconditions) as elements of a relational algebra.
We then build \emph{terms} by composing these relations and express program properties as equations between terms.
For instance, partial correctness can be expressed in \TopKAT as $\top \pre \program \post = \top \pre \program$.
This essentially expresses that the codomains of the relations $\pre \program \post$ (filter initial states for $\pre$, execute $\program$, filter final states for $\post$) and $\pre \program$ (filter initial states for $\pre$, execute $\program$) are~equal.

\paragraph{Related Work}
Given a recent surge in novel program logics, there has been substantial interest in developing frameworks which unify those logics.
\citet{zilberstein2023outcome} introduce \emph{outcome logic}, a framework capable of accommodating both correctness and incorrectness reasoning.
Similarly, \citet{cousot2024calculational} explores program logics through the lens of abstract interpretation, constructing and comparing various logics within that framework.
\citet{ascari2023sufficient} contrast Hoare logic, incorrectness logic, and the related logic of \emph{necessary preconditions}.
In their resulting taxonomy, they address a gap by introducing what they term \emph{sufficient incorrectness logic}.
More work in this direction includes \cite{wickopedia,bruni2023correctness,maksimovic2022exact,milanese2022local}.
We give many more pointers to related work scattered across the paper, in particular \Cref{ssec:comparison}.

\paragraph{Contributions \& Organization}
In \Cref{sec:preliminiaries}, we present syntax and semantics of a simple nondeterministic model programming language.
In \Cref{sec:transformer}, we (gently and systematically) introduce 8 different predicate transformers à la Dijkstra.
Some were described before, some are new, emerging from an investigation of nondeterminism in forward analysis.

We furthermore identify a simple, yet crucial difference between forward and backward analyses:
Given a program $\program$ and an initial state $\sigma$, executing $\program$ on $\sigma$ may (nondeterministically) both terminate in some final state (execution leads to somewhere) and also not terminate at all (execution leads to nowhere).
However, given a final state $\tau$, it cannot be that by executing $\program$ the state $\tau$ was both reachable (execution came from somewhere) and unreachable (execution cam from nowhere).

In \Cref{sec:taxonomy}, we use these predicate transformers to define 16 different program logics (by over and underapproximating each of the 8 transformers).
These include well-known ones like partial and total correctness Hoare logic and incorrectness logic, but also lesser known (Lisbon logic) and new ones.
We study relationships across these logics, thus building a taxonomy of them.
We furthermore study how this taxonomy (partly) collapses under different assumptions like e.g.\ program termination.

In \Cref{sec:kat}, we study how to express various of these 16 program logics in Kleene algebra with top and tests, \emph{including \underline{Lisbon lo}g\underline{ic}, which to the best of our knowledge is a new result.}
In \Cref{sec:taxonomy-rev}, we give an updated taxonomy through the lens of \TopKAT.

Throughout this paper, we will discover several incongruities in an otherwise quite symmetrical and dual taxonomy.
We summarize these in \Cref{sec:open}.

This is the extended version of the publication at POPL 2025.

\section{The Nondeterministic Guarded Command Language}
\label{sec:preliminiaries}
We consider programs in a simple \emph{nondeterministic guarded command language} ($\ngcl$):%
\begin{align*}
	\program  ~{}~{}\,{}\Coloneqq&{}~{}~{}\, 
	\SKIP 
	\qmid \ASSIGN{x}{\ee}
	\qmid \COMPOSE{\program}{\program} 
	\qmid \NDCHOICE{\program}{\program} 
	\qmid \ITE{\guard}{\program}{\program} \\
	& \qmid \WHILEDO{\guard}{\program}
\end{align*}%
Here, $x\in\Vars$ is a variable.
The set of \emph{program states} ($\sigma$, $\tau$, \dots) is given by $\States = \setcomp{\sigma}{\sigma:\Vars\to\Ints}$, i.e.\ functions mapping program variables to integers.
By $\sigma\subst{x}{v}$, we denote a new state that is obtained from $\sigma$ by letting variable~$x\in\Vars$ evaluate to~$v\in\Ints$. 
Formally: 
$\sigma\subst{x}{v}(y) = v$, if $y = x$; and $\sigma\subst{x}{v}(y) = \sigma(y)$, otherwise.

In the grammar above, $\ee$ is an arithmetic expression, and $\guard$ is a \emph{predicate} or \emph{test} or \emph{condition} (we use these interchangeably).
For our intents and purposes, a predicate is a subset of program states $\guard \subseteq \States$, i.e.\ a function mapping program states to truth values $\{0,\, 1\}$. 
We denote by $\B = \{0,\, 1\}^\States$ the set of all predicates. 
The negation of a predicate $\pre$ is denoted by $\negate{\pre}$. 
Given a program state $\sigma$, we denote by $\sigma(\xi)$ the evaluation of an arithmetic expression or predicate~$\xi$ in~$\sigma$. 

\newpage
The semantics of \ngcl programs is given in terms of a \emph{collecting} semantics (as is standard in program analysis, see~\cite{CousotC77,hecht,RY20}).%
\begin{definition}[Collecting Semantics for $\ngcl$ Programs]%
	Let $S \subseteq \States$ be a set of program states and let $\eval{\guard}S = \setcomp{\sigma\in S}{\sigma \mmodels \guard}$ be a filtering of $S$ to only those states where the predicate $\guard$ holds. 
	The \emph{collecting semantics} $\eval{\program}\colon \Conf \to \Conf$ of an $\ngcl$ program $\program$ is defined inductively by%
	\allowdisplaybreaks%
	\begin{align*}
	    \eval{\SKIP} S  \eeq & S \tag{effectless program}\\
		\eval{\ASSIGN{x}{\ee}} S  \eeq & \{\sigma\subst{x}{\sigma(\ee)}\mid \sigma \in S\}\tag{assignment}\\
		\eval{\COMPOSE{\program_1}{\program_2}} S \eeq & (\eval{\program_2}\circ \eval{\program_1}) S \tag{sequential composition} \\
		\eval{\ITE{\guard}{\program_1}{\program_2}} S  \eeq & (\eval{\program_1}\circ \eval{\guard}) S \cup (\eval{\program_2}\circ \eval{\negate{\guard}}) S \tag{conditional choice}\\
		\eval{\WHILEDO{\guard}{\program}} S \eeq & \eval{\negate{\guard}}\bigl(\lfp X\mydot S \ccup \bigl(\eval{\program} \circ \eval{\guard} \bigr) X \bigr) \tag{loop}\\
		\eval{\NDCHOICE{\program_1}{\program_2}} S  \eeq &\eval{\program_1} S \cup \eval{\program_2}S~. \tag{nondeterministic choice}  
	\end{align*}%
	\allowdisplaybreaks[0]%
	By slight abuse of notation, we write $\eval{\program}(\sigma)$ for $\eval{\program}\{\sigma\}$.
	We denote the inverse semantics of $\program$ by $\evalInv{\program} T = \setcomp{ \sigma }{T \cap \eval{\program}(\sigma) \neq \emptyset}$.
	If a program $\program$ does not terminate, we say that \emph{$\program$ diverges}.
	\qedtriangle%
\end{definition}%
\noindent%
In words, for a program $\program \in \ngcl$, the collecting semantics $\eval{\program} S$ maps a set of initial program states $S \in \Conf$ to the set of all final states reachable from a state in $S$.
%
\begin{remark}[Diverging Programs]
\label{rem:diverge-semantics}
	Consider the program $\SKIP$ that does nothing and the program $\NDCHOICE{\SKIP}{\WHILEDO{\true}{\SKIP}}$ that nondeterministically decides between doing nothing and diverging.
	Their collecting semantics are given by%
	\begin{align*}
		\eval{\SKIP} S \eeq  S \cup \emptyset  \eeq & \eval{\SKIP} S \ccup \eval{\WHILEDO{\true}{\SKIP}}S \\
		\eeq & \eval{\NDCHOICE{\SKIP}{\WHILEDO{\true}{\SKIP}}}S~,
	\end{align*}%
	i.e.\ their collecting semantics coincide, despite their arguably different behaviors.%
	\qedtriangle%
\end{remark}%



\section{Predicate Transformers}
\label{sec:transformer}

Deductive reasoning on source code level with \emph{predicate transformers} is due to~\citet{DBLP:journals/cacm/Dijkstra75}.
There are fundamentally two types of predicate transformers: backward moving weakest preconditions and forward moving strongest postconditions.
The terms \emph{weakest} and \emph{strongest} are rooted historically in their relationship to Hoare logic.
They loose their plausibility in other contexts like incorrectness logic, but -- also for lack of better names -- we stick here to these historical terms.

\subsection{Weakest Preconditions}
\label{ssec:wp}

The weakest precondition calculus features predicate transformers of type
	$
	\wpC{\program}\colon \B \to \B
	$
for a program $\program \in \ngcl$.
Given a postcondition $\post \in \B$, the weakest precondition of $\post$ (under~$\program$) is a predicate~$\wp{\program}{\post}$ containing precisely those (initial) states from which the computation of $\program$ (1)~terminates and (2)~does so in a (final) state satisfying $\post$.
We will also say that $\wp{\program}{\post}$ are the initial states that are \emph{coreachable} from postcondition $\post$ by executing $\program$.

In the presence of nondeterminism, we must decide whether, for each initial state $\sigma$, we require \emph{all} computation paths emerging from $\sigma$ to terminate in $\post$ or whether we require merely the \emph{existence} of such a path.
Traditionally, the former (\emph{all}) is referred to as \emph{demonic} nondeterminism, and the latter (\emph{exist}) as \emph{angelic}.
These two design choices can be accommodated within Dijkstra's calculi, resulting in two slightly different variants of weakest precondition transformers.%
\begin{definition}[Weakest Precondition Transformers \textnormal{\cite{dijkstra1976discipline}}]
	\label{def:wp}
	Given a program $\program$ and a postcondition $\post$, the \emph{\underline{an}g\underline{elic} weakest precondition} is defined as%
	\begin{align*}
    		\awp{\program}{\post} &\qeq \mylambda{\sigma} \bigvee_{\tau \in \sem{\program}{\sigma}} \post(\tau)~.
	\intertext{The \emph{\underline{demonic} weakest precondition} is defined as}
		\dwp{\program}{\post} &\qeq \mylambda{\sigma}
		\begin{cases}
			\false, & \text{ if $\program$ can diverge on } \sigma \\[.5em]
			{\displaystyle \bigwedge_{\tau \in \sem{\program}{\sigma}} \post(\tau)}, & \text{ otherwise}~.
		\end{cases}%
		\tag*{\raisebox{-1.8em}{\qedtriangle}}
		%
	\end{align*}%
\end{definition}%
\noindent%
It is worthwhile to note that $\dwpsymbol$ is the standard transformer introduced by Dijkstra.

Both $\awpsymbol$ and $\dwpsymbol$ are \emph{total correctness} transformers in the sense that both deem nontermination undesired behavior.
More precisely, $\dwp{\program}{\post}$ indeed contains no initial states whatsoever on which $\program$ could possibly diverge.
$\awp{\program}{\post}$, on the other hand, may contain such initial states $\sigma$, but there must then also exist a (nondeterministic) possibility for $\sigma$ to terminate in $\post$.

\subsubsection{Liberality}

Proving total correctness can be separated into two subtasks, namely (1) proving \emph{partial correctness} (i.e.\ correctness modulo termination) and (2) proving \emph{termination}.
Proving partial correctness motivates the definition of \emph{liberal} variants of the aforementioned calculi:
Given a postcondition $\post$, the weakest \emph{liberal} precondition of $\post$ (under~$\program$) is a predicate~$\wlp{\program}{\post}$ containing precisely those states from which the computation of $\program$ either (1)~diverges or (2)~terminates in a state satisfying $\post$.
In other words, the computation reaches $\post$, \emph{if} it terminates at all.
As in the non-liberal case, we again have both demonic and angelic variants of $\wlpsymbol$.%
\begin{definition}[Weakest Liberal Precondition Transformers \textnormal{\cite{dijkstra1976discipline}}]
	\label{def:wlp}
	Given a program~$\program$ and a postcondition~$\post$, the \emph{\underline{demonic} weakest liberal precondition} is defined as%
	\begin{align*}
		\dwlp{\program}{\post} &\eeq \mylambda{\sigma} \bigwedge\limits_{\tau \in \sem{\program}{\sigma}} \post(\tau)~.
	\intertext{The \emph{\underline{an}g\underline{elic} weakest liberal precondition} is defined as}
		\awlp{\program}{\post} & \eeq \mylambda{\sigma}
		\begin{cases}
			\true, & \text{ if $\program$ can diverge on } \sigma \\[.5em]
			{\displaystyle \bigvee\limits_{\tau \in \sem{\program}{\sigma}} \post(\tau)}, & \text{ otherwise}~.
		\end{cases} 
		\tag*{\raisebox{-1.8em}{\qedtriangle}}
	\end{align*}%
\end{definition}%
\noindent%
%
%
Again, we note that the demonic variant $\dwlpsymbol$ is the standard one studied by Dijkstra.

Both $\dwlpsymbol$ and $\awlpsymbol$ are \emph{partial correctness} transformers in the sense that both deem nontermination as acceptable behavior.
More precisely, $\dwlp{\program}{\post}$ indeed contains all initial states on which $\program$ \emph{must} either diverge or terminate in $\post$.
On the other hand, $\awlp{\program}{\post}$ contains all initial states on which $\program$ \emph{may} either diverge or terminate in $\post$.

\subsubsection{Inductive Definitions of Weakest Precondition Transformers}

The angelic/demonic weakest (liberal) precondition transformers can \emph{all} be defined by induction on the program structure, see \Cref{ssec:wp-rules,ssec:wlp-rules}. 

We note also that the direction of analysis is \emph{backwards}, as we start with a predicate over final states and transform it into a predicate on initial states.
The result of the analysis on the other hand is a \emph{forecast}: a weakest precondition forecasts for each initial state whether after the computation the postcondition will be satisfied.

\newpage

\begin{wrapfigure}[19]{r}{.6\textwidth}%
	\vspace{-1\intextsep}%
	\begin{adjustbox}{max width=.599\textwidth}
	\begin{tikzpicture}
		\def\topBlockHeight{0.6}
		\def\bottomBlockHeight{0.6}
		\def\topBlockWidth{1.5}
		\def\bottomBlockWidth{5.25}
		\def\blockSpacing{0}
		\def\blockSpacingVert{2.8}
		\def\spaceTransformers{0.2}

		
		\def\j{1}
		\def\i{0}
		\draw[pattern=crosshatch,pattern color=mygreen!70] (\i * \topBlockWidth + \i * \blockSpacing, \blockSpacingVert+\spaceTransformers+\j*\topBlockHeight) rectangle (\i * \topBlockWidth + \i * \blockSpacing + \topBlockWidth, \blockSpacingVert+\spaceTransformers + \j*\topBlockHeight + \topBlockHeight); 
		\def\i{1}
		\draw[] (\i * \topBlockWidth + \i * \blockSpacing, \blockSpacingVert+\spaceTransformers+\j*\topBlockHeight) rectangle (\i * \topBlockWidth + \i * \blockSpacing + \topBlockWidth, \blockSpacingVert+\spaceTransformers + \j*\topBlockHeight + \topBlockHeight);
		\def\i{2}
		\draw[] (\i * \topBlockWidth + \i * \blockSpacing, \blockSpacingVert+\spaceTransformers+\j*\topBlockHeight) rectangle (\i * \topBlockWidth + \i * \blockSpacing + \topBlockWidth, \blockSpacingVert+\spaceTransformers + \j*\topBlockHeight + \topBlockHeight);
		\def\i{3}
		\draw[] (\i * \topBlockWidth + \i * \blockSpacing, \blockSpacingVert+\spaceTransformers+\j*\topBlockHeight) rectangle (\i * \topBlockWidth + \i * \blockSpacing + \topBlockWidth, \blockSpacingVert+\spaceTransformers  + \j*\topBlockHeight+ \topBlockHeight);
		\def\i{4}
		\draw[] (\i * \topBlockWidth + \i * \blockSpacing, \blockSpacingVert+\spaceTransformers+\j*\topBlockHeight) rectangle (\i * \topBlockWidth + \i * \blockSpacing + \topBlockWidth, \blockSpacingVert+\spaceTransformers  + \j*\topBlockHeight +\topBlockHeight);
		\def\i{5}
		\draw[] (\i * \topBlockWidth + \i * \blockSpacing, \blockSpacingVert+\spaceTransformers+\j*\topBlockHeight) rectangle (\i * \topBlockWidth + \i * \blockSpacing + \topBlockWidth, \blockSpacingVert+\spaceTransformers  + \j*\topBlockHeight + \topBlockHeight);
		\def\i{6}
		\draw[] (\i * \topBlockWidth + \i * \blockSpacing, \blockSpacingVert+\spaceTransformers+\j*\topBlockHeight) rectangle (\i * \topBlockWidth + \i * \blockSpacing + \topBlockWidth, \blockSpacingVert+\spaceTransformers  + \j*\topBlockHeight + \topBlockHeight);

		\def\j{2}
		\def\i{0}
		\draw[pattern=crosshatch,pattern color=mygreen!70] (\i * \topBlockWidth + \i * \blockSpacing, \blockSpacingVert+\spaceTransformers+\j*\topBlockHeight) rectangle (\i * \topBlockWidth + \i * \blockSpacing + \topBlockWidth, \blockSpacingVert+\spaceTransformers + \j*\topBlockHeight + \topBlockHeight);
		\def\i{1}
		\draw[pattern=crosshatch,pattern color=mygreen!70] (\i * \topBlockWidth + \i * \blockSpacing, \blockSpacingVert+\spaceTransformers+\j*\topBlockHeight) rectangle (\i * \topBlockWidth + \i * \blockSpacing + \topBlockWidth, \blockSpacingVert+\spaceTransformers + \j*\topBlockHeight + \topBlockHeight);
		\def\i{2}
		\draw[pattern=crosshatch,pattern color=mygreen!70] (\i * \topBlockWidth + \i * \blockSpacing, \blockSpacingVert+\spaceTransformers+\j*\topBlockHeight) rectangle (\i * \topBlockWidth + \i * \blockSpacing + \topBlockWidth, \blockSpacingVert+\spaceTransformers + \j*\topBlockHeight + \topBlockHeight);
		\def\i{3}
		\draw[] (\i * \topBlockWidth + \i * \blockSpacing, \blockSpacingVert+\spaceTransformers+\j*\topBlockHeight) rectangle (\i * \topBlockWidth + \i * \blockSpacing + \topBlockWidth, \blockSpacingVert+\spaceTransformers  + \j*\topBlockHeight+ \topBlockHeight);
		\def\i{4}
		\draw[pattern=crosshatch,pattern color=mygreen!70] (\i * \topBlockWidth + \i * \blockSpacing, \blockSpacingVert+\spaceTransformers+\j*\topBlockHeight) rectangle (\i * \topBlockWidth + \i * \blockSpacing + \topBlockWidth, \blockSpacingVert+\spaceTransformers  + \j*\topBlockHeight +\topBlockHeight); 
		\def\i{5}
		\draw[] (\i * \topBlockWidth + \i * \blockSpacing, \blockSpacingVert+\spaceTransformers+\j*\topBlockHeight) rectangle (\i * \topBlockWidth + \i * \blockSpacing + \topBlockWidth, \blockSpacingVert+\spaceTransformers  + \j*\topBlockHeight + \topBlockHeight);
		\def\i{6}
		\draw[] (\i * \topBlockWidth + \i * \blockSpacing, \blockSpacingVert+\spaceTransformers+\j*\topBlockHeight) rectangle (\i * \topBlockWidth + \i * \blockSpacing + \topBlockWidth, \blockSpacingVert+\spaceTransformers  + \j*\topBlockHeight + \topBlockHeight);

		\def\j{3}
		\def\i{0}
		\draw[pattern=crosshatch,pattern color=mygreen!70] (\i * \topBlockWidth + \i * \blockSpacing, \blockSpacingVert+\spaceTransformers+\j*\topBlockHeight) rectangle (\i * \topBlockWidth + \i * \blockSpacing + \topBlockWidth, \blockSpacingVert+\spaceTransformers + \j*\topBlockHeight + \topBlockHeight); 
		\def\i{1}
		\draw[pattern=crosshatch,pattern color=mygreen!70] (\i * \topBlockWidth + \i * \blockSpacing, \blockSpacingVert+\spaceTransformers+\j*\topBlockHeight) rectangle (\i * \topBlockWidth + \i * \blockSpacing + \topBlockWidth, \blockSpacingVert+\spaceTransformers + \j*\topBlockHeight + \topBlockHeight); 
		\def\i{2}
		\draw[] (\i * \topBlockWidth + \i * \blockSpacing, \blockSpacingVert+\spaceTransformers+\j*\topBlockHeight) rectangle (\i * \topBlockWidth + \i * \blockSpacing + \topBlockWidth, \blockSpacingVert+\spaceTransformers + \j*\topBlockHeight + \topBlockHeight);
		\def\i{3}
		\draw[pattern=crosshatch,pattern color=mygreen!70] (\i * \topBlockWidth + \i * \blockSpacing, \blockSpacingVert+\spaceTransformers+\j*\topBlockHeight) rectangle (\i * \topBlockWidth + \i * \blockSpacing + \topBlockWidth, \blockSpacingVert+\spaceTransformers  + \j*\topBlockHeight+ \topBlockHeight);
		\def\i{4}
		\draw[] (\i * \topBlockWidth + \i * \blockSpacing, \blockSpacingVert+\spaceTransformers+\j*\topBlockHeight) rectangle (\i * \topBlockWidth + \i * \blockSpacing + \topBlockWidth, \blockSpacingVert+\spaceTransformers  + \j*\topBlockHeight +\topBlockHeight);
		\def\i{5}
		\draw[] (\i * \topBlockWidth + \i * \blockSpacing, \blockSpacingVert+\spaceTransformers+\j*\topBlockHeight) rectangle (\i * \topBlockWidth + \i * \blockSpacing + \topBlockWidth, \blockSpacingVert +\spaceTransformers + \j*\topBlockHeight + \topBlockHeight);
		\def\i{6}
		\draw[] (\i * \topBlockWidth + \i * \blockSpacing, \blockSpacingVert+\spaceTransformers+\j*\topBlockHeight) rectangle (\i * \topBlockWidth + \i * \blockSpacing + \topBlockWidth, \blockSpacingVert+\spaceTransformers  + \j*\topBlockHeight + \topBlockHeight);

		\def\j{4}
		\def\i{0}
		\draw[pattern=crosshatch,pattern color=mygreen!70] (\i * \topBlockWidth + \i * \blockSpacing, \blockSpacingVert+\spaceTransformers+\j*\topBlockHeight) rectangle (\i * \topBlockWidth + \i * \blockSpacing + \topBlockWidth, \blockSpacingVert+\spaceTransformers + \j*\topBlockHeight + \topBlockHeight); 
		\def\i{1}
		\draw[pattern=crosshatch,pattern color=mygreen!70] (\i * \topBlockWidth + \i * \blockSpacing, \blockSpacingVert+\spaceTransformers+\j*\topBlockHeight) rectangle (\i * \topBlockWidth + \i * \blockSpacing + \topBlockWidth, \blockSpacingVert+\spaceTransformers + \j*\topBlockHeight + \topBlockHeight); 
		\def\i{2}
		\draw[pattern=crosshatch,pattern color=mygreen!70] (\i * \topBlockWidth + \i * \blockSpacing, \blockSpacingVert+\spaceTransformers+\j*\topBlockHeight) rectangle (\i * \topBlockWidth + \i * \blockSpacing + \topBlockWidth, \blockSpacingVert+\spaceTransformers + \j*\topBlockHeight + \topBlockHeight); 
		\def\i{3}
		\draw[pattern=crosshatch,pattern color=mygreen!70] (\i * \topBlockWidth + \i * \blockSpacing, \blockSpacingVert+\spaceTransformers+\j*\topBlockHeight) rectangle (\i * \topBlockWidth + \i * \blockSpacing + \topBlockWidth, \blockSpacingVert+\spaceTransformers  + \j*\topBlockHeight+ \topBlockHeight); 
		\def\i{4}
		\draw[pattern=crosshatch,pattern color=mygreen!70] (\i * \topBlockWidth + \i * \blockSpacing, \blockSpacingVert+\spaceTransformers+\j*\topBlockHeight) rectangle (\i * \topBlockWidth + \i * \blockSpacing + \topBlockWidth, \blockSpacingVert +\spaceTransformers + \j*\topBlockHeight +\topBlockHeight); 
		\def\i{5}
		\draw[pattern=crosshatch,pattern color=mygreen!70] (\i * \topBlockWidth + \i * \blockSpacing, \blockSpacingVert+\spaceTransformers+\j*\topBlockHeight) rectangle (\i * \topBlockWidth + \i * \blockSpacing + \topBlockWidth, \blockSpacingVert +\spaceTransformers + \j*\topBlockHeight + \topBlockHeight); 
		\def\i{6}
		\draw[] (\i * \topBlockWidth + \i * \blockSpacing, \blockSpacingVert+\spaceTransformers+\j*\topBlockHeight) rectangle (\i * \topBlockWidth + \i * \blockSpacing + \topBlockWidth, \blockSpacingVert +\spaceTransformers + \j*\topBlockHeight + \topBlockHeight);

		\def\j{5}
		\def\i{0}
		\draw[draw=none] (\i * \topBlockWidth + \i * \blockSpacing, \blockSpacingVert+\spaceTransformers+\j*\topBlockHeight) rectangle (\i * \topBlockWidth + \i * \blockSpacing + \topBlockWidth, \blockSpacingVert+\spaceTransformers + \j*\topBlockHeight + \topBlockHeight) node[midway] {(1)};
		\def\i{1}
		\draw[draw=none] (\i * \topBlockWidth + \i * \blockSpacing, \blockSpacingVert+\spaceTransformers+\j*\topBlockHeight) rectangle (\i * \topBlockWidth + \i * \blockSpacing + \topBlockWidth, \blockSpacingVert+\spaceTransformers + \j*\topBlockHeight + \topBlockHeight) node[midway] {(2)};
		\def\i{2}
		\draw[draw=none] (\i * \topBlockWidth + \i * \blockSpacing, \blockSpacingVert+\spaceTransformers+\j*\topBlockHeight) rectangle (\i * \topBlockWidth + \i * \blockSpacing + \topBlockWidth, \blockSpacingVert+\spaceTransformers + \j*\topBlockHeight + \topBlockHeight) node[midway] {(3)};
		\def\i{3}
		\draw[draw=none] (\i * \topBlockWidth + \i * \blockSpacing, \blockSpacingVert+\spaceTransformers+\j*\topBlockHeight) rectangle (\i * \topBlockWidth + \i * \blockSpacing + \topBlockWidth, \blockSpacingVert+\spaceTransformers  + \j*\topBlockHeight+ \topBlockHeight) node[midway] {(4)};
		\def\i{4}
		\draw[draw=none] (\i * \topBlockWidth + \i * \blockSpacing, \blockSpacingVert+\spaceTransformers+\j*\topBlockHeight) rectangle (\i * \topBlockWidth + \i * \blockSpacing + \topBlockWidth, \blockSpacingVert +\spaceTransformers + \j*\topBlockHeight +\topBlockHeight) node[midway] {(5)};
		\def\i{5}
		\draw[draw=none] (\i * \topBlockWidth + \i * \blockSpacing, \blockSpacingVert+\spaceTransformers+\j*\topBlockHeight) rectangle (\i * \topBlockWidth + \i * \blockSpacing + \topBlockWidth, \blockSpacingVert +\spaceTransformers + \j*\topBlockHeight + \topBlockHeight) node[midway] {(6)};
		\def\i{6}
		\draw[draw=none] (\i * \topBlockWidth + \i * \blockSpacing, \blockSpacingVert+\spaceTransformers+\j*\topBlockHeight) rectangle (\i * \topBlockWidth + \i * \blockSpacing + \topBlockWidth, \blockSpacingVert +\spaceTransformers + \j*\topBlockHeight + \topBlockHeight) node[midway] {(7)};

		\draw[pattern=crosshatch,pattern color=DodgerBlue3!70] (0, 0) rectangle (\bottomBlockWidth, \bottomBlockHeight);
		\draw[] (\bottomBlockWidth + \blockSpacing, 0) rectangle (\bottomBlockWidth + \blockSpacing + \bottomBlockWidth, \bottomBlockHeight);

		\foreach \i in {0, 1, ..., 6} {
			\node (top\i) at (\i * \topBlockWidth + \i * \blockSpacing + \topBlockWidth/2, \blockSpacingVert + \topBlockHeight + 0.15) {};
		}

		\node (dwp) at (-0.5, \blockSpacingVert + \spaceTransformers + \topBlockHeight/2 + \topBlockHeight) {$\dwpsymbol$};
		\node (awp) at (-0.5, \blockSpacingVert + \spaceTransformers + \topBlockHeight/2 + 2*\topBlockHeight) {$\awpsymbol$};
		\node (dwlp) at (-0.5, \blockSpacingVert + \spaceTransformers + \topBlockHeight/2 + 3*\topBlockHeight) {$\dwlpsymbol$};
		\node (awlp) at (-0.5, \blockSpacingVert + \spaceTransformers + \topBlockHeight/2 + 4*\topBlockHeight) {$\awlpsymbol$};
		

		
		\draw [decorate,decoration={brace,amplitude=5pt,mirror,raise=2ex}]
		(0.1,0) -- (\bottomBlockWidth-0.1,0) node[midway,yshift=-2em]{$\post$};
		\draw [decorate,decoration={brace,amplitude=5pt,mirror,raise=2ex}]
		(\bottomBlockWidth+0.1,0) -- (2*\bottomBlockWidth-0.1,0) node[midway,yshift=-2em]{$\negate{\post}$};

		\foreach \i in {0, 1, ..., 6} {
			\node (botref\i) at (\i * \topBlockWidth + \i * \blockSpacing + \topBlockWidth/2, \bottomBlockHeight) {};
		}

		\path (top0) edge[program] (botref0);

		\path (top1) edge[program] (botref1);
		\node[rectangle,fill=black,scale=0.3] (choice1) at ($(top1)!0.5!(botref1)$) {b};
		\node (spiralstart1) at ($(choice1) - (0.445,0)$) {};		
		\draw [scale=0.25,xscale=1,yscale=1,rotate=-90,domain=0:30,variable=\t,smooth,samples=75,thick,-,shift={(spiralstart1)}] plot  ({\t r}: {-0.002*\t*\t});	
		
		\path (top2) edge[program] (botref2);
		\node[rectangle,fill=black,scale=0.3] (choice2) at ($(top2)!0.4!(botref2)$) {b};
		\draw [rounded corners=4mm,program] (choice2) -- ++(3mm,-6mm) -| ($(botref3) + (0.5,0.12)$);;

		\node (mid3) at ($(top3)!0.33!(botref3)$) {};
		\path (top3) edge[thick] (mid3);
		\node (spiralstart3) at ($(mid3) + (0.445,0.1)$) {};	
		\draw [scale=0.25,xscale=-1,yscale=1,rotate=-90,domain=0:30,variable=\t,smooth,samples=75,thick,-,shift={(spiralstart3)}] plot  ({\t r}: {-0.002*\t*\t});
	
		\path (top4) edge[program] (botref4);
		\node[rectangle,fill=black,scale=0.3] (choice4) at ($(top4)!0.5!(botref4)$) {b};
		\draw [rounded corners=4mm,program] (choice4) -- ++(-3mm,-8mm) -| ($(botref2) + (1,0.12)$);
		\node (spiralstart4) at ($(choice4) + (0.445,0)$) {};		
		\draw [scale=0.25,xscale=-1,yscale=1,rotate=-90,domain=0:30,variable=\t,smooth,samples=75,thick,-,shift={(spiralstart4)}] plot  ({\t r}: {-0.002*\t*\t});

		\path (top5) edge[program] (botref5);
		\node[rectangle,fill=black,scale=0.3] (choice5) at ($(top5)!0.5!(botref5)$) {b};
		\node (spiralstart5) at ($(choice5) + (0.445,0)$) {};	
		\draw [scale=0.25,xscale=-1,yscale=1,rotate=-90,domain=0:30,variable=\t,smooth,samples=75,thick,-,shift={(spiralstart5)}] plot  ({\t r}: {-0.002*\t*\t});

		\path (top6) edge[program] (botref6);
	\end{tikzpicture}
	\end{adjustbox}
	\caption{
		Illustration of different coreachability classes and different $\wpsymbol$ transformers.
		The top part represents initial states, divided (in columns) into coreachability classes.
		The lower part represents final states, divided into those that satisfy postcondition $\post$ and those that do not.
		On top, the colored boxed indicate which coreachability classes are included in which transformers.
	}
	\label{fig:wp}
\end{wrapfigure}
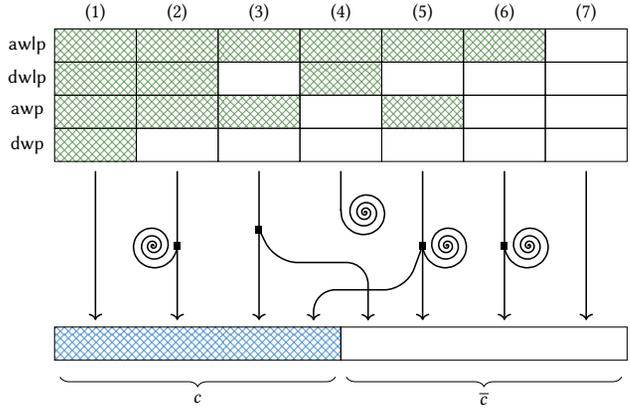%
\subsubsection{Anatomy of Weakest Precondition Transformers}

Given an initial state $\sigma$, a program $\program$, and a postcondition $\post$, we can make out three behavioral dimensions with respect to how $\program$ behaves when executed on $\sigma$:%
\begin{enumerate}
	\item[(wd1)] 
		$\program$ can terminate in $\post$ or not,
		
	\item[(wd2)]
		$\program$ can terminate in $\negate{\post}$ or not,
		
	\item[(wd3)]
		$\program$ can diverge or not.
\end{enumerate}%
In the presence of nondeterminism, these dimensions span a space of \mbox{$2^3 = 8$} different types of behaviors that $\program$ can exhibit.
For example, a program could nondeterministically both terminate in $\post$ or diverge.
We call the behavioral classes in that space \emph{coreachability classes}, since they speak about whether initial states are coreachable from $\post$ (or~$\negate{\post}$).
For example, there is a coreachability class containing all initial states from which $\program$ can both diverge and reach $\post$.
However, it cannot happen that $\program$ neither terminates in $\post$, nor in $\negate{\post}$, nor diverges.
This thus leaves \mbox{$2^3 - 1 = 7$} sensible coreachability classes, which are illustrated in \Cref{fig:wp}:
\begin{enumerate}
	\item $\program$ terminates in $\post$.
	\item $\program$ terminates in $\post$ or diverges (i.e.\ $\program$ does not terminate in $\negate{\post}$).
	\item $\program$ terminates in $\post$ or terminates in $\negate{\post}$ (i.e.\ $\program$ terminates).
	\item $\program$ diverges.
	\item $\program$ terminates in $\post$, or in $\negate{\post}$, or diverges (i.e.\ no restriction on behavior of $\program$).
	\item $\program$ terminates in $\negate{\post}$ or diverges (i.e.\ $\program$ does not terminate in $\post$).
	\item $\program$ terminates in $\negate{\post}$.
\end{enumerate}%
These coreachability classes fully partition the state space.
A precondition transformer can now opt to include a class ($i$) in its result or not.
The four rather natural transformers we described above are illustrated in \Cref{fig:wp}.
Along a \emph{row}, green shaded boxes indicate which classes are included in the respective weakest precondition transformer with respect to postcondition $\post$ (in blue).
For instance, $\awp{\program}{\post}$ includes classes (1), (2), (3), and~(5).

Assuming that the above coreachability classes are meaningful, this makes for $2^7 = 128$ possible weakest precondition transformers\footnote{Without nondeterminism, there would be only 3 coreachability classes (reach $\post$, reach $\negate{\post}$, diverge) and the number of possible precondition transformers would reduce to $2^3 = 8$.} of which some might not even depend on the postcondition~$\post$ (e.g.\ include only (4)), others might even be trivial (e.g.\ include none, include all).
Some are contrapositives of each other.
For example,\ $\awpsymbol$ and $\dwlpsymbol$ are \emph{contrapositive} in the sense that%
\begin{align*}
	\awp{\program}{\post} \eeq \negate{\dwlp{\program}{\negate{c}}} \qqand \dwlp{\program}{\post} \eeq \negate{\awp{\program}{\negate{c}}}
\end{align*}%
This contrapositivity relationship also holds for $\dwpsymbol$ and $\awlpsymbol$.
The contrapositivities can also be rediscovered graphically in \Cref{fig:wp}: 
To get from $\awlpsymbol$ to $\dwpsymbol$, for instance, proceed as follows: 
\begin{enumerate}
	\item Take row 1 ($\awlpsymbol$).
	\item Invert colors (i.e.\ turn green \mbox{shaded boxes} into empty ones and vice versa; corresponds to negating the entire result).
	\item Mirror the entire row horizontally (corresponds to negating the postcondition).
	\item Obtain row 4 ($\dwpsymbol$).
\end{enumerate}

\subsection{Strongest Postconditions}
\label{ssec:sp}

Weakest precondition transformers are well-researched and have been extended for various purposes, including quantities, probabilistic programs, and even quantum programs \cite{kaminski2018weakest,DBLP:journals/toplas/MorganMS96,d2006quantum}.
Dually to weakest preconditions, \citet{Dijkstra1990} defined \emph{strongest postcondition transformers}, also of type
	$
	\spC{\program}\colon \B \to \B
	$.
Given now a \emph{pre}condition $\pre \in \B$, the strongest postcondition of $\pre$ (under~$\program$) is a predicate~$\sp{\program}{\pre}$ containing precisely those (final) states that are \emph{reachable} from an (initial) state in $\pre$ \mbox{by executing $\program$}.%
\begin{definition}[Strongest Postcondition Transformer \textnormal{\cite{dijkstra1976discipline}}]%
\label{def:asp}%
	Given a program $\program \in \ngcl$ and a precondition $\pre \in \Pred$, the \emph{\underline{an}g\underline{elic} strongest postcondition} is defined as%
	\begin{align*}
		\asp{\program}{\pre} \eeq \mylambda{\tau} \bigvee\limits_{\sigma \in \seminv{\program}{\tau}} \pre(\sigma).
		\tag*{\raisebox{-1.3em}{\qedtriangle}}
	\end{align*}%
\end{definition}%
\noindent%
As suggested by the operator name $\aspsymbol$, we have opted here for \emph{angelic} resolution of nondeterminism.
This is also the standard choice of Dijkstra.
We will discuss the nature of this nondeterminism as well as demonic $\spsymbol$ later in \Cref{sssec:nondeterminism}.

\subsubsection{Liberality} 

Dually to strongest postconditions, there are also strongest liberal postconditions, originally described by Cousot and then later given an inductive definition by \citet{ZhangKaminski22}.
Given a \emph{pre}condition $\pre \in \B$, the strongest liberal postcondition of $\pre$ (under~$\program$) is a predicate~$\slp{\program}{\pre}$ containing precisely those (final) states that are reachable (1)~\emph{exclusively} from (initial) states in $\pre$ or (2)~\emph{entirely unreachable} (i.e.\ from \emph{any} initial state in $\States$) by the \mbox{computation of $\program$}.%
\begin{definition}[Strongest Liberal Postcondition Transformers \textnormal{\cite{ZhangKaminski22}}]
	\label{def:dslp}
	Given a program $\program$ and a precondition $\pre$, the \emph{\underline{demonic} strongest liberal postcondition} is defined as%
	\begin{align*}
		\dslp{\program}{\pre} \eeq \mylambda{\tau} \bigwedge\limits_{\sigma \in \seminv{\program}{\tau}} \pre(\sigma).
		\tag*{\raisebox{-1.3em}{\qedtriangle}}
	\end{align*}%
\end{definition}%
\noindent%
As the name $\dslpsymbol$ suggests, we now consider \emph{demonic} resolution of nondeterminism.
As shown above, strongest (liberal) postconditions can again be defined via the collecting semantics.
The inductive rules can be found in \Cref{ssec:sp-rules,ssec:slp-rules}. 

The analysis direction is now \emph{forwards}: starting with a predicate on initial states, we transform it into a predicate on final states.
The result of the analysis on the other hand is a \emph{backcast}: a strongest postcondition backcasts for each final state whether the computation started in the precondition.

The \emph{demonic} strongest liberal post is intentional because it makes for a very dual theory as we will see later.
In the context of the strongest post reasoning, liberality refers to unreachability rather than divergence.
A state from which computation diverges does not have a final state that it reaches, i.e.\ it does not terminate.
Conversely, an unreachable state does not have an initial state that reaches it, i.e.\ it is not reachable from any initial state.
The duality of termination and reachability has previously been discussed by \citet{ZhangKaminski22} and \citet{ascari2023sufficient}.

The requirement of exclusive reachability - excluding states that can also be reached from outside of $\pre$ - is what makes this transformer demonic.
This concept will be further explored in the following section, where we discuss how nondeterminism arises in forward analyses, a topic that, to the best of our knowledge, has not been addressed in detail in the literature.

\subsubsection{Resolution of Nondeterminism in Strongest Postconditions}
\label{sssec:nondeterminism}

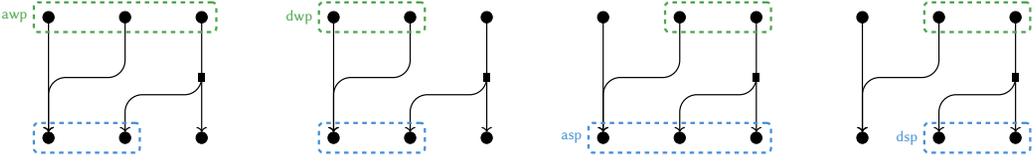
\begin{figure}[t]
	\def\statespace{1.8}
			\def\vertspace{2.8}
			\def\colorpre{webgreen!70}
			\def\colorpost{DodgerBlue3!80}
	\begin{adjustbox}{max width = .99\textwidth}
		\begin{tikzpicture}
				\node[state,fill=black,scale=0.3] (i0) at (0,0) {};
				\node[state,fill=black,scale=0.3] (i1) at (\statespace,0) {};
				\node[state,fill=black,scale=0.3] (i2) at (2*\statespace,0) {};

				\node[state,fill=black,scale=0.3] (f0) at (0,-\vertspace) {};
				\node[state,fill=black,scale=0.3] (f1) at (\statespace,-\vertspace) {};
				\node[state,fill=black,scale=0.3] (f2) at (2*\statespace,-\vertspace) {};
				
				\node[rectangle,fill=black,scale=0.4] (choice) at ($(i2)!0.5!(f2)$) {b};

				\path (i0) edge[program] (f0);
				\draw [rounded corners=4mm,program] (i1) |- ++(-\statespace,-\vertspace/2) -- (f0);
				\path (i2) edge[program] (f2);
				\draw [rounded corners=4mm,program] (choice) |- ++(-\statespace,-0.4) -- (f1);

				\node (awp) [line width=1.5pt,inner sep=6pt, dashed,rounded corners=0.1cm,fit = {(i0)(i1)(i2)}, draw, color=\colorpre, label=left:{\color{\colorpre}$\awpsymbol$}] {};
				\node (post) [line width=1.5pt,inner sep=6pt, dashed,rounded corners=0.1cm,fit = {(f0)(f1)}, draw, color=\colorpost] {};
			
			\end{tikzpicture}
			\qquad\qquad
			\begin{tikzpicture}

				\node[state,fill=black,scale=0.3] (i0) at (0,0) {};
				\node[state,fill=black,scale=0.3] (i1) at (\statespace,0) {};
				\node[state,fill=black,scale=0.3] (i2) at (2*\statespace,0) {};

				\node[state,fill=black,scale=0.3] (f0) at (0,-\vertspace) {};
				\node[state,fill=black,scale=0.3] (f1) at (\statespace,-\vertspace) {};
				\node[state,fill=black,scale=0.3] (f2) at (2*\statespace,-\vertspace) {};
				
				\node[rectangle,fill=black,scale=0.4] (choice) at ($(i2)!0.5!(f2)$) {b};

				\path (i0) edge[program] (f0);
				\draw [rounded corners=4mm,program] (i1) |- ++(-\statespace,-\vertspace/2) -- (f0);
				\path (i2) edge[program] (f2);
				\draw [rounded corners=4mm,program] (choice) |- ++(-\statespace,-0.4) -- (f1);

				\node (dwp) [line width=1.5pt,inner sep=6pt, dashed,rounded corners=0.1cm,fit = {(i0)(i1)}, draw, color=\colorpre, label=left:{\color{\colorpre}$\dwpsymbol$}] {};
				\node (post) [line width=1.5pt,inner sep=6pt, dashed,rounded corners=0.1cm,fit = {(f0)(f1)}, draw, color=\colorpost] {};
			\end{tikzpicture}
			\qquad\qquad
			\begin{tikzpicture}

				\node[state,fill=black,scale=0.3] (i0) at (0,0) {};
				\node[state,fill=black,scale=0.3] (i1) at (\statespace,0) {};
				\node[state,fill=black,scale=0.3] (i2) at (2*\statespace,0) {};

				\node[state,fill=black,scale=0.3] (f0) at (0,-\vertspace) {};
				\node[state,fill=black,scale=0.3] (f1) at (\statespace,-\vertspace) {};
				\node[state,fill=black,scale=0.3] (f2) at (2*\statespace,-\vertspace) {};
				
				\node[rectangle,fill=black,scale=0.4] (choice) at ($(i2)!0.5!(f2)$) {b};

				\path (i0) edge[program] (f0);
				\draw [rounded corners=4mm,program] (i1) |- ++(-\statespace,-\vertspace/2) -- (f0);
				\path (i2) edge[program] (f2);
				\draw [rounded corners=4mm,program] (choice) |- ++(-\statespace,-0.4) -- (f1);

				\node (pre) [line width=1.5pt,inner sep=6pt, dashed,rounded corners=0.1cm,fit = {(i1)(i2)}, draw, color=\colorpre] {};
				\node (asp) [line width=1.5pt,inner sep=6pt, dashed,rounded corners=0.1cm,fit = {(f0)(f1)(f2)}, draw, color=\colorpost, label=left:{\color{\colorpost}$\aspsymbol$}] {};
			\end{tikzpicture}
			\qquad
			\begin{tikzpicture}

				\node[state,fill=black,scale=0.3] (i0) at (0,0) {};
				\node[state,fill=black,scale=0.3] (i1) at (\statespace,0) {};
				\node[state,fill=black,scale=0.3] (i2) at (2*\statespace,0) {};

				\node[state,fill=black,scale=0.3] (f0) at (0,-\vertspace) {};
				\node[state,fill=black,scale=0.3] (f1) at (\statespace,-\vertspace) {};
				\node[state,fill=black,scale=0.3] (f2) at (2*\statespace,-\vertspace) {};
				
				\node[rectangle,fill=black,scale=0.4] (choice) at ($(i2)!0.5!(f2)$) {b};

				\path (i0) edge[program] (f0);
				\draw [rounded corners=4mm,program] (i1) |- ++(-\statespace,-\vertspace/2) -- (f0);
				\path (i2) edge[program] (f2);
				\draw [rounded corners=4mm,program] (choice) |- ++(-\statespace,-0.4) -- (f1);

				\node (pre) [line width=1.5pt,inner sep=6pt, dashed,rounded corners=0.1cm,fit = {(i1)(i2)}, draw, color=\colorpre] {};
				\node (dsp) [line width=1.5pt,inner sep=6pt, dashed,rounded corners=0.1cm,fit = {(f1)(f2)}, draw, color=\colorpost, label=left:{\color{\colorpost}$\dspsymbol$}] {};
				\node (space) [line width=1.5pt,inner sep=6pt, dashed,rounded corners=0.1cm,fit = {(f0)}, label=left:{\phantom{$\dwpsymbol$i}}] {};
			\end{tikzpicture}
	\end{adjustbox}%
	\caption{%
		Duality of angelic and demonic weakest pre versus angelic and demonic strongest post, demonstrated on four copies of the same program.
		Preconditions are circled in dashed green, postconditions in dashed blue.
		The rightmost initial state \emph{can} terminate in the postcondition, but may also terminate outside.
		Therefore, it is included in the angelic but not in the demonic weakest precondition.
		Dually, the leftmost final state is reachable from the precondition but also from outside.
		Therefore, it is included in the angelic but not in the demonic strongest postcondition.%
	}%
	\label{fig:adwpsp_symm}%
\end{figure}
Whereas nondeterminism in weakest preconditions arises from explicit nondeterministic branching in the program, the nondeterminism relevant for strongest postconditions arises from \emph{confluence}, i.e.\ when multiple initial states lead to the same final state, even for deterministic computations.
Consider for instance the fully deterministic program $\ASSIGN{x}{2}$.
Then both states $\sigma(x) = 5$ and $\sigma'(x) = 17$ lead to the same final state $\tau(x) = 2$.
If two different initial states can reach $\tau$, this is somewhat dual to one initial state possibly terminating in two final states.
For an illustration, see \Cref{fig:adwpsp_symm}.

When deciding whether a final state $\tau$ is in the strongest postcondition of some precondition~$\pre$, we now need to choose whether we require \emph{all} initial states that can reach $\tau$ to satisfy $\pre$ or if it suffices if there \emph{exists} such a state.
Following backward terminology, we refer to the former as \emph{demonic} and the latter as \emph{angelic} resolution of nondeterminism.
We have previously defined \emph{angelic} strongest (non-liberal) and \emph{demonic} strongest liberal postconditions.
Consequently, we will now define the missing \emph{demonic} strongest (non-liberal) and \emph{angelic} strongest liberal postconditions.%
\begin{definition}[Demonic Strongest Postcondition Transformers]
	\label{def:dsp}
    Given a program $\program$ and a precondition $\pre$, the \emph{\underline{demonic} strongest postcondition} is defined as
	\[
		\dsp{\program}{\pre} \eeq \mylambda{\tau}
		\begin{cases}
			{\displaystyle \bigwedge\limits_{\sigma \in \seminv{\program}{\tau}} \pre(\sigma),}  & \text{ if }\  \seminv{\program}{\tau} \neq \emptyset\\
			\false, & \text{ otherwise .}
		\end{cases}
		\tag*{\raisebox{-1.3em}{\qedtriangle}}
	\]
\end{definition}

The demonic strongest post transformer maps a precondition $\pre$ to the set of states that are exclusively reachable by the initial states contained in $\pre$.
Aligning with the intuition above, this is a stronger requirement than for the angelic strongest post, and demonic in the sense that \emph{all} paths leading to the final state in question have to originate in the given precondition.

\begin{definition}[Angelic Strongest Liberal Postcondition Transformers]
	\label{def:aslp}
    Given a program $\program$ and a precondition $\pre$, the \emph{\underline{an}g\underline{elic} strongest liberal postcondition} is defined as
	\[
		\aslp{\program}{\pre} \eeq \mylambda{\tau}
		\begin{cases}
			{\displaystyle \bigvee\limits_{\sigma \in \seminv{\program}{\tau}} \pre(\sigma),}  & \text{ if }\ \seminv{\program}{\tau} \neq \emptyset \\
			\true, & \text{ otherwise .}
		\end{cases}
		\tag*{\raisebox{-1.3em}{\qedtriangle}}
	\]
\end{definition}

The angelic strongest liberal post maps a precondition $\pre$ to the set of states that either are reachable from $\pre$ or unreachable.
Therefore, this is indeed a liberal extension of the angelic strongest post calculus as it accepts unreachable states.

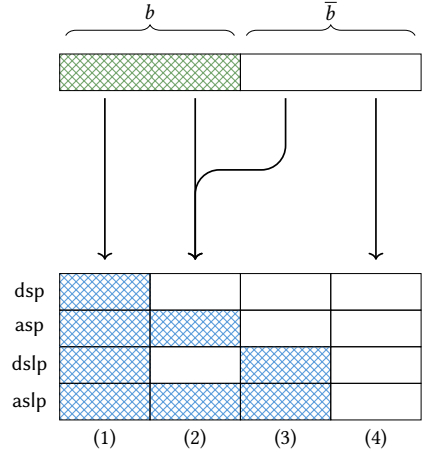
\begin{wrapfigure}[19]{R}{.4\textwidth}%
	\begin{adjustbox}{max width=.399\textwidth}
	\begin{tikzpicture}
		\def\topBlockHeight{0.6}
		\def\bottomBlockHeight{0.6}
		\def\topBlockWidth{3}
		\def\bottomBlockWidth{1.5}
		\def\blockSpacing{0}
		\def\blockSpacingVert{2.8}
		\def\spaceTransformers{0.2}


		\draw[pattern=crosshatch,pattern color=mygreen!70] (0, \blockSpacingVert) rectangle (\topBlockWidth, \blockSpacingVert +\bottomBlockHeight);
		\draw[] (\topBlockWidth + \blockSpacing, \blockSpacingVert) rectangle (\topBlockWidth + \blockSpacing + \topBlockWidth, \blockSpacingVert +\bottomBlockHeight);

		\foreach \i in {0, 1,2,3} {
			\node (bot\i) at (\i * \bottomBlockWidth + \i * \blockSpacing + \bottomBlockWidth/2, \bottomBlockHeight/2 - 0.4) {};
		}

		\draw [decorate,decoration={brace,amplitude=5pt,raise=2ex}]
		(0.1,\blockSpacingVert +\bottomBlockHeight) -- (\topBlockWidth-0.1,\blockSpacingVert +\bottomBlockHeight) node[midway,yshift=2em]{$\pre$};
		\draw [decorate,decoration={brace,amplitude=5pt,raise=2ex}]
		(\topBlockWidth+0.1,\blockSpacingVert +\bottomBlockHeight) -- (2*\topBlockWidth-0.1,\blockSpacingVert +\bottomBlockHeight) node[midway,yshift=2em]{$\negate{\pre}$};

		\foreach \i in {0, 1, 2,3} {
			\node (topref\i) at (\i * \bottomBlockWidth + \i * \blockSpacing + \bottomBlockWidth/2, \blockSpacingVert) {};
		}

		\path (topref0) edge[program] (bot0);
		\path (topref1) edge[program] (bot1);
		\draw [rounded corners=4mm,program] (topref2) |- ++(-1.5,-1.3) -- (bot1);
		\path (topref3) edge[program] (bot3);

		\def\j{0}
		\def\i{0}
		\draw[pattern=crosshatch,pattern color=DodgerBlue3!70] (\i * \bottomBlockWidth + \i * \blockSpacing, - \spaceTransformers - \j * \bottomBlockHeight) rectangle (\i * \bottomBlockWidth + \i * \blockSpacing + \bottomBlockWidth, -\spaceTransformers -\bottomBlockHeight - \j *\bottomBlockHeight);
		\def\i{1}
		\draw[] (\i * \bottomBlockWidth + \i * \blockSpacing, - \spaceTransformers - \j * \bottomBlockHeight) rectangle (\i * \bottomBlockWidth + \i * \blockSpacing + \bottomBlockWidth, -\spaceTransformers -\bottomBlockHeight - \j *\bottomBlockHeight);
		\def\i{2}
		\draw[] (\i * \bottomBlockWidth + \i * \blockSpacing, - \spaceTransformers - \j * \bottomBlockHeight) rectangle (\i * \bottomBlockWidth + \i * \blockSpacing + \bottomBlockWidth, -\spaceTransformers -\bottomBlockHeight - \j *\bottomBlockHeight);
		\def\i{3}
		\draw[] (\i * \bottomBlockWidth + \i * \blockSpacing, - \spaceTransformers - \j * \bottomBlockHeight) rectangle (\i * \bottomBlockWidth + \i * \blockSpacing + \bottomBlockWidth, -\spaceTransformers -\bottomBlockHeight - \j *\bottomBlockHeight);

		\def\j{1}
		\def\i{0}
		\draw[pattern=crosshatch,pattern color=DodgerBlue3!70] (\i * \bottomBlockWidth + \i * \blockSpacing, - \spaceTransformers - \j * \bottomBlockHeight) rectangle (\i * \bottomBlockWidth + \i * \blockSpacing + \bottomBlockWidth, -\spaceTransformers -\bottomBlockHeight - \j *\bottomBlockHeight);
		\def\i{1}
		\draw[pattern=crosshatch,pattern color=DodgerBlue3!70] (\i * \bottomBlockWidth + \i * \blockSpacing, - \spaceTransformers - \j * \bottomBlockHeight) rectangle (\i * \bottomBlockWidth + \i * \blockSpacing + \bottomBlockWidth, -\spaceTransformers -\bottomBlockHeight - \j *\bottomBlockHeight);
		\def\i{2}
		\draw[] (\i * \bottomBlockWidth + \i * \blockSpacing, - \spaceTransformers - \j * \bottomBlockHeight) rectangle (\i * \bottomBlockWidth + \i * \blockSpacing + \bottomBlockWidth, -\spaceTransformers -\bottomBlockHeight - \j *\bottomBlockHeight);
		\def\i{3}
		\draw[] (\i * \bottomBlockWidth + \i * \blockSpacing, - \spaceTransformers - \j * \bottomBlockHeight) rectangle (\i * \bottomBlockWidth + \i * \blockSpacing + \bottomBlockWidth, -\spaceTransformers -\bottomBlockHeight - \j *\bottomBlockHeight);
		
		\def\j{2}
		\def\i{0}
		\draw[pattern=crosshatch,pattern color=DodgerBlue3!70] (\i * \bottomBlockWidth + \i * \blockSpacing, - \spaceTransformers - \j * \bottomBlockHeight) rectangle (\i * \bottomBlockWidth + \i * \blockSpacing + \bottomBlockWidth, -\spaceTransformers -\bottomBlockHeight - \j *\bottomBlockHeight);
		\def\i{1}
		\draw[] (\i * \bottomBlockWidth + \i * \blockSpacing, - \spaceTransformers - \j * \bottomBlockHeight) rectangle (\i * \bottomBlockWidth + \i * \blockSpacing + \bottomBlockWidth, -\spaceTransformers -\bottomBlockHeight - \j *\bottomBlockHeight);
		\def\i{2}
		\draw[pattern=crosshatch,pattern color=DodgerBlue3!70] (\i * \bottomBlockWidth + \i * \blockSpacing, - \spaceTransformers - \j * \bottomBlockHeight) rectangle (\i * \bottomBlockWidth + \i * \blockSpacing + \bottomBlockWidth, -\spaceTransformers -\bottomBlockHeight - \j *\bottomBlockHeight);
		\def\i{3}
		\draw[] (\i * \bottomBlockWidth + \i * \blockSpacing, - \spaceTransformers - \j * \bottomBlockHeight) rectangle (\i * \bottomBlockWidth + \i * \blockSpacing + \bottomBlockWidth, -\spaceTransformers -\bottomBlockHeight - \j *\bottomBlockHeight);

		\def\j{3}
		\def\i{0}
		\draw[pattern=crosshatch,pattern color=DodgerBlue3!70] (\i * \bottomBlockWidth + \i * \blockSpacing, - \spaceTransformers - \j * \bottomBlockHeight) rectangle (\i * \bottomBlockWidth + \i * \blockSpacing + \bottomBlockWidth, -\spaceTransformers -\bottomBlockHeight - \j *\bottomBlockHeight);
		\def\i{1}
		\draw[pattern=crosshatch,pattern color=DodgerBlue3!70] (\i * \bottomBlockWidth + \i * \blockSpacing, - \spaceTransformers - \j * \bottomBlockHeight) rectangle (\i * \bottomBlockWidth + \i * \blockSpacing + \bottomBlockWidth, -\spaceTransformers -\bottomBlockHeight - \j *\bottomBlockHeight) ;
		\def\i{2}
		\draw[pattern=crosshatch,pattern color=DodgerBlue3!70] (\i * \bottomBlockWidth + \i * \blockSpacing, - \spaceTransformers - \j * \bottomBlockHeight) rectangle (\i * \bottomBlockWidth + \i * \blockSpacing + \bottomBlockWidth, -\spaceTransformers -\bottomBlockHeight - \j *\bottomBlockHeight);
		\def\i{3}
		\draw[] (\i * \bottomBlockWidth + \i * \blockSpacing, - \spaceTransformers - \j * \bottomBlockHeight) rectangle (\i * \bottomBlockWidth + \i * \blockSpacing + \bottomBlockWidth, -\spaceTransformers -\bottomBlockHeight - \j *\bottomBlockHeight);

		\def\j{4}
		\def\i{0}
		\draw[draw=none] (\i * \bottomBlockWidth + \i * \blockSpacing, - \spaceTransformers - \j * \bottomBlockHeight) rectangle (\i * \bottomBlockWidth + \i * \blockSpacing + \bottomBlockWidth, -\spaceTransformers -\bottomBlockHeight - \j *\bottomBlockHeight) node[midway] {(1)};
		\def\i{1}
		\draw[draw=none] (\i * \bottomBlockWidth + \i * \blockSpacing, - \spaceTransformers - \j * \bottomBlockHeight) rectangle (\i * \bottomBlockWidth + \i * \blockSpacing + \bottomBlockWidth, -\spaceTransformers -\bottomBlockHeight - \j *\bottomBlockHeight) node[midway] {(2)};
		\def\i{2}
		\draw[draw=none] (\i * \bottomBlockWidth + \i * \blockSpacing, - \spaceTransformers - \j * \bottomBlockHeight) rectangle (\i * \bottomBlockWidth + \i * \blockSpacing + \bottomBlockWidth, -\spaceTransformers -\bottomBlockHeight - \j *\bottomBlockHeight) node[midway] {(3)};
		\def\i{3}
		\draw[draw=none] (\i * \bottomBlockWidth + \i * \blockSpacing, - \spaceTransformers - \j * \bottomBlockHeight) rectangle (\i * \bottomBlockWidth + \i * \blockSpacing + \bottomBlockWidth, -\spaceTransformers -\bottomBlockHeight - \j *\bottomBlockHeight) node[midway] {(4)};

		\node (dsp) at (-0.5, -\spaceTransformers - \topBlockHeight/2 ) {$\dspsymbol$};
		\node (asp) at (-0.5, -\spaceTransformers - \topBlockHeight/2 - 1*\topBlockHeight) {$\aspsymbol$};
		\node (dslp) at (-0.5, -\spaceTransformers - \topBlockHeight/2 - 2*\topBlockHeight) {$\dslpsymbol$};
		\node (aslp) at (-0.5, -\spaceTransformers - \topBlockHeight/2 - 3*\topBlockHeight) {$\aslpsymbol$};

	\end{tikzpicture}
	\end{adjustbox}
	\caption{
		An illustration of the different \spsymbol-style transformers.
		The upper part depicts all possible program executions for a fixed final state and a precondition $\pre$.
		Below, the colored boxed indicate which sets of final states are included in which transformers.
	}
	\label{fig:sp}
\end{wrapfigure}%
\subsubsection{Anatomy of Strongest Postcondition Transformers}

Dual to the coreachability classes of initial states we considered for weakest preconditions, we will now consider \emph{reachability classes} of \emph{final} states for strongest postconditions.
Given a final state $\tau$, a program $\program$, and a precondition $\pre$, we can make out two dimensions regarding what was the case before $\program$ reached $\tau$:%
\begin{enumerate}
	\item[(sd1)]
		$\program$ could have been started in $\pre$ or not,
		
	\item[(sd2)]
		$\program$ could have been started in $\negate{\pre}$ or not,
\end{enumerate}%
%
These dimensions span a space of $2^2 = 4$ different types of behaviors.
We call the behavioral classes in that space \emph{reachability classes}, since they speak about whether final states are reachable from $\pre$ (or~$\negate{\pre}$).
For example, there is a reachability class containing all final states that are reachable by an execution of $\program$ both from $\pre$ and from $\negate{\pre}$.
The four reachability classes are:
\begin{enumerate}
	\item $\program$ was started in $\pre$.
	\item $\program$ could have been started in $\pre$ or in $\negate{\pre}$.
	\item The unreachable states.
	\item $\program$ was started in $\negate{\pre}$.
\end{enumerate}%
These reachability classes fully partition the state space.

We note that a final state can either be unreachable or not, but never both.
This is in contrast to coreachability where computations from an initial state can very well both diverge and terminate.

A postcondition transformer can now opt to include a class ($i$) in its result or not.
The four natural transformers we described above are illustrated in \Cref{fig:sp}.
Along a \emph{row}, blue shaded boxes indicate which classes are included in the respective strongest postcondition transformer with respect to precondition $\pre$ (in green).
For instance, $\asp{\program}{\post}$ includes classes (1) and (2).

Assuming that the above reachability classes are meaningful, this makes for $2^4 = 16$ possible strongest postcondition transformers of which some might not even depend on the precondition $\pre$ (e.g.\ include only (3)), others might be trivial (e.g.\ include none, include all).
Some are contrapositives of each other.
For example,\ $\aspsymbol$ and $\dslpsymbol$ are \emph{contrapositive} in the sense that%
\begin{align*}
	\asp{\program}{\post} \eeq \negate{\dslp{\program}{\negate{c}}} \qqand \dslp{\program}{\post} \eeq \negate{\asp{\program}{\negate{c}}}
\end{align*}%
This contrapositivity relationship also holds for $\dspsymbol$ and $\aslpsymbol$.
The contrapositivities can also be rediscovered graphically in \Cref{fig:wp} analogously to how this was the case for $\wpsymbol$ transformers.

\subsubsection{Inductive Rules for Strongest Postcondition Transformers}
The transformers $\aspsymbol$ and $\dslpsymbol$ can be defined by induction on the program structure (see \Cref{ssec:sp-rules,ssec:slp-rules} for the concrete rules). 
For the novel transformers $\dspsymbol$ and $\aslpsymbol$, we cannot quite give an inductive set of rules.
To see why, consider the nondeterministic program $\NDCHOICEC$.
Recall that the \emph{angelic} strongest postcondition of $\pre$ is the set of states that are reachable from $\pre$.
The set of states reachable by $\NDCHOICEC$ is the union of the states reachable from $\program_1$ and from $\program_2$, i.e.\ we get $\asp{\NDCHOICEC}{\pre} = \asp{\program_1}{\pre} \cup \asp{\program_2}{\pre}$.
Similarly, if we want to compute the demonic strongest liberal post, which contains the set of states unreachable or exclusively reachable from $\pre$, we take the intersection of the results for both subprograms and get $\dslp{\NDCHOICEC}{\pre} = \dslp{\program_1}{\pre} \cup \dslp{\program_2}{\pre}$.

The demonic strongest post of $\pre$ contains all states that are exclusively reachable from $\pre$.
For $\NDCHOICEC$, a final state $\tau$ is contained in this set if it fulfills one of the following three cases:
\begin{enumerate}
	\item $\tau$ is exclusively reachable from $\pre$ by executing $\program_1$ and unreachable by executing $\program_2$.
	\item $\tau$ is exclusively reachable from $\pre$ by executing $\program_2$ and unreachable by executing $\program_1$.
	\item $\tau$ is exclusively reachable from $\pre$ by executing $\program_1$ and exclusively reachable by executing $\program_2$.
\end{enumerate}
If we restrict to using only the demonic strongest post for the subprograms, the only case we can represent is (3) by $\dsp{\program_1}{\pre} \cap \dsp{\program_2}{\pre}$.
For (1) and (2), we need to reason about unreachability, which is not possible using $\dsp{\program_1}{\pre}$ or $\dsp{\program_2}{\pre}$. 
Similar problems arise for $\aslpsymbol$.

As a silver lining, we can characterize $\dspsymbol$ as a combination of other transformers:
We have that
\[
	\dsp{\program}{\pre} = \asp{\program}{\pre} \cap \dslp{\program}{\pre},
\]
as a final state $\tau$ must be exclusively reachable from $\pre$ in order to be contained in $\dsp{\program}{\pre}$.
If $\tau \in \dslp{\program}{\pre}$, we know that $\tau$ is either (1) unreachable or (2) exclusively reachable from $\pre$.
If additionally $\tau \in \asp{\program}{\pre}$, we know that $\tau$ is definitely reachable from $\pre$, which excludes case (1).
This can also be seen in \Cref{fig:sp}:
The intersection of $\asp{\program}{\pre}$ and $\dslp{\program}{\pre}$ only contains the first reachability class, which is equivalent to $\dsp{\program}{\pre}$.
Similarly, we have that
\[
	\aslp{\program}{\pre} = \asp{\program}{\pre} \cup \dslp{\program}{\pre}.
\]
Therefore, although we cannot provide an inductive set of rules directly, we can make use of the existing rules and compute $\dspsymbol$ (resp.\ $\aslpsymbol$) with two inductive computations.

\subsection{Backward vs. Forward Analysis}
\label{ssec:backward-forward}

The characterization of the novel transformers \dspsymbol\ and \aslpsymbol\ goes via union and intersection of existing transformers,
raising the question whether we can do the same for the weakest pre calculi, i.e.\ whether we have that
\[
	\dwp{\program}{\post} \morespace{\stackrel{?}{=}}  \awp{\program}{\post} \cap \dwlp{\program}{\post}
\qqand
	\awlp{\program}{\post} \morespace{\stackrel{?}{=}} \awp{\program}{\post} \cup \dwlp{\program}{\post}.
\]
This is -- perhaps somewhat surprisingly -- not the case.
To see why, recall \Cref{fig:wp}.
The set $\dwlp{\program}{\post}$ contains states from which computation always either terminates in $\post$ or diverges, and $\awp{\program}{\post}$ contains all states that can reach $\post$.
In \Cref{fig:wp}, their intersection corresponds to coreachability classes (1) and (2).
This is not equivalent to $\dwp{\program}{\post}$, which excludes class (2), containing states from which computation either diverges or terminates in $\post$. 

The union of $\awp{\program}{\post}$ and $\dwlp{\program}{\post}$ contains states from which computation either can reach $\post$ or computation always diverges.
A state from which computation either diverges or terminates outside of $\post$ (class no.\ 6) is not contained in this set, but in $\awlp{\program}{\post}$.

Note that both counterexamples are concerned with states with \emph{branching divergence}, i.e.\ a computation that nondeterministically either diverges or terminates in some states.
In fact, if we exclude such behavior, the equations above hold.

%
\begin{wrapfigure}[12]{r}{.4\textwidth}%
		\vspace{-0.3\intextsep}%
		\begin{center}
			\begin{tikzpicture}[->,
				node distance=1cm and 3cm,
				thick,
				mystate/.style = {circle,inner sep=3pt,draw,font=\small}
				]

				\node[state,fill=black,scale=0.3] (I4) {\phantom{1}};
				\node[state,fill=black,scale=0.3] (F4) [right =of I4] {\phantom{1}};

				\node[rectangle,fill=black,scale=0.4] (choice) at ($(I4)!0.5!(F4)$) {b};
				\node (spiral) at ($(I4)!0.5!(F4)$) {};
				
				\path (I4) edge node {} (F4);
				\path (I4) edge [-] node {} (spiral);
				
				\coordinate (spiralstart) at ($(spiral)+(0,-0.45)$);
				\draw let \p1=(I4) in
				[scale=0.25,domain=0:30,variable=\t,smooth,samples=75,-,shift={(spiralstart)}] plot  ({\t r}: {-0.002*\t*\t});
			\end{tikzpicture}
	%
	
	\bigskip\smallskip
	\hrule
	\bigskip\bigskip
	
		\begin{tikzpicture}[node distance=1cm and 3cm,
			thick,
			mystate/.style = {circle,inner sep=3pt,draw,font=\small}
			]

			\node[state,fill=black,scale=0.3] (I4) {\phantom{1}};
			\node[state,fill=black,scale=0.3] (F4) [right =of I4] {\phantom{1}};

			\node[rectangle,fill=black,scale=0.4] (choice) at ($(I4)!0.5!(F4)$) {b};
			\node (spiral) at ($(I4)!0.5!(F4)$) {};
			
			\path (I4) edge [->] node {} (F4);
			\path (I4) edge [-] node {} (spiral);

			\coordinate (spiralstart) at ($(spiral)+(0,-0.45)$); 
			\draw let \p1=(I4) in
				[scale=0.25,xscale=-1,domain=0:30,variable=\t,smooth,samples=75,-,shift={(spiralstart)}] plot  ({\t r}: {-0.002*\t*\t});
						
			\draw[purple!70!red, line width=0.6mm]
				(1.15,-0.15) -- (2.1,-0.7);
		\end{tikzpicture}
	\end{center}
	\caption{Branching divergence versus confluence of unreachability.}
	\label{fig:branching}
\end{wrapfigure}
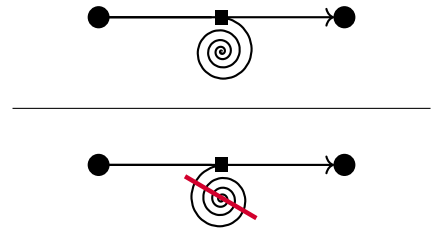%
This also gives intuition as to why the respective equations hold for \spsymbol\ transformers.
As previously mentioned, the dual concept of divergence is unreachability.
Illustrated in \Cref{fig:branching}, branching divergence represents the division of a computation into one that reaches a final state and one that does not.
Therefore, the dual should reflect the confluence of a computation that is reachable from an initial state and one that is not, as shown in the lower half of \Cref{fig:branching}.
But, this is paradox: A final state can never be unreachable and reachable at the same time.
We refer to this observation as the \emph{absence of unreachability confluence}.
\begin{observation}[Absence of Unreachability Confluence]
	\label{obs:branching}
	For a program $\program$, an initial state $\sigma$, and a final state $\tau$, we have that%
	\begin{align*}
		\tau \text{ is unreachable by any computation of } \program &\quad\quad\textnormal{iff}\quad\quad \seminv{\program}{\tau} = \emptyset~, \qquad \textnormal{but}\\ 
		\program \text{ can diverge on input } \sigma  &\quad\xcancel{\quad\textnormal{iff}\quad}\quad \sem{\program}{\sigma} = \emptyset~. 
		\tag*{\qedtriangle}
	\end{align*}%
	%
\end{observation}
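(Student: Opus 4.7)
The plan is to handle the two halves of the observation separately: an \emph{iff} to establish, and a \emph{non-iff} (an inequivalence) to exhibit via counterexample.

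For the first half, I would simply unfold the definition of the inverse semantics. Recall $\seminv{\program}{\tau} = \setcomp{\sigma}{\tau \in \sem{\program}{\sigma}}$. Saying that $\tau$ is unreachable by any computation of $\program$ means precisely that there is no initial state $\sigma$ with $\tau \in \sem{\program}{\sigma}$; i.e.\ the defining set is empty. Both directions are immediate from this unfolding, so this part is really just a sanity check and should take one short paragraph.

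For the second half, I need to refute the would-be equivalence. The natural witness is the program from \Cref{rem:diverge-semantics}, namely $\program = \NDCHOICE{\SKIP}{\WHILEDO{\true}{\SKIP}}$. On any initial state $\sigma$, the collecting semantics gives $\sem{\program}{\sigma} = \{\sigma\} \cup \emptyset = \{\sigma\} \neq \emptyset$, yet $\program$ can clearly diverge on $\sigma$ via its right branch. This shows that $\sem{\program}{\sigma} = \emptyset$ is not necessary for divergence, breaking the \enquote{iff}. (One could also note that the converse direction does hold trivially, since if $\sem{\program}{\sigma}=\emptyset$ then no terminating computation exists, so $\program$ must diverge on $\sigma$; only the other direction fails.)

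The main obstacle, such as it is, is conceptual rather than technical: making precise what \enquote{can diverge} means in the collecting semantics, since the semantics itself only records terminal states and hence is blind to the distinction between \emph{must} and \emph{may} terminate. The cleanest way to phrase this is to appeal to the inductive interpretation of $\WHILEDO{\true}{\SKIP}$ as the least-fixpoint computation that never reaches $\negate{\true}$, so its contribution to the union is $\emptyset$, while the $\SKIP$ branch still contributes $\{\sigma\}$. Thus the observation exposes exactly the asymmetry advertised in \Cref{ssec:backward-forward}: divergence hides under nondeterministic union (branching divergence is invisible in $\sem{\program}{\cdot}$), whereas unreachability, being a statement about the absence of preimages, is faithfully captured by $\seminv{\program}{\cdot}$.
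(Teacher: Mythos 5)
Your proposal is correct and matches what the paper does: the paper treats the first equivalence as immediate from the definition of $\seminv{\program}{\cdot}$ and justifies the failure of the second via exactly the program of \Cref{rem:diverge-semantics}, whose collecting semantics coincides with that of $\SKIP$ despite possible divergence. Your added remark that the direction $\sem{\program}{\sigma}=\emptyset \Rightarrow$ divergence still holds is accurate and consistent with the paper's discussion.
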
%
\noindent%
\Cref{obs:branching} in particular enables us to characterize unreachability in a relational setting by testing equality to the empty set.
This is not possible for divergence, weakening the previously discussed duality of divergence and unreachability (see e.g.\ \Cref{fig:adwpsp_symm}).
We will see how this intrinsic difference between forward and backward analysis affects program logics in \Cref{sec:kat,sec:taxonomy-rev}.
Another direct consequence of \Cref{obs:branching} is that the demonic weakest pre is not truly dual to demonic strongest post, after all.
For an illustration, see \Cref{sec:illustration}. 

\subsection{Relating Predicate Transformers}
\label{ssec:relations}

\begin{table}[t]
	\begin{center}
		\begin{adjustbox}{max width=\textwidth}
			\begin{tabular}{r | l}
				transformer & captured states (precondition $\pre$, postcondition $\post$) \\
				\hline\hline
				angelic weakest pre ($\awpsymbol$) & initial states that can reach $\post$ \\	
				angelic weakest liberal pre ($\awlpsymbol$) & initial states that can reach $\post$ or diverge \\	
				demonic weakest pre ($\dwpsymbol$) & initial states that can only reach $\post$ \\
				demonic weakest liberal pre ($\dwlpsymbol$) & initial states that can only reach $\post$ or diverge \\
				\hline
				angelic strongest post ($\aspsymbol$) & final states that are reachable from $\pre$ \\	
				angelic strongest liberal post ($\aslpsymbol$) & final states that are reachable from $\pre$ or unreachable \\	
				demonic strongest post ($\dspsymbol$) & final states that are exclusively reachable from $\pre$ \\
				demonic strongest liberal post ($\dslpsymbol$) & final states that are exclusively reachable from $\pre$ or unreachable \\
			\end{tabular}
		\end{adjustbox}
	\end{center}
	\caption{
		An overview of the intuition for all presented predicate transformers.
	}
	\label{tab:transformer-overview}
\end{table}

\Cref{tab:transformer-overview} summarizes the sets of states captured by the eight predicate transformers defined in the previous section.
As indicated before, the transformers are closely related.
The remainder of this section is dedicated to formalizing these relations.

First, we observe that we can order \wpsymbol\ and \spsymbol\ transformers by set inclusion.
This can also be seen in \Cref{fig:wp,fig:sp}, respectively.

\begin{theorem}[Ordering on Predicate Transformers]
	\label{theo:order-trans}
	For all programs $p$ and predicates $\post, \pre$, the following inclusions hold:%
	\begin{center}
		\begin{tikzpicture}[node distance = .7cm]
			\node (awp) []  {$\awp{\program}{\post}$};
			\node (awlp) [right =of awp]  {$\awlp{\program}{\post}$};
			\node (dwp) [below =of awp]  {$\dwp{\program}{\post}$};
			\node (dwlp) [right =of dwp]  {$\dwlp{\program}{\post}$};
			
			\path (awp) edge[draw=none] node [sloped, auto=false, allow upside down] {$\subseteq$}  (awlp);
			\path (dwp) edge[draw=none] node [sloped, auto=false, allow upside down] {$\subseteq$} (awp);
			\path (dwlp) edge[draw=none] node [sloped, auto=false, allow upside down] {$\subseteq$} (awlp);
			\path (dwp) edge[draw=none] node [sloped, auto=false, allow upside down] {$\subseteq$} (dwlp);
		\end{tikzpicture}
		\qquad\qquad
		\begin{tikzpicture}[node distance = .7cm]
			\node (asp) []  {$\asp{\program}{\pre}$};
			\node (aslp) [right =of asp]  {$\aslp{\program}{\pre}$};
			\node (dsp) [below =of asp]  {$\dsp{\program}{\pre}$};
			\node (dslp) [right =of dsp]  {$\dslp{\program}{\pre}$};
		
			\path (asp) edge[draw=none] node [sloped, auto=false, allow upside down] {$\subseteq$}  (aslp);
			\path (dsp) edge[draw=none] node [sloped, auto=false, allow upside down] {$\subseteq$} (asp);
			\path (dslp) edge[draw=none] node [sloped, auto=false, allow upside down] {$\subseteq$} (aslp);
			\path (dsp) edge[draw=none] node [sloped, auto=false, allow upside down] {$\subseteq$} (dslp);
		\end{tikzpicture}%
	\end{center}
\end{theorem}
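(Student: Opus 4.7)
The plan is to prove each of the eight inclusions directly from the pointwise definitions of the eight transformers given in \Cref{def:wp,def:wlp,def:asp,def:dslp,def:dsp,def:aslp}. All inclusions are pointwise statements of the form ``for every state $\sigma$ (resp.~$\tau$), if $\sigma$ (resp.~$\tau$) is in the smaller transformer, then it is also in the larger one,'' so the proof reduces to a mechanical case analysis on whether $\program$ can diverge on $\sigma$ (for $\wpsymbol$-style transformers) or whether $\tau$ is reachable, i.e.\ $\seminv{\program}{\tau} \neq \emptyset$ (for $\spsymbol$-style transformers). I would organize the argument into two symmetric halves, one for the $\wpsymbol$ square and one for the $\spsymbol$ square.

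For the $\wpsymbol$ square I would first fix $\sigma$ and dispatch the three ``easy'' inclusions: $\awp \subseteq \awlp$ and $\dwlp \subseteq \awlp$ both follow because $\awlp$ returns $\true$ whenever $\program$ can diverge on $\sigma$, so one only has to check the non-divergent branch, where the existential $\bigvee$ bound is dominated from both sides in the obvious way (recall that a nonempty conjunction implies the corresponding disjunction). The inclusion $\dwp \subseteq \dwlp$ is immediate, since $\dwp$ imposes strictly more than $\dwlp$ (no divergence and the same universal condition on $\sem{\program}{\sigma}$). The only inclusion needing a small observation is $\dwp \subseteq \awp$: if $\sigma \in \dwp{\program}{\post}$, then $\program$ does not diverge on $\sigma$, so $\sem{\program}{\sigma} \neq \emptyset$, and the universal quantifier in $\dwp$ together with nonemptiness entails the existential quantifier appearing in $\awp$. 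Alternatively, all four inclusions can be read off directly from \Cref{fig:wp}, by observing that the set of shaded coreachability classes of the smaller transformer is contained in that of the larger one.

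The $\spsymbol$ square is handled completely analogously, with $\tau$ and $\seminv{\program}{\tau}$ replacing $\sigma$ and $\sem{\program}{\sigma}$, and with unreachability playing the role of divergence. Here $\asp \subseteq \aslp$ and $\dslp \subseteq \aslp$ follow because $\aslp$ is defined to be $\true$ on unreachable states, reducing each inclusion to the reachable case, where the comparison is between a disjunction and either a disjunction or a conjunction over the same nonempty set $\seminv{\program}{\tau}$. The inclusion $\dsp \subseteq \dslp$ is immediate from the definitions (since $\dsp$ additionally excludes unreachable states), and $\dsp \subseteq \asp$ follows because reachability of $\tau$ makes $\seminv{\program}{\tau}$ nonempty, again turning a universal quantifier into an existential one. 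As before, this can equivalently be verified by inspecting the reachability classes shaded in \Cref{fig:sp}.

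I do not anticipate a hard step: the absence of unreachability confluence highlighted in \Cref{obs:branching} actually makes the $\spsymbol$ side slightly cleaner than the $\wpsymbol$ side, since the only nontrivial step is the passage from a universal to an existential quantifier, and this passage is justified by nonemptiness of $\sem{\program}{\sigma}$ (resp.~$\seminv{\program}{\tau}$) in exactly those cases ($\dwp$ excludes divergence; $\dsp$ excludes unreachability) where we need it. I would therefore write the proof as eight short one-line arguments grouped into two blocks, each concluded by a pointer to the corresponding picture.
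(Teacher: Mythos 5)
Your proposal is correct and follows essentially the same route as the paper, whose proof simply states that the inclusions follow directly from the definitions; your pointwise case analysis on divergence (for the $\wpsymbol$ square) and reachability of $\tau$, i.e.\ nonemptiness of $\seminv{\program}{\tau}$ (for the $\spsymbol$ square), is exactly the expansion of that remark, including the one genuinely needed observation that nonemptiness turns the universal quantifier into the existential one for $\dwpsymbol \subseteq \awpsymbol$ and $\dspsymbol \subseteq \aspsymbol$.
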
%
\begin{proof}
	This follows directly from \Cref{def:wp,def:wlp,def:asp,def:dslp,def:dsp,def:aslp}.
\end{proof}%
\noindent%
It was hinted at before that the transformers are in contrapositive relation.

\begin{theorem}[Contrapositive Transformers]
	\label{theo:complements}
	For all programs $\program$ and predicates $\pre, \post$, we have:%
	\begin{multicols}{2}%
		\begin{enumerate}[label=(\arabic*)]
			\setlength\itemsep{0.3em}
			\item $\awp{\program}{\post} \eeq \negatedbl{\dwlp{\program}{\negate{\post}}}$
			\item $\dwp{\program}{\post} \eeq \negatedbl{\awlp{\program}{\negate{\post}}}$
			\item $\asp{\program}{\pre} \eeq \negatedbl{\dslp{\program}{\negate{\pre}}}$
			\item $\dsp{\program}{\pre} \eeq \negatedbl{\aslp{\program}{\negate{\pre}}}$
		\end{enumerate}%
	\end{multicols}%
	%
\end{theorem}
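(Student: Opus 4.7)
The plan is to prove all four equations by pointwise reasoning on states, applying the De Morgan duality between the big disjunction and big conjunction in the defining formulas of each transformer. For every state $\sigma$ (or $\tau$), the value of each transformer on the right-hand side is defined by a quantifier ranging over $\sem{\program}{\sigma}$ (for $\wpsymbol$/$\wlpsymbol$) or $\seminv{\program}{\tau}$ (for $\aspsymbol$/$\dslpsymbol$ and their variants), so negating the body of a conjunction turns it into a disjunction over the same index set and vice versa. The novelty is essentially bookkeeping: one has to make sure that the way divergence and unreachability are handled as \emph{separate cases} in \Cref{def:wp,def:wlp,def:dsp,def:aslp} matches up under negation.

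For part (1), I would simply compute
\[
    \negate{\dwlp{\program}{\negate{\post}}}(\sigma)
    \eeq \neg \bigwedge_{\tau \in \sem{\program}{\sigma}} \negate{\post}(\tau)
    \eeq \bigvee_{\tau \in \sem{\program}{\sigma}} \post(\tau)
    \eeq \awp{\program}{\post}(\sigma),
\]
observing that if $\program$ diverges on $\sigma$ with no terminating path, the empty conjunction is $\true$, its negation is $\false$, matching the empty disjunction that defines $\awp{\program}{\post}$ on such states. Part (3) is completely analogous, substituting $\seminv{\program}{\tau}$ for $\sem{\program}{\sigma}$; when $\seminv{\program}{\tau} = \emptyset$ (i.e., $\tau$ is unreachable) the empty conjunction/disjunction convention again makes both sides agree, giving $\false$.

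For parts (2) and (4), the defining case splits are explicit rather than implicit in the quantifier convention, so the verification has two subcases. For (2), when $\program$ can diverge on $\sigma$, $\awlp{\program}{\negate{\post}}(\sigma) = \true$ and its negation is $\false = \dwp{\program}{\post}(\sigma)$; otherwise, one applies De Morgan exactly as in (1). For (4), when $\seminv{\program}{\tau} = \emptyset$, $\aslp{\program}{\negate{\pre}}(\tau) = \true$ and its negation is $\false = \dsp{\program}{\pre}(\tau)$; otherwise, De Morgan on the finite (or infinite) conjunction/disjunction gives the result.

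I do not expect any genuine obstacle: the proof is a mechanical verification. The one point that requires care, and which is the closest thing to a subtlety, is matching the $\true$/$\false$ defaults on the \enquote{exceptional} branches of $\dwpsymbol$, $\awlpsymbol$, $\dspsymbol$, and $\aslpsymbol$ with the degenerate empty-quantifier values on the corresponding branches of the contrapositive transformer; this is exactly where the asymmetry discussed in \Cref{ssec:backward-forward} manifests itself, but since each defined default is precisely the negation of its partner's default, the equalities go through uniformly.
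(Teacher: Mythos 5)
Your proposal is correct and uses essentially the same argument as the paper: a pointwise case split on the exceptional branch (divergence resp.\ emptiness of $\seminv{\program}{\tau}$) combined with De Morgan duality between the big conjunction and big disjunction. The only cosmetic difference is that the paper treats (1)--(3) as folklore and proves only (4), phrasing it as showing that $\dsp{\program}{\pre}$ and $\aslp{\program}{\negate{\pre}}$ form a disjoint union covering all of $\States$, whereas you negate directly for all four parts --- the underlying computation is identical.
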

\begin{proof}
	Properties (1) to (3) are folklore knowledge.
	For the proof of (4), see \Cref{ssec:proof-complements}. 
\end{proof}


\section{A Taxonomy of Hoare-Like Program Logics}
\label{sec:taxonomy}

Arguably the best known program logic is Hoare logic~\cite{Hoa69}.
Its central notion are \emph{triples} $\hoare{\pre}{\program}{\post}$ consisting of a program $\program$, a precondition $\pre$, and a postcondition $\post$.
Such triples have also been called \emph{asserted programs} in various literature.
To give meaning to a triple, it must be \emph{exegeted} when a triple is considered \emph{valid} and when it is not.
For the standard partial correctness interpretation of Hoare logic, it is well known that this exegesis can be captured in terms of predicate transformers, namely
\begin{align*}
	\hoare{\pre}{\program}{\post} \textnormal{ is valid for partial corr.} 
	\qqiff 
	\pre \subseteq \dwlp{\program}{\post}
	\qqiff
	\asp{\program}{\pre} \subseteq \post~.
\end{align*}%
Another program logic that somewhat recently (re)gained attention under the name \emph{incorrectness logic}~\cite{OHearn19} features triples that are exegeted through
\begin{align*}
	\hoare{\pre}{\program}{\post} \textnormal{ is valid for incorr.} \qqiff \post \subseteq \asp{\program}{\pre}~.
\end{align*}%
We see above that for some exegeses we \emph{over}approximate the result of a predicate transformer, for others we \emph{under}approximate it.
In total, we have defined 8 different predicate transformers, which we can each over- and underapproximate, yielding potentially $2 \cdot 8 = 16$ different exegeses of triples and thus program logics.
Some of these will coincide, as we have already seen above for partial correctness.
Others will imply others, yet others will be contrapositives of each other. 
In the following, we give an overview of all 16 possibilities and study their relationships, thus yielding a taxonomy of predicate transformer definable program logics.

\subsection{Program Logics}
\label{ssec:logics}

As described just above, there are 16 possible exegeses of triples $\hoare{\pre}{\program}{\post}$ that can be obtained by over- or underapproximating each of the 8 predicate transformers defined in \Cref{sec:transformer}.
A full overview of these is provided in \Cref{fig:taxonomy}.
The \enquote{colloquial terms} (broadly interpreted) of these exegeses/logics (some more, some less common) are provided immediately below the individual exegeses.
Some of these names are more common than others.
We would like to mention, however, that these names are not necessarily accurately chosen, especially the attribution of being a \emph{correctness} or an \emph{incorrectness} logic\footnote{But we acknowledge that these attributes made sense for the individual originally intended purposes of those logics.}\hspace{-3pt}.\; 
For example, \emph{incorrectness logic} was prior to \cite{OHearn19} introduced by \citet{VK11} under the name \emph{reverse Hoare logic} and was intended for proving that all \emph{good} things can happen -- arguably more of a \emph{correctness} criterion.

Other attributions are being an \emph{over}- or an \emph{under}approximate logic. 
For example, incorrectness logic was attributed underapproximate and partial correctness overapproximate.
But this also is not entirely accurate since the supposedly overapproximate partial correctness can also be exegeted via an underapproximation, namely of $\dwlpsymbol$.
\begin{figure}[t]
\vspace*{-1\intextsep}
\begin{adjustbox}{max width=\textwidth}
	\begin{tikzpicture}[
		mynode/.style={minimum width=2.7cm},node distance=0.9cm and 0.9cm,
		mylabel/.style n args={1}{label={[label distance=-4mm,font=\tiny,text=DodgerBlue3!60,below]#1}}
		]
		\node[mynode, mylabel={Lisbon logic (angelic tot.\ corr.)}](awpLB) {$\pre \subseteq \awp{\program}{\post}$};
		\node[mynode, mylabel={Hoare logic (total correctness)},below=of awpLB](dwpLB) {$\pre \subseteq \dwp{\program}{\post}$};
		\node[mynode, mylabel={angelic partial correctness},right=of awpLB](awlpLB) {$\pre \subseteq \awlp{\program}{\post}$};
		\node[mynode,right=of awlpLB]  (dwpUB) {$\dwp{\program}{\post} \subseteq \pre$};
		\node[mynode,right=of dwpUB]  (dwlpUB) {$\dwlp{\program}{\post} \subseteq \pre$};
		\node[mynode,below=of dwlpUB]  (awlpUB) {$\awlp{\program}{\post} \subseteq \pre$};
		
		\node[mynode, mylabel={Hoare logic (partial correctness)},below=of awlpLB] (dwlpLB) {$\pre \subseteq \dwlp{\program}{\post}$};
		\node[mynode,mylabel={partial incorrectness},right=of dwlpLB] (awpUB) {$\awp{\program}{\post} \subseteq \pre$};
		\node[mynode,mylabel={Hoare logic (partial correctness)}, below=of dwlpLB] (aspUB) {$\asp{\program}{\pre} \subseteq \post$};
		\node[mynode,below=of dwpLB] (aslpUB) {$\aslp{\program}{\pre} \subseteq \post$};
		\node[mynode,below=of aslpUB]  (dslpUB) {$\dslp{\program}{\pre} \subseteq \post$};
		\node[mynode,below=of aspUB] (dspUB) {$\dsp{\program}{\pre} \subseteq \post$};
		
		\node[mynode, mylabel={partial incorrectness},right=of aspUB] (dslpLB) {$\post \subseteq \dslp{\program}{\pre}$};
		\node[mynode, mylabel={demonic incorrectness},right=of dslpLB] (dspLB) {$\post \subseteq \dsp{\program}{\pre}$};
		\node[mynode, mylabel={angelic partial incorrectness},below=of dslpLB] (aslpLB) {$\post \subseteq \aslp{\program}{\pre}$};
		\node[mynode, mylabel={incorrectness logic},below=of dspLB] (aspLB) {$\post \subseteq \asp{\program}{\pre}$};
		
		\draw[contrapos] (awpUB) edge[<->] (dwlpLB);
		\draw[contrapos] (dwpUB) edge[<->] (awlpLB);
		\draw[contrapos,bend left=12] (awpLB) edge[<->] (dwlpUB);
		\draw[contrapos,bend left=12] (dwpLB) edge[<->] (awlpUB);
		
		\draw[contrapos] (aspUB) edge[<->] (dslpLB);
		\draw[contrapos,bend left=12] (aspLB) edge[<->] (dslpUB);
		\draw[contrapos] (dspUB) edge[<->] (aslpLB);
		\draw[contrapos,bend left=12] (dspLB) edge[<->] (aslpUB);
		
		\path (dwlpLB) edge[galoisshort] node {} (aspUB);
		\path (awpUB) edge[galoisshort] node {} (dslpLB);

%
%
%

		\path (dwpLB) edge[implicationup] node {} (awpLB);
		\path (dwpLB) edge[implication] node {} (dwlpLB);
		\path (awpLB) edge[implication] node {} (awlpLB);
		\path (dwlpLB) edge[implicationup] node {} (awlpLB);
		
		\path (awpUB) edge[implication] node {} (dwpUB);
		\path (awlpUB) edge[implication] node {} (awpUB);
		\path (awlpUB) edge[implication] node {} (dwlpUB);
		\path (dwlpUB) edge[implication] node {} (dwpUB);
		
		\path (dspLB) edge[implicationdown] node {} (aspLB);
		\path (dspLB) edge[implication] node {} (dslpLB);
		\path (aspLB) edge[implication] node {} (aslpLB);
		\path (dslpLB) edge[implicationdown] node {} (aslpLB);
		
		\path (aspUB) edge[implicationdown] node {} (dspUB);
		\path (aslpUB) edge[implication] node {} (aspUB);
		\path (aslpUB) edge[implication] node {} (dslpUB);
		\path (dslpUB) edge[implication] node {} (dspUB);

	\end{tikzpicture}
\end{adjustbox}
\caption{
    A taxonomy of predicate transformer-based program logics.
    Black arrows are simple implications, green dotted arrows are contrapositive relations, and orange two-sided arrows are Galois connections.
}
\label{fig:taxonomy}
\end{figure}%
Before we next provide intuitions and background on the individual logics, let us first have a closer look at the \emph{structure} of \Cref{fig:taxonomy}.

\paragraph{The Implications} 
If we divide \Cref{fig:taxonomy} into four quadrants (i.e.\ horizontally and vertically in the middle), we obtain four squares of implications. 
For example (top left quadrant), total correctness implies partial correctness and the Lisbon logic exegesis.
The latter two each imply angelic partial correctness.
For each quadrant, the respective implications stem entirely from the ordering of the predicate transformers provided in \Cref{theo:order-trans}.

\paragraph{The Contrapositions}
If we divide \Cref{fig:taxonomy} vertically into two halves, then every exegesis in one half has a mirrored contrapositive in the other half, indicated by green dotted arrows.
For example, partial correctness and partial incorrectness are contrapositive to each other, meaning that%
\begin{align*}
	\hoare{\pre}{\program}{\post} \textnormal{ is valid for part.\ corr.} 
	\qqiff
	\hoare{\negate{\pre}}{\program}{\negate{\post}} \textnormal{ is valid for part.\ incorr.}
\end{align*}
In that sense, an exegesis and its contrapositive are \enquote{equivalent} and one of the two halves could be discarded.
We would argue, however, that -- depending on the proof objective -- it may well be more intuitive to annotate code in non-negated versions so that program proofs remain more understandable.
All contrapositions of exegesis in \Cref{fig:taxonomy} arise from the contrapositivities of the predicate transformers described in \Cref{theo:complements}.

\paragraph{The Equivalences}
In the very center of \Cref{fig:taxonomy}, we have a square of, respectively, two vertically equivalent exegesis.
For example, partial correctness can be exegeted equivalently through $\pre \subseteq \dwlp{\program}{\post}$ or $\asp{\program}{\pre} \subseteq \post$.
Analogously, partial incorrectness can be exegeted in two equivalent ways.
The partial correctness equivalence stems from the very well-known Galois connection%
\begin{align*}
	\pre \subseteq \dwlp{\program}{\post} &\qqiff \asp{\program}{\pre} \subseteq \post~.
\intertext{%
	The partial incorrectness equivalence stems from the much less well-known Galois connection~\cite{ZhangKaminski22}%
}
	\awp{\program}{\post} \subseteq \pre  &\qqiff \post \subseteq \dslp{\program}{\pre}~.
\end{align*}
Note that such Galois connections are not only of theoretical interest:
\emph{They} allow to reason either forward or backward through a program, whatever is more feasible.
They even allow to reason in both directions simultaneously:
For example, $\hoare{\pre}{\COMPOSE{\program_1}{\program_2}}{\post}$ is valid for partial correctness if $\asp{\program_1}{\pre} \subseteq \dwlp{\program_2}{\post}$, meaning that we have done backward reasoning on $\program_2$ and forward reasoning on $\program_1$.
Analogous bidirectional reasoning can be performed for partial incorrectness.

In the following, we will discuss those logics of \Cref{fig:taxonomy} that have a colloquial name in more detail.
We will proceed more or less in (partial) order of popularity.
As a running example, we use a program $\plogin$ which realizes some login procedure.
The user must enter a password and is then either granted access or not.

\paragraph{Hoare logic for partial correctness \textnormal{\cite{Hoa69}}}
$\hoare{\pre}{\program}{\post}$ is valid for \emph{(demonic) partial correctness} iff $\pre \subseteq \dwlp{\program}{\post}$ or equivalently $\asp{\program}{\pre} \subseteq \post$.
This is the \emph{standard} notion of partial correctness, stating that all computations of $\program$ started in $\pre$ \emph{must} either diverge or terminate in $\post$.

For example, validity of the partial correctness Hoare triple $\hoare{\pwdincorrect}{\plogin}{\noaccess}$ expresses the following:
Should the user provide the wrong password, the login will fail by either terminating in a state where the user is denied access, or diverge.
This is a correctness property as it expresses behavior we would expect from a login procedure (except perhaps the divergence).

The contrapositive triple $\awp{\program}{{\post}} \subseteq {\pre}$ was discussed by \citet{Cousot13} under the name \emph{necessary precondition}.
The intuition for this is that $\pre$ necessarily has to hold for computation to terminate in $\post$.
All other computation is guaranteed to either diverge or terminate outside of $\post$.

\paragraph{Hoare logic for total correctness \textnormal{\cite{Hoa69}}}
$\hoare{\pre}{\program}{\post}$ is valid for \emph{(demonic) total correctness} iff $\pre \subseteq \dwp{\program}{\post}$.
This is the \emph{standard} notion of total correctness, stating that all computations of $\program$ started in $\pre$ \emph{must} terminate in $\post$.
This can be classically used to specify correctness properties of the program.
For example, the triple $\hoare{\pwdcorrect}{\plogin}{\access}$ expresses that if a user enters the correct password, access is definitely granted.

\paragraph{Incorrectness logic \textnormal{\cite{OHearn19,VK11}}}
$\hoare{\pre}{\program}{\post}$ is valid for \emph{(angelic) incorrectness} iff $\post \subseteq \asp{\program}{\pre}$.
This exegesis was popularized by \citet{OHearn19} under the name \emph{incorrectness logic} and ensures that \emph{all} states in $\post$ must be reachable from \emph{some} state in $\pre$.

Why \emph{incorrectness}?
O'Hearn thought of $\post$ as a set of bugs whose (true positive) presence was supposed to be proved.
For example, the triple $\hoare{\true}{\plogin}{\error}$ expresses that the error state is reachable, thus proving the existence of some bug.

As mentioned before, incorrectness logic was described some 8 years earlier by \citet{VK11} under the name \emph{reverse Hoare logic}.
Contrary to O'Hearn, de Vries and Koutavas thought of $\post$ as a set of \emph{desirable} states who should \emph{all} be reachable, thus exemplifying that neither the attribute \emph{incorrectness} nor \emph{correctness} is entirely accurate for this logic.

\paragraph{Lisbon logic}
$\hoare{\pre}{\program}{\post}$ is valid for \emph{angelic total correctness} iff $\pre \subseteq \awp{\program}{\post}$, 
stating that from {all} states in $\pre$, there must \emph{exist} a computation of $\program$ terminating in~$\post$.
%
Regarding its name, \citet{zilberstein2023outcome} describe that such triples have first been described in \cite{DBLP:conf/RelMiCS/MollerOH21} as \emph{backwards under-approximate triples}  and been discussed as a possible foundation for \emph{incorrectness} reasoning during a meeting in Lisbon, hence \emph{Lisbon logic}.
Recently, a proof system for the logic was developed \cite{raad2024nontermination}.
However, underapproximating angelic weakest pre(conditions) has been studied much earlier, for example by \citet{hoare1978some} as \emph{possible correctness} and later by \citet{McIverM05}.
Another name for Lisbon logic is \emph{sufficient incorrectness logic}, due to \citet{ascari2023sufficient}. 

As an example, consider again the triple $\hoare{\pwdcorrect}{\plogin}{\access}$.
Exegeted as angelic total correctness, this expresses that with a correct password, it is always possible to get access \mbox{--
indeed} a \emph{correctness} property.
On the other hand, if $\hoare{\pwdabcd}{\plogin}{\access}$ is valid for angelic total correctness, the password \enquote{1234} can always result in access.
This is likely \emph{incorrect} behavior.
Again, neither \emph{correctness} nor \emph{incorrectness} seem appropriate attributes.

\paragraph{Partial incorrectness \textnormal{\cite{ZhangKaminski22}}}
$\hoare{\pre}{\program}{\post}$ is valid for \emph{(demonic) partial incorrectness} iff $\post \subseteq \dslp{\program}{\pre}$, requiring all states in $\post$ to be either unreachable or \emph{exclusively} reachable from $\pre$.
In other words, computation starting in $\negate{\pre}$ may only terminate in $\negate{\post}$, or: $\hoare{\negate{\pre}}{\program}{\negate{\post}}$ is valid for demonic partial correctness.
This is not surprising because of contrapositivity.

\paragraph{Angelic partial correctness}
$\hoare{\pre}{\program}{\post}$ is valid for \emph{angelic partial correctness} iff $\pre \subseteq \awlp{\program}{\post}$,
stating that for all states in $\pre$, there must exist either a computation of $\program$ terminating in $\post$, or the possibility of $\program$ to diverge.
As the name suggests, this is very closely related to angelic total correctness, the only difference being that divergence is deemed acceptable.

This logic can be used to identify states from which divergence is possible by choosing the post to be empty. 
\citet{raad2024nontermination} emphasize the relevance of reasoning about divergence, defining an \emph{under-approximate non-termination logic} which aligns with the aforementioned angelic partial correctness triple $\hoare{\pre}{\program}{\false}$.


\paragraph{Demonic incorrectness}
$\hoare{\pre}{\program}{\post}$ is valid for \emph{demonic incorrectness} iff $\post \subseteq \dsp{\program}{\pre}$,
stating that all final states in $\post$ must be \emph{exclusively} reachable from the states in $\pre$.
In particular, all states in~$\post$ must be reachable.
If we waive this requirement, we end up with partial incorrectness.
This is dual to going from total to partial correctness by waiving the termination requirement.

To the best of our knowledge, demonic incorrectness logic has not been described in the literature before.
It can be used, for example, to restrict the initial states from which errors can occur.
This is similar to the motivation for outcome logic and sufficient incorrectness logic \cite{zilberstein2023outcome,ascari2023sufficient}.
If $\hoare{\pwdincorrect}{\plogin}{\error}$ is valid for demonic incorrectness, errors can only be reached when entering the wrong password, possibly making a bug at hand less critical.%

\paragraph{Angelic partial incorrectness}
$\hoare{\pre}{\program}{\post}$ is valid for \emph{angelic partial incorrectness} iff $\post \subseteq \aslp{\program}{\pre}$,
stating that all states in the $\post$ are either unreachable or reachable from a state in $\pre$.

To the best of our knowledge, this is also a novel logic.
If $\hoare{\pwdincorrect}{\plogin}{\access}$ is valid for angelic partial incorrectness, we know that all states where access was granted are either entirely unreachable or can be reached with an incorrect password, which is definitely undesired.

\subsection{Related Taxonomies}
\label{ssec:comparison}

Several other works have developed taxonomies for program logics.
Both \citet{ZhangKaminski22} and \citet{ascari2023sufficient} concentrate on logics based on angelic semantics, which align with our \awpsymbol\ and \aspsymbol\ transformers.
\citet{cousot2024calculational} presents a framework that defines program logics by applying various abstraction functions to a collecting semantics.
Even though the semantics in this paper in essence is also angelic, divergence is explicitly represented by the symbol $\bot$.
In this way, demonic variants of correctness logics are expressible as well.

The abstract interpretation approach offers a versatile and expressive structure for capturing a wide range of logics.
Many of these logics, including several if not all of the prominent ones, are organized in a 4-dimensional cube-like schema (see \cite[Figure 3]{cousot2024calculational}).
At first glance, this cube does not appear to resemble our taxonomy in \Cref{fig:taxonomy}.
However, a closer inspection reveals that the logics occupying the upper half of the cube once again correspond to those defined using the \awpsymbol\ and \aspsymbol\ transformers, as well as their contrapositives.
The main distinction to the corresponding fragment of \Cref{fig:taxonomy} lies in how the logics are arranged.

When comparing the cube to our planar diagram, the sets at the cube's upper corners correspond to \awpsymbol, \aspsymbol, \dwlpsymbol, and \dslpsymbol\ for a concrete pre- or postcondition.
The cube's arrangement emphasizes the symmetry between logics through \emph{set inclusions}, aligning with the close connection between Hoare logic and incorrectness logic.
This particular symmetry is less immediate in our diagram, which instead emphasizes the \emph{implications} and \emph{contrapositives} across different logics.

The logics in the lower half of Cousot's cube are more challenging to align with our taxonomy.
These logics intuitively mirror their counterparts in the upper half but introduce additional conditions related to termination behavior, allowing, for instance, the expression of \emph{demonic} partial correctness.
However, since the interpretation of these lower-half logics depends on whether nontermination (represented by $\bot$) is included in the postcondition, they are somewhat incompatible with our logics.
In particular, we do not consider incorrectness logics with termination constraints.

Notably absent from Cousot’s cube are the two novel predicate transformers we proposed - demonic strongest post and angelic strongest liberal post - and the corresponding logics.
However, the logics represented in the cube are only a subset of what is possible within the framework of abstract interpretation.
We conjecture that these new logics could be accommodated by extending the framework to include an additional symbol representing unreachability, dual to $\bot$ for divergence.

Cousot additionally introduces what he calls \emph{Hoare incorrectness logic}, which is included in the cube but does not neatly fit within its structure.
This logic represents the negation of a standard Hoare triple.
Specifically, $\hoare{\pre}{\program}{\post}$ is valid for Hoare incorrectness if it is \emph{not} valid for partial Hoare logic.
Intuitively, this means that there exists an execution of $\program$ starting in $\pre$ and terminating outside of $\post$.
Unlike traditional Hoare logic or incorrectness logic, Hoare incorrectness logic does not require conditions to hold for \emph{all} states in the precondition or postcondition; it only requires the existence of a single execution path violating the postcondition.
In our framework, this corresponds to the condition $\awp{\program}{\negate{\post}} \cap \pre \neq \emptyset$ \cite{verscht2023hoare}.
So, while in principle expressible in our framework, Hoare incorrectness logic does not align with the structure of the other logics — just as it does not fit within the cube.

\subsection{On the Impact of Additional Assumptions}
\label{ssec:assumptions}


It is well known that under the assumption of program \emph{termination} (say on all initial states), standard partial and total correctness coincide, or rather \emph{collapse} to one notion.
More symbolically,%
\begin{align*}
	\textnormal{termination} \qqimplies \textnormal{partial corr.} \morespace{\Longleftrightarrow} \textnormal{total corr.}
\end{align*}
In the following, we will inspect what other of the 16 logics from \Cref{fig:taxonomy} collapse under the assumption of termination, and we will explore three more natural assumptions which will make yet other logics collapse, namely \emph{reachability}, \emph{determinism}, and \emph{reversibility}.


\subsubsection{Termination}
\label{sssec:termination}

Liberality in weakest precondition transformers is about whether they deem nontermination acceptable behavior or not.
If $\program$ terminates on all states, then $\dwlpsymbol$ and $\dwpsymbol$ coincide for all postconditions.
The same goes for $\awlpsymbol$ and $\awpsymbol$.
Formally, we have the following theorem:%
\begin{theorem}[Termination Collapse]
	\label{theo:termination}
	Let $\program$ be a program that \emph{must} terminate on all initial states.
	Then
	\[
		\dwp{\program}{\post} \eeq \dwlp{\program}{\post}
	\qqand
		\awp{\program}{\post} \eeq \awlp{\program}{\post}~.
	\]
\end{theorem}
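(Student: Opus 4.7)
The plan is to simply unfold the definitions of the four predicate transformers from \Cref{def:wp} and \Cref{def:wlp} and observe that the universal termination assumption eliminates precisely the case splits that distinguish liberal from non-liberal variants.

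First, I would fix an arbitrary initial state $\sigma$ and recall that the non-liberal transformer $\dwpsymbol$ is defined by a two-case split: it returns $\false$ when $\program$ can diverge on $\sigma$, and $\bigwedge_{\tau \in \sem{\program}{\sigma}} \post(\tau)$ otherwise. Dually, $\awlpsymbol$ returns $\true$ when $\program$ can diverge on $\sigma$, and $\bigvee_{\tau \in \sem{\program}{\sigma}} \post(\tau)$ otherwise. In contrast, $\dwlpsymbol$ and $\awpsymbol$ are defined uniformly as $\bigwedge_{\tau \in \sem{\program}{\sigma}} \post(\tau)$ and $\bigvee_{\tau \in \sem{\program}{\sigma}} \post(\tau)$ respectively, with no case distinction.

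The key step is then to interpret the assumption that $\program$ \emph{must} terminate on all initial states as saying that the divergence case never applies: for every $\sigma$, it is not the case that $\program$ can diverge on $\sigma$. Consequently, the first branch in the definition of $\dwp{\program}{\post}$ is vacuous, so $\dwp{\program}{\post}(\sigma)$ reduces to $\bigwedge_{\tau \in \sem{\program}{\sigma}} \post(\tau) = \dwlp{\program}{\post}(\sigma)$. By the same argument, the first branch of $\awlp{\program}{\post}$ is vacuous, yielding $\awlp{\program}{\post}(\sigma) = \bigvee_{\tau \in \sem{\program}{\sigma}} \post(\tau) = \awp{\program}{\post}(\sigma)$. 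Since $\sigma$ was arbitrary, both equalities hold pointwise.

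There is essentially no real obstacle here; the result is a direct corollary of the definitions. The only subtlety worth making explicit is the reading of \emph{must terminate}: it is the statement that the nondeterministic semantics $\sem{\program}{\sigma}$ contains no diverging computation for any $\sigma$, which is exactly the negation of the guard \enquote{$\program$ can diverge on $\sigma$} appearing in both \Cref{def:wp} and \Cref{def:wlp}.
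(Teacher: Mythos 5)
Your proof is correct and follows essentially the same route as the paper's: a pointwise unfolding of \Cref{def:wp,def:wlp}, using the termination assumption to discharge the divergence branch of the case split (the paper writes out only the $\awlpsymbol = \awpsymbol$ case, with the demonic case being dual, exactly as you argue). The only cosmetic remark is that the divergence guard is best phrased as ``no diverging path from $\sigma$'' rather than as a property of the collecting semantics $\sem{\program}{\sigma}$, which by \Cref{rem:diverge-semantics} does not record divergence.
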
%
\begin{proof}
	See \Cref{ssec:proof-termination}. 
\end{proof}%
\noindent%
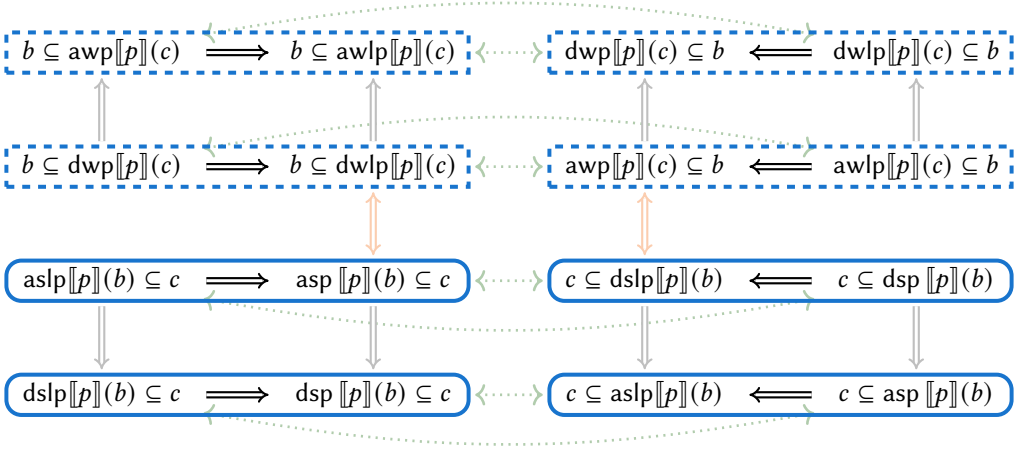
\begin{figure}[t]
	\begin{adjustbox}{max width=\textwidth}
		\begin{tikzpicture}[mynode/.style={minimum width=2.7cm},node distance=0.9cm and 0.9cm]
			\node[mynode](awpLB) {$\pre \subseteq \awp{\program}{\post}$};
			\node[mynode,below=of awpLB](dwpLB) {$\pre \subseteq \dwp{\program}{\post}$};
			\node[mynode,right=of awpLB](awlpLB) {$\pre \subseteq \awlp{\program}{\post}$};
			\node[mynode,right=of awlpLB]  (dwpUB) {$\dwp{\program}{\post} \subseteq \pre$};
			\node[mynode,right=of dwpUB]  (dwlpUB) {$\dwlp{\program}{\post} \subseteq \pre$};
			\node[mynode,below=of dwlpUB]  (awlpUB) {$\awlp{\program}{\post} \subseteq \pre$};
			
			\node[mynode,below=of awlpLB] (dwlpLB) {$\pre \subseteq \dwlp{\program}{\post}$};
			\node[mynode,right=of dwlpLB] (awpUB) {$\awp{\program}{\post} \subseteq \pre$};
			\node[mynode,below=of dwlpLB] (aspUB) {$\asp{\program}{\pre} \subseteq \post$};
			\node[mynode,below=of dwpLB] (aslpUB) {$\aslp{\program}{\pre} \subseteq \post$};
			\node[mynode,below=of aslpUB]  (dslpUB) {$\dslp{\program}{\pre} \subseteq \post$};
			\node[mynode,below=of aspUB] (dspUB) {$\dsp{\program}{\pre} \subseteq \post$};
			
			\node[mynode,right=of aspUB] (dslpLB) {$\post \subseteq \dslp{\program}{\pre}$};
			\node[mynode,right=of dslpLB] (dspLB) {$\post \subseteq \dsp{\program}{\pre}$};
			\node[mynode,below=of dslpLB] (aslpLB) {$\post \subseteq \aslp{\program}{\pre}$};
			\node[mynode,below=of dspLB] (aspLB) {$\post \subseteq \asp{\program}{\pre}$};
			
			\draw[weakcontrapos] (awpUB) edge[<->] (dwlpLB);
			\draw[weakcontrapos] (dwpUB) edge[<->] (awlpLB);
			\draw[weakcontrapos,bend left=10] (awpLB) edge[<->] (dwlpUB);
			\draw[weakcontrapos,bend left=10] (dwpLB) edge[<->] (awlpUB);
			
			\draw[weakcontrapos] (aspUB) edge[<->] (dslpLB);
			\draw[weakcontrapos,bend left=10] (aspLB) edge[<->] (dslpUB);
			\draw[weakcontrapos] (dspUB) edge[<->] (aslpLB);
			\draw[weakcontrapos,bend left=10] (dspLB) edge[<->] (aslpUB);
			
			\path (aspUB) edge[weakgalois] node {} (dwlpLB);
			\path (awpUB) edge[weakgalois] node {} (dslpLB);		
			
			\path (dwpLB) edge[weakimplication] node {} (awpLB);
			\path (dwpLB) edge[implication] node {} (dwlpLB);
			\path (awpLB) edge[implication] node {} (awlpLB);
			\path (dwlpLB) edge[weakimplication] node {} (awlpLB);
			
			\path (awpUB) edge[weakimplication] node {} (dwpUB);
			\path (awlpUB) edge[implication] node {} (awpUB);
			\path (awlpUB) edge[weakimplication] node {} (dwlpUB);
			\path (dwlpUB) edge[implication] node {} (dwpUB);
			
			\path (dspLB) edge[weakimplication] node {} (aspLB);
			\path (dspLB) edge[implication] node {} (dslpLB);
			\path (aspLB) edge[implication] node {} (aslpLB);
			\path (dslpLB) edge[weakimplication] node {} (aslpLB);
			
			\path (aspUB) edge[weakimplication] node {} (dspUB);
			\path (aslpUB) edge[implication] node {} (aspUB);
			\path (aslpUB) edge[weakimplication] node {} (dslpUB);
			\path (dslpUB) edge[implication] node {} (dspUB);
			
			\def\innerxsepp{-2.5pt}
			\def\innerysepp{-1pt}
			\def\roundedcorn{0.1cm}
			\node (anim1) [collapsewp,fit = {(awpLB)(awlpLB)}] {}; 
			\node (anim1) [collapsewp,fit = {(dwpLB)(dwlpLB)}] {}; 
			\node (anim1) [collapsewp,fit = {(awpUB)(awlpUB)}] {}; 
			\node (anim1) [collapsewp,fit = {(dwpUB)(dwlpUB)}] {};
		
			\node (anim1) [collapsesp,fit = {(aspLB)(aslpLB)}] {}; 
			\node (anim1) [collapsesp,fit = {(dspLB)(dslpLB)}] {}; 
			\node (anim1) [collapsesp,fit = {(aspUB)(aslpUB)}] {}; 
			\node (anim1) [collapsesp,fit = {(dspUB)(dslpUB)}] {};	
		\end{tikzpicture}
	\end{adjustbox}
	\caption{
		Collapse of program logics under assumptions regarding totality and partiality.
		Blue \emph{dashed} boxes enclose logics that collapse (i.e.\ the implications become equivalences) under \emph{termination}.
		Blue \emph{solid} boxes enclose logics that collapse under \emph{reachability}.
	}
	\label{fig:taxonomy-termination-reachability}
\end{figure}%
If two transformers $\textsf{t}_1$ and $\textsf{t}_2$ coincide, then two logics whose definitions are equal up to whether their definition invokes $\textsf{t}_1$ or $\textsf{t}_2$ immediately collapse into a single logic. 
Hence, for instance, partial and total correctness collapse to one notion.
Logics that collapse are illustrated by the dashed boxes in \Cref{fig:taxonomy-termination-reachability}.
We see that in the upper half, logics collapse along a horizontal axis.
The bottom half of the logics is unaffected by termination, as strongest post based logics are incapable of reasoning about termination or divergence.

For the purpose of logics collapsing, note that we can loosen the requirement to computations starting in initial states of interest, i.e.\ those satisfying the precondition $\pre$.
Concretely, if a program $\program$ terminates on all initial states satisfying $\pre$, then $\pre \subseteq \dwp{\program}{\post} \iff \pre \subseteq \dwlp{\program}{\post}$.
The same holds for the other $\wpsymbol$ based logics.

We can moreover use the predicate transformers to express termination properties, which follows directly from \Cref{theo:termination}.%
\begin{corollary}[Expressing Termination Properties]
	\label{theo:termination-properties}
	A program $\program$ \emph{must} terminate on all initial states\footnote{%
		One could be a bit more fine-grained here and distinguish between \emph{may} and \emph{must} termination.
		While must implies may termination, we could only check for may termination, which is the case if and only if%
		\[
			\awp{\program}{\true} = \true
		\qqorequiv
			\awlp{\program}{\false} = \false.
		\]
		In that case, only the angelic but not necessarily the demonic weakest precondition transformers would coincide (cf.~\Cref{theo:termination}) and in \Cref{fig:taxonomy-termination-reachability} only the dashed boxes containing angelic transformers would be present.
	} if and only if
	\[
		\dwp{\program}{\true} \eeq \true
	\qqorequiv
		\dwlp{\program}{\false} \eeq \false.
	\]%
\end{corollary}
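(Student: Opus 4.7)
The plan is to derive the corollary directly from Theorem 4.2 and the definitions of $\dwp$ and $\dwlp$, by specialising to the trivial postconditions $\true$ and $\false$ so that the predicate-transformer computations collapse to vacuous conjunctions.

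First, for the equivalence with $\dwp{\program}{\true} = \true$: under must-termination, $\dwlp{\program}{\true}(\sigma) = \bigwedge_{\tau \in \sem{\program}{\sigma}} \true = \true$ at every state, and Theorem 4.2 gives $\dwp{\program}{\true} = \dwlp{\program}{\true} = \true$. Conversely, the definition of $\dwp$ immediately rules out divergence: $\dwp{\program}{\true}(\sigma) = \false$ whenever $\program$ can diverge on $\sigma$, so the assumption $\dwp{\program}{\true} = \true$ forbids divergence on any state, which is precisely must-termination.

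Second, for the equivalence with $\dwlp{\program}{\false} = \false$: under must-termination, $\sem{\program}{\sigma}$ is non-empty at every $\sigma$, so $\dwlp{\program}{\false}(\sigma) = \bigwedge_{\tau \in \sem{\program}{\sigma}} \false = \false$. For the converse, I would route through the $\dwp$ equivalence already established: under must-termination Theorem 4.2 collapses $\dwlp{\program}{\false}$ and $\dwp{\program}{\false}$ to the same predicate, so that the two conditions $\dwp{\program}{\true} = \true$ and $\dwlp{\program}{\false} = \false$ are simultaneously recognised as characterisations of must-termination.

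The main obstacle I anticipate is this last converse step, because $\dwlp$ is by design blind to nondeterministic branching divergence: as pointed out in Remark 2.2 and Observation 3.11, the collecting semantics conflates states that can both terminate and diverge with states that only terminate. The argument must therefore be routed through $\dwp$ and Theorem 4.2 rather than read off the definition of $\dwlp$ in isolation; this is the only genuinely non-trivial step in an otherwise unfolding-and-specialising proof.
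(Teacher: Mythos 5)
Your first equivalence (must-termination iff $\dwp{\program}{\true} = \true$) is correct, and both directions are argued exactly in the spirit of the paper, which offers no more than ``follows directly from \Cref{theo:termination}'': the collapse theorem gives the forward direction, and the divergence case in the definition of $\dwpsymbol$ gives the converse. The forward direction of your second equivalence (must-termination implies $\dwlp{\program}{\false} = \false$) is also fine.

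The converse of the second equivalence is a genuine gap, and your proposed repair does not close it: you invoke \Cref{theo:termination} ``under must-termination'' to identify $\dwlp{\program}{\false}$ with $\dwp{\program}{\false}$, but must-termination is precisely the conclusion you are trying to extract from $\dwlp{\program}{\false} = \false$, so the argument is circular. Moreover, no argument can close it, because that direction fails in the presence of branching divergence: for $\program = \NDCHOICE{\SKIP}{\WHILEDO{\true}{\SKIP}}$ one has $\sem{\program}{\sigma} = \{\sigma\}$ for every $\sigma$, hence $\dwlp{\program}{\false} = \false$, yet $\program$ can diverge on every initial state. Unwinding \Cref{def:wlp}, $\dwlp{\program}{\false}(\sigma) = \false$ says only that $\sem{\program}{\sigma} \neq \emptyset$, i.e.\ that \emph{some} computation from $\sigma$ terminates (may-termination); must-termination is characterized on the liberal side by $\awlp{\program}{\false} = \false$, since $\awlp{\program}{\false}(\sigma) = \true$ exactly when $\program$ can diverge on $\sigma$. (The corollary as printed, and its footnote, swap these two liberal transformers; the $\dwlp{\program}{\false} = \false$ characterization of must-termination only becomes correct under determinism or, more generally, absence of branching divergence.) Your instinct that $\dwlpsymbol$ is blind to branching divergence was exactly right --- but the correct consequence is that this direction needs $\awlpsymbol$, not a detour through \Cref{theo:termination}.
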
%
\noindent%
\Cref{theo:termination} and \Cref{theo:termination-properties} together yield precisely the well-known technique of proving partial correctness and proving termination separately in order to obtain total correctness.

\subsubsection{Reachability}
\label{sssec:reachability}

For weakest preconditions, we saw that -- under the assumption of \mbox{termination --} liberal and non-liberal weakest preconditions coincide.
This immediately raises the question whether such a criterion can be found for strongest postcondition transformers to coincide.
The answer is yes: \emph{reachability}.
If all (final) states are reachable from some initial state by some computation of $\program$, then liberal and non-liberal strongest postconditions coincide:%
\begin{theorem}[Reachability Collapse]
	\label{theo:reachability}
	Let all final states be reachable by some computation of a program $\program$.
	Then%
	\[
		\asp{\program}{\pre} \eeq \aslp{\program}{\pre}
	\qqand
		\dsp{\program}{\pre} \eeq \dslp{\program}{\pre}~.
	\]
\end{theorem}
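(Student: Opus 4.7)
The plan is to prove the theorem directly from the definitions of the four strongest postcondition transformers (Definitions~\ref{def:asp}, \ref{def:dslp}, \ref{def:dsp}, \ref{def:aslp}), observing that the only place where the liberal and non-liberal variants disagree is on the unreachable final states, which the hypothesis excludes.

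First, I would formalize the assumption: ``all final states are reachable by some computation of $\program$'' means that for every $\tau \in \States$, the set $\seminv{\program}{\tau} = \setcomp{\sigma}{\tau \in \sem{\program}{\sigma}}$ is nonempty. This is the forward analogue of termination on all initial states, and it is exactly the negation of the unreachability case distinguished in Definitions~\ref{def:dslp} and \ref{def:aslp}.

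Next, I would fix an arbitrary final state $\tau$ and a precondition $\pre$, and compare the two sides of each equation pointwise. For the angelic case, Definition~\ref{def:asp} gives $\asp{\program}{\pre}(\tau) = \bigvee_{\sigma \in \seminv{\program}{\tau}} \pre(\sigma)$, while Definition~\ref{def:aslp} gives the same disjunction whenever $\seminv{\program}{\tau} \neq \emptyset$ and $\true$ otherwise. Under the hypothesis, the first clause of $\aslpsymbol$ always applies, so the two definitions literally coincide. For the demonic case, the argument is entirely symmetric using Definitions~\ref{def:dsp} and~\ref{def:dslp}: the conjunctive formula matches whenever $\seminv{\program}{\tau} \neq \emptyset$, and the hypothesis forbids the other case.

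There is really no obstacle here; the proof is a one-line case distinction that is eliminated by the hypothesis. The conceptual point worth stating explicitly, however, is the asymmetry with \Cref{theo:termination}: there, divergence of $\program$ on some $\sigma$ can coexist with terminating computations out of the same $\sigma$ (branching divergence), so termination must be demanded of \emph{all} computations; here, by \Cref{obs:branching} (absence of unreachability confluence), $\tau$ is either reachable or not, so the mere existence of \emph{some} computation reaching $\tau$ already rules out the liberal special case. This is why the reachability hypothesis of \Cref{theo:reachability} needs no ``must/may'' qualifier analogous to footnote regarding \Cref{theo:termination-properties}.
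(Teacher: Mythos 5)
Your proposal is correct and matches the paper's own proof: both argue pointwise that the reachability hypothesis makes $\seminv{\program}{\tau}$ nonempty for every final state $\tau$, so the case distinction in the liberal definitions (\Cref{def:aslp}, \Cref{def:dslp}) collapses and the angelic pair coincide, with the demonic pair handled dually. The extra remark contrasting this with \Cref{theo:termination} via \Cref{obs:branching} is a nice observation but not needed for the proof itself.
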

\begin{proof}
	See \Cref{ssec:proof-reachability}. 
\end{proof}
\noindent%
As in the case for termination, coincidence of transformers causes the associated logics to collapse. 
Logics that collapse for reachability are illustrated by the solid boxes in \Cref{fig:taxonomy-termination-reachability}.
The top half of the logics is unaffected by reachability.

Reachability of \emph{all} final states is a \emph{very} strong (and perhaps sometimes even undesired) assumption.
As a silver lining, for the collapse of logics, we can loosen this requirement to the final states satisfying $\post$, similar to what we have seen for termination.

In \Cref{theo:termination}, we had to assume that \emph{all} computations of $\program$ terminate.
In contrast to this, \Cref{theo:reachability} only requires a final state to be reachable by \emph{some} computation.
Again, this difference is caused by \Cref{obs:branching}: A final state can either be unreachable or not, but never both.

As with termination, we can express reachability in terms of predicate transformers:%
\begin{corollary}[Expressing Reachability Properties]
	\label{theo:reachability-properties}
	All final states are reachable from some initial state by some computation of program $\program$, if and only if
	\begin{align*}
		&\dsp{\program}{\true}(\tau) = \true
	\qqorequiv
		\dslp{\program}{\false}(\tau) = \false
	\qqorequiv\hspace*{-2em}\\
		&\asp{\program}{\true}(\tau) = \true
	\qqorequiv
		\aslp{\program}{\false}(\tau) = \false.
	\end{align*}
\end{corollary}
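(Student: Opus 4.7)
The plan is to reduce each of the four equations to the single elementary statement that the set $\seminv{\program}{\tau}$ of predecessors of $\tau$ under $\program$ is nonempty, i.e.\ that $\tau$ is reachable. Since the claim in the corollary universally quantifies over all final states $\tau$, it then follows that $\tau$ being reachable for every $\tau$ is the reachability assumption. Concretely, I would prove the chain
\[
\text{$\tau$ is reachable}
\iff \asp{\program}{\true}(\tau) = \true
\iff \aslp{\program}{\false}(\tau) = \false
\iff \dsp{\program}{\true}(\tau) = \true
\iff \dslp{\program}{\false}(\tau) = \false,
\]
which immediately gives the corollary. This structure mirrors \Cref{theo:termination-properties}, so I would position the argument dually.

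The proof is essentially unfolding \Cref{def:asp,def:aslp,def:dsp,def:dslp} on the constant predicates $\true$ and $\false$. For the angelic \emph{non}-liberal case, $\asp{\program}{\true}(\tau) = \bigvee_{\sigma \in \seminv{\program}{\tau}} \true$, and an empty supremum over Booleans is $\false$ while a non-empty one is $\true$; hence $\asp{\program}{\true}(\tau) = \true$ iff $\seminv{\program}{\tau} \neq \emptyset$. Dually, $\dslp{\program}{\false}(\tau) = \bigwedge_{\sigma \in \seminv{\program}{\tau}} \false$, which is $\false$ precisely when the index set is non-empty. For the remaining two transformers, the explicit case splits in \Cref{def:dsp,def:aslp} short-circuit the argument: if $\seminv{\program}{\tau} = \emptyset$, the definitions directly give $\dsp{\program}{\true}(\tau) = \false$ and $\aslp{\program}{\false}(\tau) = \true$, while otherwise the quantifier branches reduce to $\bigwedge \true = \true$ and $\bigvee \false = \false$, respectively.

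There is no real obstacle here; the work is all in the unfolding. The conceptual point I would highlight, however, is that this corollary succeeds where the analogous ``express termination by an equation'' question for $\wpsymbol$ required the stronger \emph{must}-formulation of \Cref{theo:termination-properties}. The reason is precisely \Cref{obs:branching}: for a fixed final state $\tau$, reachability is a genuinely two-valued property ($\seminv{\program}{\tau}$ is either empty or not), so a single equation per transformer suffices, and angelic and demonic characterizations land on the same fact. Thus, unlike termination, where may- and must-variants genuinely differ, reachability admits a uniform treatment across all four \spsymbol-style transformers, which is the real content of the corollary.
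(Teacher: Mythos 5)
Your proof is correct and takes essentially the same route as the paper: a pointwise reduction of each of the four equations to nonemptiness of $\seminv{\program}{\tau}$ by unfolding the transformer definitions on the constant predicates $\true$ and $\false$. The only cosmetic difference is that the paper derives two of the four equivalences via the ordering of transformers (\Cref{theo:order-trans}) instead of unfolding all four definitions directly, which changes nothing of substance.
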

\begin{proof}
	See \Cref{ssec:proof-reachability-properties}. 
\end{proof}%
This interestingly differs from the analogous corollary for termination (\Cref{theo:termination-properties}), where we had to distinguish between \emph{may} and \emph{must} termination (and opted for must as it is more general).
This is again rooted in \Cref{obs:branching}, as the existence of a path to a final state is equivalent to the final state being reachable.

As with total correctness, \Cref{theo:reachability} and \Cref{theo:reachability-properties} together yield a technique for proving incorrectness by proving angelic partial incorrectness and reachability.

\subsubsection{Determinism}
\label{sssec:determinism}

So far, we have seen how to produce \enquote{horizontal collapses} in \Cref{fig:taxonomy} by assuming that total and partial predicate transformers coincide.
Now, we will see that we can get similar results for angelic and demonic transformers, thus producing \enquote{vertical collapses}.
For weakest preconditions, angelic and demonic weakest preconditions differ in their treatment of explicit nondeterministic branching of the program.
Therefore, it is evident that when restricting to \emph{deterministic} programs (even syntactically), they should be equivalent.\footnote{Of course, determinism is also a semantic property, but we will not go into this as syntactic determinism obviously implies semantic determinism and semantic determinism is difficult to reason about.}%
\begin{theorem}[Determinism Collapse]
	\label{theo:determinism}
	If program $\program$ is \emph{deterministic} then
	\[
		\awp{\program}{\post} \eeq \dwp{\program}{\post}
	\qqand
		\awlp{\program}{\post} \eeq \dwlp{\program}{\post}~.
	\]
\end{theorem}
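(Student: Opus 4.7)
The plan is to reduce everything to a pointwise case analysis in the initial state $\sigma$, exploiting the fact that syntactic determinism (i.e., absence of the $\NDCHOICE{\cdot}{\cdot}$ constructor) forces $\sem{\program}{\sigma}$ to contain at most one element. Concretely, I would first establish the following auxiliary lemma by structural induction on $\program$: for every deterministic $\program$ and every initial state $\sigma$, we have $|\sem{\program}{\sigma}| \le 1$. The base cases ($\SKIP$, $\ASSIGN{x}{\ee}$) are immediate from the collecting semantics, sequential composition and conditional choice preserve the property since the conditional branches on disjoint guards $\iguard$ and $\inguard$, and the loop case follows from the fact that the least fixed point $\lfp X\mydot \{\sigma\} \cup (\eval{\program}\circ\eval{\guard})X$ reached from a singleton is, by induction hypothesis, a chain of sets each of size at most one, whose union (after filtering by $\inguard$) has size at most one as well. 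Since $\program$ has no $\NDCHOICE$ constructor, the nondeterministic choice rule never fires.

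With this lemma in hand, fix $\sigma$ and distinguish two cases. \emph{Case 1:} $\sem{\program}{\sigma} = \emptyset$, i.e.\ $\program$ diverges on $\sigma$. Then by \Cref{def:wp}, $\awp{\program}{\post}(\sigma) = \bigvee_{\tau\in\emptyset}\post(\tau) = \false$ and $\dwp{\program}{\post}(\sigma) = \false$ by the first case of $\dwpsymbol$, so the two agree; likewise, by \Cref{def:wlp}, $\awlp{\program}{\post}(\sigma) = \true$ by the first case of $\awlpsymbol$ and $\dwlp{\program}{\post}(\sigma) = \bigwedge_{\tau\in\emptyset}\post(\tau) = \true$, so these two agree as well. \emph{Case 2:} $\sem{\program}{\sigma} = \{\tau\}$ for a unique $\tau$, meaning $\program$ does not diverge on $\sigma$. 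Then all four expressions collapse to $\post(\tau)$, since both the $\bigvee$ and the $\bigwedge$ over a singleton yield $\post(\tau)$ and the diverge-cases of $\dwpsymbol$ and $\awlpsymbol$ are not triggered.

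The only genuinely delicate step is the inductive lemma, and within it the loop case, which requires care because the collecting semantics of the loop is defined via a least fixed point and we need to ensure that determinism is preserved through iteration. The cleanest way I see is to argue that the $n$-fold unfolding of the loop body is itself built only from deterministic constructs, so the induction hypothesis applies at each finite approximant, and then to note that for a fixed $\sigma$ at most one approximant can contribute a terminating state (namely the first iteration at which $\inguard$ holds). All other steps are routine bookkeeping against \Cref{def:wp} and \Cref{def:wlp}.
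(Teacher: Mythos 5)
Your proposal is correct and follows essentially the same route as the paper: a pointwise case distinction on whether $\sem{\program}{\sigma}$ is empty (divergence, where $\awpsymbol,\dwpsymbol$ both give $\false$ and $\awlpsymbol,\dwlpsymbol$ both give $\true$) or a singleton (where all four collapse to $\post(\tau)$), the paper treating the fact that deterministic programs yield $|\sem{\program}{\sigma}|\le 1$ as given and dispatching the liberal case by duality. Your extra structural-induction lemma only adds rigor the paper omits; just note that in the loop case the fixed-point approximants accumulate all loop-head states and so are not each of size at most one --- the correct observation is the one you make afterwards, that at most one state of the resulting chain can satisfy $\negate{\guard}$ and survive the final filtering.
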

\begin{proof}
	See \Cref{ssec:proof-determinism}.
\end{proof}
\noindent%
\begin{figure}[t]
	\begin{adjustbox}{max width=\textwidth}
		\begin{tikzpicture}[mynode/.style={minimum width=2.7cm},node distance=0.9cm and 0.9cm]
			\node[mynode](awpLB) {$\pre \subseteq \awp{\program}{\post}$};
			\node[mynode,below=of awpLB](dwpLB) {$\pre \subseteq \dwp{\program}{\post}$};
			\node[mynode,right=of awpLB](awlpLB) {$\pre \subseteq \awlp{\program}{\post}$};
			\node[mynode,right=of awlpLB]  (dwpUB) {$\dwp{\program}{\post} \subseteq \pre$};
			\node[mynode,right=of dwpUB]  (dwlpUB) {$\dwlp{\program}{\post} \subseteq \pre$};
			\node[mynode,below=of dwlpUB]  (awlpUB) {$\awlp{\program}{\post} \subseteq \pre$};
			
			\node[mynode,below=of awlpLB] (dwlpLB) {$\pre \subseteq \dwlp{\program}{\post}$};
			\node[mynode,right=of dwlpLB] (awpUB) {$\awp{\program}{\post} \subseteq \pre$};
			\node[mynode,below=of dwlpLB] (aspUB) {$\asp{\program}{\pre} \subseteq \post$};
			\node[mynode,below=of dwpLB] (aslpUB) {$\aslp{\program}{\pre} \subseteq \post$};
			\node[mynode,below=of aslpUB]  (dslpUB) {$\dslp{\program}{\pre} \subseteq \post$};
			\node[mynode,below=of aspUB] (dspUB) {$\dsp{\program}{\pre} \subseteq \post$};
			
			\node[mynode,right=of aspUB] (dslpLB) {$\post \subseteq \dslp{\program}{\pre}$};
			\node[mynode,right=of dslpLB] (dspLB) {$\post \subseteq \dsp{\program}{\pre}$};
			\node[mynode,below=of dslpLB] (aslpLB) {$\post \subseteq \aslp{\program}{\pre}$};
			\node[mynode,below=of dspLB] (aspLB) {$\post \subseteq \asp{\program}{\pre}$};
			
			\draw[weakcontrapos] (awpUB) edge[<->] (dwlpLB);
			\draw[weakcontrapos] (dwpUB) edge[<->] (awlpLB);
			\draw[weakcontrapos,bend left=10] (awpLB) edge[<->] (dwlpUB);
			\draw[weakcontrapos,bend left=10] (dwpLB) edge[<->] (awlpUB);
			
			\draw[weakcontrapos] (aspUB) edge[<->] (dslpLB);
			\draw[weakcontrapos,bend left=10] (aspLB) edge[<->] (dslpUB);
			\draw[weakcontrapos] (dspUB) edge[<->] (aslpLB);
			\draw[weakcontrapos,bend left=10] (dspLB) edge[<->] (aslpUB);
			
			\path (aspUB) edge[weakgalois] node {} (dwlpLB);
			\path (awpUB) edge[weakgalois] node {} (dslpLB);		
			
			\path (dwpLB) edge[implication] node {} (awpLB);
			\path (dwpLB) edge[weakimplication] node {} (dwlpLB);
			\path (awpLB) edge[weakimplication] node {} (awlpLB);
			\path (dwlpLB) edge[implication] node {} (awlpLB);
			
			\path (awpUB) edge[implication] node {} (dwpUB);
			\path (awlpUB) edge[weakimplication] node {} (awpUB);
			\path (awlpUB) edge[implication] node {} (dwlpUB);
			\path (dwlpUB) edge[weakimplication] node {} (dwpUB);
			
			\path (dspLB) edge[implication] node {} (aspLB);
			\path (dspLB) edge[weakimplication] node {} (dslpLB);
			\path (aspLB) edge[weakimplication] node {} (aslpLB);
			\path (dslpLB) edge[implication] node {} (aslpLB);
			
			\path (aspUB) edge[implication] node {} (dspUB);
			\path (aslpUB) edge[weakimplication] node {} (aspUB);
			\path (aslpUB) edge[implication] node {} (dslpUB);
			\path (dslpUB) edge[weakimplication] node {} (dspUB);
			
			\def\innerxsepp{-2.5pt}
			\def\innerysepp{-1pt}
			\def\roundedcorn{0.1cm}
			\node (anim1) [collapsewp,fit = {(awpLB)(dwpLB)}] {}; 
			\node (anim1) [collapsewp,fit = {(dwlpLB)(awlpLB)}] {}; 
			\node (anim1) [collapsewp,fit = {(awpUB)(dwpUB)}] {}; 
			\node (anim1) [collapsewp,fit = {(awlpUB)(dwlpUB)}] {};
			
			\node (anim1) [collapsesp,fit = {(aspLB)(dspLB)}] {}; 
			\node (anim1) [collapsesp,fit = {(dslpLB)(aslpLB)}] {}; 
			\node (anim1) [collapsesp,fit = {(aspUB)(dspUB)}] {}; 
			\node (anim1) [collapsesp,fit = {(aslpUB)(dslpUB)}] {};

		\end{tikzpicture}
	\end{adjustbox}
	\caption{
		Collapse of program logics under certain assumptions, part 2.
		Blue \emph{dashed} boxes enclose logics that collapse (i.e.\ the implications become equivalences) under \emph{determinism}.
		Blue \emph{solid} boxes enclose logics that collapse under \emph{reversibility}.%
	}
	\label{fig:taxonomy-determinsm-reversibility}
\end{figure}
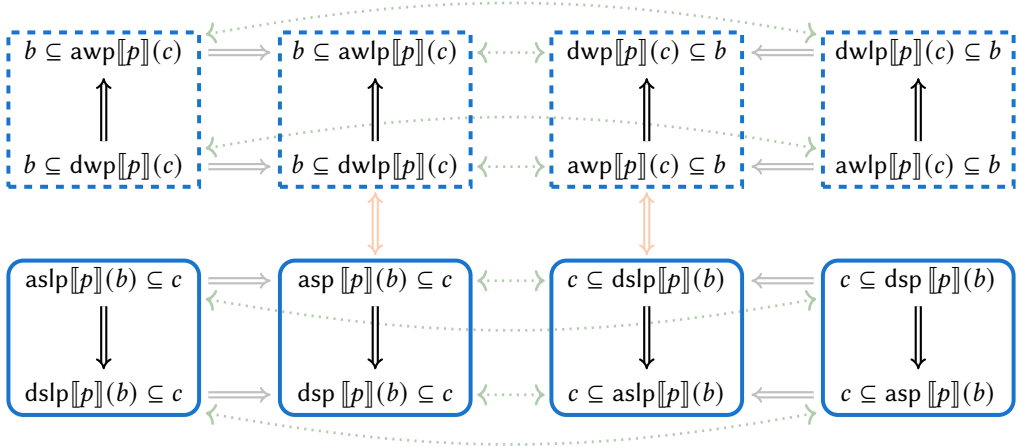%
In \Cref{fig:taxonomy-determinsm-reversibility}, we now see what we were expecting:
The top row of our taxonomy collapses into the second but top row under the assumption of determinism, symbolized by the blue dashed boxes.
As before, for the collapse it suffices to require determinism of computation started in $\pre$.

\citet[Prop.\ 5.5]{ascari2023sufficient} established that angelic total correctness (their sufficient incorrectness logic) and demonic partial correctness (Hoare logic) are equivalent for deterministic programs which terminate.
By looking at the bigger picture, we see why this holds:
Combining \Cref{theo:termination,theo:determinism},  all four logics in the upper left (as well as the upper right) quadrat collapse, including the two logics mentioned above.

\subsubsection{Reversibility}
\label{sssec:reversibility}

Finally, we are missing to collapse the bottom row of \Cref{fig:taxonomy-determinsm-reversibility} into the second but last row.
This can be achieved by ensuring \enquote{backward determinism} of $\program$, meaning that every final state could have only been reached from (at most) one initial state.
In other words, the computation of the program is \emph{reversible}.
This is of interest, for example, in compression algorithms or quantum computations.
\citet{ascari2023sufficient} observed that under reversibility, incorrectness logic implies demonic partial incorrectness.
This can be seen as a consequence of the following theorem:%
\begin{theorem}[Reversibility Collapse]
	\label{theo:reversibility}
	It $\program$ is a \emph{reversible} program (i.e.\ $\seminv{\program}{\tau}$ is either a singleton or the empty set for all $\tau \in \Sigma$), then
	\[
		\asp{\program}{\pre} \eeq \dsp{\program}{\pre}
	\qqand
		\aslp{\program}{\pre} \eeq \dslp{\program}{\pre}.
	\]
\end{theorem}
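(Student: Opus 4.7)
The plan is to fix an arbitrary final state $\tau \in \Sigma$ and show pointwise that both equations hold, by a two-case analysis on the preimage $\seminv{\program}{\tau}$. Reversibility means that this preimage is either empty or a singleton, so for each $\tau$ we need only examine these two cases.

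In the singleton case $\seminv{\program}{\tau} = \{\sigma_\tau\}$, the defining suprema in $\aspsymbol$ and $\aslpsymbol$, as well as the defining infima in $\dspsymbol$ and $\dslpsymbol$, all degenerate to the single value $\pre(\sigma_\tau)$. Since the preimage is nonempty, none of the \emph{otherwise}-branches in \Cref{def:dsp,def:aslp} are triggered, so both pairs of transformers yield $\pre(\sigma_\tau)$ at $\tau$ and hence agree.

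In the empty case $\seminv{\program}{\tau} = \emptyset$, we compare $\asp{\program}{\pre}(\tau) = \bigvee_{\sigma \in \emptyset} \pre(\sigma) = \false$ with $\dsp{\program}{\pre}(\tau) = \false$ (via the \emph{otherwise}-branch of \Cref{def:dsp}), so the non-liberal pair agrees. Dually, $\dslp{\program}{\pre}(\tau) = \bigwedge_{\sigma \in \emptyset} \pre(\sigma) = \true$ matches $\aslp{\program}{\pre}(\tau) = \true$ (via the \emph{otherwise}-branch of \Cref{def:aslp}), so the liberal pair also agrees.

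There is no real obstacle here; the only thing to be careful about is that the empty-preimage case is not dealt with by the suprema/infima alone but relies crucially on the case distinctions in the definitions of $\dspsymbol$ and $\aslpsymbol$ — precisely where they encode the treatment of unreachability. This is what makes reversibility (a condition on preimages of final states) exactly the right dual of determinism for collapsing the strongest-post side of the taxonomy, and it mirrors the proof of \Cref{theo:determinism} on the weakest-pre side.
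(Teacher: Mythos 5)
Your proof is correct and follows essentially the same route as the paper's: a pointwise case analysis on whether $\seminv{\program}{\tau}$ is a singleton or empty, unfolding the definitions of the four transformers. The only difference is that the paper writes out only the $\aspsymbol = \dspsymbol$ case and dismisses the liberal pair as dual, whereas you spell out both explicitly.
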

\begin{proof}
	See \Cref{ssec:proof-reversibility}. 
\end{proof}%
\noindent%
Analogously to the other cases, reversibility does not effect the top half of the logics for the same reason that (non)determinism has no effect on the lower half.
Also, we can again weaken the requirement to reversibility of computation terminating in $\post$.

\subsection{Symmetries and Asymmetries}
\label{ssec:asymmetries}

\begin{flushright}
\textsl{Program correctness and incorrectness are two sides of the same coin.}\\[.25em]
--- Peter \citet{OHearn19}
\end{flushright}%
\noindent%
\Cref{fig:taxonomy} is at first glance full of symmetry and duality:
The upper half contains all weakest precondition based logics, the lower half dually contains all strongest postcondition based logics.
Also, the left half is a contrapositive mirroring of the right half.
The assumptions discussed in the preceding section also act completely symmetrically.
On the top left we have \enquote{correctness logics}, on the bottom right we have \enquote{incorrectness logics}.
Indeed, this all seems like two opposite sides of a (multidimensional) coin.

\paragraph{Two sides of the same coin? Not quite.}
However, when taking a closer look, this fully symmetric picture starts to become a bit brittle.
First of all, correctness and incorrectness are not really at opposite sides of the \enquote{coin} represented by \Cref{fig:taxonomy}.
If total correctness (second row left), was truly on the opposite side of incorrectness logic (bottom row right), we should either expect the standard notion of program correctness to be \emph{Lisbon logic} (top row left), or alternatively the standard notion of incorrectness to be \emph{demonic incorrectness} (third row right).
It seems off that the current standard notions of correctness and incorrectness are \emph{not} really at opposite sides of \Cref{fig:taxonomy}.


\paragraph{Missing Galois connections.}
Amongst the logics in \Cref{fig:taxonomy}, there is \emph{essentially only one} Galois connection (there are two, but one is the contrapositive of the other).
In particular, since there is a Galois connection amongst the two transformer-based definitions of partial correctness, one might expect more Galois connections, e.g.\ between \emph{angelic} partial correctness and the contrapositive of angelic partial incorrectness.
Such Galois connection does not exists, however: 
Assume that $\hoare{\pre}{\program}{\post}$ is valid for angelic partial correctness, i.e.\ $\pre \subseteq \awlp{\program}{\post}$.
So all states in $\pre$ must either have a diverging path or a path to $\post$.
Let $\tau \in \post$ be a final state which is exclusively reachable from $\negate{\pre}$.
This does not violate the assumption of angelic partial correctness.
However, the contrapositive of angelic partial incorrectness, $\dsp{\program}{\negate{\pre}} \subseteq \negate{\post}$ requires all states that are exclusively reachable from~$\negate{\pre}$ to be included in $\negate{\post}$.
This implies that $\tau \in \negate{\post}$, which contradicts the previous assumption.
Hence, $\hoare{\pre}{\program}{\post}$ is \emph{not} valid for the contrapositive of angelic partial incorrectness.
Similar counterexamples exist for all other combinations of triples (see \Cref{sec:counter} for a collection).
We conclude that there is (essentially) only one Galois connection between the logics we considered.
We can, however, force such Galois connections by additional assumptions.
E.g., the above discussed Galois connection \emph{is} valid for programs that are both \emph{deterministic and reversible}.

\paragraph{Missing Asymmetries}
We discussed an intrinsic asymmetry of forward and backward analyses in \Cref{obs:branching}.
This asymmetry is not (yet) visible in \Cref{fig:taxonomy}.
To gain more insights on the logics' relationships and (a)symmetries, we will now take another look at program properties from the perspective of Kleene algebras.


\section{Kleene Algebra with Top and Tests}
\label{sec:kat}

\paragraph{Kleene Algebra with Tests (\KAT)}
Introduced by \citet{kozen1997kleene}, \KAT is an algebraic approach to specifying and reasoning about program properties.
For us, it will suffice to describe \KAT at a rather high level.
For reference, the formal definitions we require can be found in \Cref{sec:kat-definitions}.

\KAT terms are generalized regular expressions over a \emph{two-sorted alphabet} consisting of (i)~programs ($p,q,\ldots$) and (ii)~tests ($b,c,\ldots$).
We interpret these symbols as relations: 
A program $\program$ relates (or maps) initial states to final states through its execution.
Initial states on which $\program$ \emph{must} diverge are not related to any final states.
Dually, unreachable final states are not related to \mbox{any initial states}.

A test $\pre$ maps initial states that satisfy $\pre$ to themselves.
All other states are not contained in the relation.
Hence, tests act as filters.
Testing for $\textsf{false}$ (the \emph{empty} relation) is denoted by $0$ and is the least element
in the lattice of relations (ordered by set inclusion).

Composing symbols corresponds to composing relations.
For example, the term $bpqc$ intuitively means: 
First test for $\pre$, then execute~$p$, then execute~$q$, and finally test for $\post$.
Executions that fail to satisfy $\pre$ initially or $\post$ finally are filtered out and do not become part of the resulting relation.%

\paragraph{Kleene Algebra with Top and Tests (\TopKAT)} 
In \TopKAT~\cite{zhang2022incorrectness}, an additional $\top$ element (the \emph{universal} relation relating \emph{all} states with each other) is added to \KAT.
Notice the difference to the identity relation, denoted $1$, relating every state to itself.
$\top$ can be used to \enquote{select} the domain or codomain of a relation, as the following example illustrates.
\begin{figure}[t]
	\begin{subfigure}[t]{0.23\textwidth}
		\begin{center}
		\begin{adjustbox}{max width=0.99\textwidth}
		\begin{tikzpicture}[node distance=0.8cm and 2cm,line width=1.1pt,->]
			\node[mystate, pre] (I1) {$\statelabel{1}$};
			\node[mystate, selectnode, pre] (I2) [below =of I1] {$\statelabel{2}$};
			\node[mystate, selectnode, pre] (I3) [below =of I2] {$\statelabel{3}$};
			\node[mystate] (I4) [below =of I3] {$\statelabel{4}$};
			\node[mystate] (I5) [below =of I4] {$\statelabel{5}$};

			
			\node[mystate,pre] (I1a) [right =of I1]  {$\statelabel{1}$};
			\node[mystate, selectnode,pre] (I2a) [right =of I2] {$\statelabel{2}$};
			\node[mystate, selectnode,pre] (I3a) [right =of I3] {$\statelabel{3}$};
			\node[mystate] (I4a) [right =of I4] {$\statelabel{4}$};
			\node[mystate] (I5a) [right =of I5] {$\statelabel{5}$};
			
			\node[mystate] (I1b) [right =of I1a]  {$\statelabel{1}$};
			\node[mystate, selectnode,post] (I2b) [right =of I2a] {$\statelabel{2}$};
			\node[mystate,post] (I3b) [right =of I3a] {$\statelabel{3}$};
			\node[mystate,post] (I4b) [right =of I4a] {$\statelabel{4}$};
			\node[mystate] (I5b) [right =of I5a] {$\statelabel{5}$};
			
			\node[rectangle,fill=black,scale=1.2] (choice) at ($(I4a)!0.5!(I4b)$) {};
			\node (spiral) at ($(I5a)!0.6!(I5b)$) {};

			\node[mystate] (I1c) [right =of I1b]  {$\statelabel{1}$};
			\node[mystate,selectnode,post] (I2c) [right =of I2b] {$\statelabel{2}$};
			\node[mystate,post] (I3c) [right =of I3b] {$\statelabel{3}$};
			\node[mystate,post] (I4c) [right =of I4b] {$\statelabel{4}$};
			\node[mystate] (I5c) [right =of I5b] {$\statelabel{5}$};

			\node[font=\LARGE] (b) at ($(I1)!0.5!(I1a)$) [yshift=0.8cm] {$\pre$};
			\node[font=\LARGE] (p) at ($(I1a)!0.5!(I1b)$) [yshift=0.68cm] {$\program$};
			\node[font=\LARGE] (c) at ($(I1b)!0.5!(I1c)$) [yshift=0.8cm] {$\post$};
			
			\coordinate (spiralstart) at ($(spiral)+(-0.1,-0.62)$); 
			\draw let \p1=(I5b) in 
				[scale=0.35,domain=0:30,variable=\t,smooth,samples=75,-,shift={(spiralstart)}] plot  ({\t r}: {-0.002*\t*\t});
			
			
			
			
			

			\path (I1) edge node {} (I1a);

			\path (I1a) edge node {} (I1b);
			\path (I4a) edge node {} (I4b);
			\path ($(I4a)!0.5!(I4b)$) edge node {} (I3b);
			\path (I5a) edge [-] node {} (spiral);
			
			\path[selected] (I2b) edge node {} (I2c);
			\path (I3b) edge node {} (I3c);
			\path (I4b) edge node {} (I4c);

			\path[selected] (I2) edge node {} (I2a);
			\path[selected] (I3) edge node {} (I3a);
			\path[selected] (I2a) edge node {} (I2b);
			\path[selected] (I3a) edge node {} (I2b);
		
			

			
		\end{tikzpicture} 
	\end{adjustbox}
	\end{center}
	\captionsetup{font=footnotesize,labelfont=footnotesize}
	\caption{ \footnotesize
		Illustration of a \TopKAT term $\pre \program \post$.
		The resulting relation is highlighted in bold red.
	}
	\label{subfig:codomain-a}
	\end{subfigure}
	\qquad
%
	\begin{subfigure}[t]{0.3\textwidth}
			\begin{center}
			\begin{adjustbox}{max width=0.99\textwidth}
			\begin{tikzpicture}[node distance=0.8cm and 2cm,line width=1.1pt,->]
				\node[mystate, pre] (I1) {$\statelabel{1}$};
				\node[mystate, selectnode, pre] (I2) [below =of I1] {$\statelabel{2}$};
				\node[mystate, selectnode, pre] (I3) [below =of I2] {$\statelabel{3}$};
				\node[mystate] (I4) [below =of I3] {$\statelabel{4}$};
				\node[mystate] (I5) [below =of I4] {$\statelabel{5}$};
	
				\node[mystate, selectnode] (I1T) [left =of I1]  {$\statelabel{1}$};
				\node[mystate, selectnode] (I2T) [left =of I2] {$\statelabel{2}$};
				\node[mystate, selectnode] (I3T) [left =of I3] {$\statelabel{3}$};
				\node[mystate, selectnode] (I4T) [left =of I4] {$\statelabel{4}$};
				\node[mystate, selectnode] (I5T) [left =of I5] {$\statelabel{5}$};
				
				\node[mystate,pre] (I1a) [right =of I1]  {$\statelabel{1}$};
				\node[mystate, selectnode,pre] (I2a) [right =of I2] {$\statelabel{2}$};
				\node[mystate, selectnode,pre] (I3a) [right =of I3] {$\statelabel{3}$};
				\node[mystate] (I4a) [right =of I4] {$\statelabel{4}$};
				\node[mystate] (I5a) [right =of I5] {$\statelabel{5}$};
				
				\node[mystate] (I1b) [right =of I1a]  {$\statelabel{1}$};
				\node[mystate, selectnode,post] (I2b) [right =of I2a] {$\statelabel{2}$};
				\node[mystate,post] (I3b) [right =of I3a] {$\statelabel{3}$};
				\node[mystate,post] (I4b) [right =of I4a] {$\statelabel{4}$};
				\node[mystate] (I5b) [right =of I5a] {$\statelabel{5}$};
				
				\node[rectangle,fill=black,scale=1.2] (choice) at ($(I4a)!0.5!(I4b)$) {};
				\node (spiral) at ($(I5a)!0.6!(I5b)$) {};

				\node[mystate] (I1c) [right =of I1b]  {$\statelabel{1}$};
				\node[mystate,selectnode,post] (I2c) [right =of I2b] {$\statelabel{2}$};
				\node[mystate,post] (I3c) [right =of I3b] {$\statelabel{3}$};
				\node[mystate,post] (I4c) [right =of I4b] {$\statelabel{4}$};
				\node[mystate] (I5c) [right =of I5b] {$\statelabel{5}$};
				
				\node[font=\LARGE] (T) at ($(I1T)!0.5!(I1)$) [yshift=0.8cm] {$\top$};
				\node[font=\LARGE] (b) at ($(I1)!0.5!(I1a)$) [yshift=0.8cm] {$\pre$};
				\node[font=\LARGE] (p) at ($(I1a)!0.5!(I1b)$) [yshift=0.68cm] {$\program$};
				\node[font=\LARGE] (c) at ($(I1b)!0.5!(I1c)$) [yshift=0.8cm] {$\post$};
				
				\coordinate (spiralstart) at ($(spiral)+(-0.1,-0.62)$); 
				\draw let \p1=(I5b) in 
					[scale=0.35,domain=0:30,variable=\t,smooth,samples=75,-,shift={(spiralstart)}] plot  ({\t r}: {-0.002*\t*\t});
			
				\path (I1T) edge node {} (I1);
				\path (I1T) edge node {} (I4);
				\path (I1T) edge node {} (I5);
				
				\path (I2T) edge node {} (I1);
				\path (I2T) edge node {} (I4);
				\path (I2T) edge node {} (I5);
				
				\path (I3T) edge node {} (I1);
				\path (I3T) edge node {} (I4);
				\path (I3T) edge node {} (I5);
				
				\path (I4T) edge node {} (I1);
				\path (I4T) edge node {} (I4);
				\path (I4T) edge node {} (I5);
				
				\path (I5T) edge node {} (I1);
				\path (I5T) edge node {} (I4);
				\path (I5T) edge node {} (I5);

				\path (I1) edge node {} (I1a);

				\path (I1a) edge node {} (I1b);
				\path (I4a) edge node {} (I4b);
				\path ($(I4a)!0.5!(I4b)$) edge node {} (I3b);
				\path (I5a) edge [-] node {} (spiral);
				
				\path[selected] (I2b) edge node {} (I2c);
				\path (I3b) edge node {} (I3c);
				\path (I4b) edge node {} (I4c);
				
				\path[selected] (I2) edge node {} (I2a);
				\path[selected] (I3) edge node {} (I3a);
				\path[selected] (I2a) edge node {} (I2b);
				\path[selected] (I3a) edge node {} (I2b);
			
				\path[selected] (I1T) edge node {} (I2);
				\path[selected] (I2T) edge node {} (I2);
				\path[selected] (I3T) edge node {} (I2);
				\path[selected] (I4T) edge node {} (I2);
				\path[selected] (I5T) edge node {} (I2);
				
				\path[selected] (I1T) edge node {} (I3);
				\path[selected] (I2T) edge node {} (I3);
				\path[selected] (I3T) edge node {} (I3);
				\path[selected] (I4T) edge node {} (I3);
				\path[selected] (I5T) edge node {} (I3);

				
			\end{tikzpicture}
		\end{adjustbox}
		\end{center}
		\captionsetup{font=footnotesize,labelfont=footnotesize}
		\caption{ \footnotesize
			Adding $\top$ to the left-hand side of the term $\pre \program \post$ from \Cref{subfig:codomain-a}, effectively selecting the \emph{codomain} of the underlying relation.
		}
		\label{subfig:codomain-b}
		\end{subfigure}%
		\qquad
		\begin{subfigure}[t]{0.3\textwidth}
			\begin{center}
				\begin{adjustbox}{max width=0.99\textwidth}
				\begin{tikzpicture}[node distance=0.8cm and 2cm,line width=1.1pt,->]
					\node[mystate, pre] (I1) {$\statelabel{1}$};
					\node[mystate, selectnode, pre] (I2) [below =of I1] {$\statelabel{2}$};
					\node[mystate, selectnode, pre] (I3) [below =of I2] {$\statelabel{3}$};
					\node[mystate] (I4) [below =of I3] {$\statelabel{4}$};
					\node[mystate] (I5) [below =of I4] {$\statelabel{5}$};

					\node[mystate,pre] (I1a) [right =of I1]  {$\statelabel{1}$};
					\node[mystate, selectnode,pre] (I2a) [right =of I2] {$\statelabel{2}$};
					\node[mystate, selectnode,pre] (I3a) [right =of I3] {$\statelabel{3}$};
					\node[mystate] (I4a) [right =of I4] {$\statelabel{4}$};
					\node[mystate] (I5a) [right =of I5] {$\statelabel{5}$};
					
					\node[mystate] (I1b) [right =of I1a]  {$\statelabel{1}$};
					\node[mystate, selectnode,post] (I2b) [right =of I2a] {$\statelabel{2}$};
					\node[mystate,post] (I3b) [right =of I3a] {$\statelabel{3}$};
					\node[mystate,post] (I4b) [right =of I4a] {$\statelabel{4}$};
					\node[mystate] (I5b) [right =of I5a] {$\statelabel{5}$};
					
					\node[rectangle,fill=black,scale=1.2] (choice) at ($(I4a)!0.5!(I4b)$) {};
					\node (spiral) at ($(I5a)!0.6!(I5b)$) {};

					\node[mystate] (I1c) [right =of I1b]  {$\statelabel{1}$};
					\node[mystate,selectnode,post] (I2c) [right =of I2b] {$\statelabel{2}$};
					\node[mystate,post] (I3c) [right =of I3b] {$\statelabel{3}$};
					\node[mystate,post] (I4c) [right =of I4b] {$\statelabel{4}$};
					\node[mystate] (I5c) [right =of I5b] {$\statelabel{5}$};

					\node[mystate, selectnode] (I1T) [right =of I1c]  {$\statelabel{1}$};
					\node[mystate, selectnode] (I2T) [right =of I2c] {$\statelabel{2}$};
					\node[mystate, selectnode] (I3T) [right =of I3c] {$\statelabel{3}$};
					\node[mystate, selectnode] (I4T) [right =of I4c] {$\statelabel{4}$};
					\node[mystate, selectnode] (I5T) [right =of I5c] {$\statelabel{5}$};

					\node[font=\LARGE] (T) at ($(I1c)!0.5!(I1T)$) [yshift=0.8cm] {$\top$};
					\node[font=\LARGE] (b) at ($(I1)!0.5!(I1a)$) [yshift=0.8cm] {$\pre$};
					\node[font=\LARGE] (p) at ($(I1a)!0.5!(I1b)$) [yshift=0.68cm] {$\program$};
					\node[font=\LARGE] (c) at ($(I1b)!0.5!(I1c)$) [yshift=0.8cm] {$\post$};
					
					\coordinate (spiralstart) at ($(spiral)+(-0.1,-0.62)$); 
					\draw let \p1=(I5b) in 
						[scale=0.35,domain=0:30,variable=\t,smooth,samples=75,-,shift={(spiralstart)}] plot  ({\t r}: {-0.002*\t*\t});

					\path[nonselected] (I1) edge node {} (I1a);
					\path[selected] (I2) edge node {} (I2a);
					\path[selected] (I3) edge node {} (I3a);

					\path[nonselected] (I1a) edge node {} (I1b);
					\path[selected] (I2a) edge node {} (I2b);
					\path[selected] (I3a) edge node {} (I2b);
					\path[nonselected] (I4a) edge node {} (I4b);
					\path[nonselected] ($(I4a)!0.5!(I4b)$) edge node {} (I3b);
					\path[nonselected] (I5a) edge [-] node {} (spiral);
					
					\path[selected] (I2b) edge node {} (I2c);
					\path[nonselected] (I3b) edge node {} (I3c);
					\path[nonselected] (I4b) edge node {} (I4c);
					
					\path[nonselected] (I1c) edge node {} (I1T);
					\path[nonselected] (I1c) edge node {} (I2T);
					\path[nonselected] (I1c) edge node {} (I3T);
					\path[nonselected] (I1c) edge node {} (I4T);
					\path[nonselected] (I1c) edge node {} (I5T);

					\path[nonselected] (I3c) edge node {} (I1T);
					\path[nonselected] (I3c) edge node {} (I2T);
					\path[nonselected] (I3c) edge node {} (I3T);
					\path[nonselected] (I3c) edge node {} (I4T);
					\path[nonselected] (I3c) edge node {} (I5T);
					
					\path[nonselected] (I4c) edge node {} (I1T);
					\path[nonselected] (I4c) edge node {} (I2T);
					\path[nonselected] (I4c) edge node {} (I3T);
					\path[nonselected] (I4c) edge node {} (I4T);
					\path[nonselected] (I4c) edge node {} (I5T);
					
					\path[nonselected] (I5c) edge node {} (I1T);
					\path[nonselected] (I5c) edge node {} (I2T);
					\path[nonselected] (I5c) edge node {} (I3T);
					\path[nonselected] (I5c) edge node {} (I4T);
					\path[nonselected] (I5c) edge node {} (I5T);

					\path[selected] (I2c) edge node {} (I1T);
					\path[selected] (I2c) edge node {} (I2T);
					\path[selected] (I2c) edge node {} (I3T);
					\path[selected] (I2c) edge node {} (I4T);
					\path[selected] (I2c) edge node {} (I5T);
					
			\end{tikzpicture}
			\end{adjustbox}
			\end{center}
			\captionsetup{font=footnotesize,labelfont=footnotesize}
			\caption{ \footnotesize
				Adding $\top$ to the right-hand side of the term $\pre \program \post$ from \Cref{subfig:codomain-a}, effectively selecting the \emph{domain} of the underlying relation.}
			\label{subfig:codomain-c}
			\end{subfigure}%

	\caption{Using $\top$ to select the codomain \hyperref[subfig:codomain-b]{(b)} or the domain \hyperref[subfig:codomain-c]{(c)} of a \TopKAT term $\pre \program \post$.}
	\label{fig:codomain}
\end{figure}%
\begin{example}[$\top$ as (Co)domain Selector]
\label{ex:ka}
	Consider a precondition $\pre$, a program $\program$, and a postcondition $\post$ over a state space of five states $\Sigma = \{1,2,3,4,5\}$, where%
	\begin{align*}
		\pre & \eeq \{(1,1),(2,2),(3,3)\}, \\
		\program & \eeq \{(1,1),(2,2),(3,2),(4,3),(4,4)\}, \quad \text{and} \\
		\post & \eeq \{(2,2),(3,3),(4,4)\}.
	\end{align*}%
	As described above, the $\KAT$ term $\pre \program \post$ corresponds to the composition of the underlying relations, in this example being
	\[
		\pre \program \post \eeq \{(2,2),(3,2)\}.
	\]
	This is illustrated in \Cref{subfig:codomain-a}.
	The initial states satisfying $\pre$ are green, the final states satisfying $\post$ are blue.
	Nondeterministic choices in $\program$ are visualized by a square and divergence by a spiral.
	The composed relation $\pre \program \post$ is highlighted in red:
	We can see that the pairs in the relation correspond to the paths through the graph, originating in either initial state $2$ or $3$ and leading to final state $2$.
	Intuitively, these are the executions of $\program$ starting in $\pre$ and terminating in $\post$.

	The effect of appending $\top$ (i.e.\ the universal relation) on the left of $\pre \program \post$ is visualized in \Cref{subfig:codomain-b} and yields the relation%
	\[
		\top \pre \program \post \eeq \{(1,2),(2,2),(3,2),(4,2),(5,2)\} = \{ (\sigma,2) \mid \sigma \in \Sigma \}. 
	\]
	Whereas in $\pre \program \post$ only initial states $2$ and $3$ were related to final state $2$, in $\top \pre \program \post$ \emph{all} initial states are related to $2$.
	Appending $\top$ on the left thus in some sense erases the information about the initial states and leaves only information about the final states -- or in other words: the \emph{codomain}.

	Dually, the effect of appending $\top$ on the right is visualized in \Cref{subfig:codomain-c} and yields the relation%
	\[
		\pre \program \post \top \eeq \{ (2,\tau),(3,\tau) \mid \tau \in \Sigma \},
	\]
	Whereas in $\pre \program \post$ initial states $2$ and $3$ were related only to final state $2$, in $\top \pre \program \post$ initial states $2$ and $3$ are related to \emph{all} final states.
	Appending $\top$ on the right thus erases the information about the final states and leaves only information about the initial states -- or in other words: the \emph{domain}.
	\qedtriangle%
\end{example}%
\noindent%
As we have seen, appending $\top$ to the right (left) of \emph{any} \KAT term amounts to selecting the (co)domain of the underlying relation.
In particular, for any two KAT terms $s$ and $t$, we have that 
\begin{align*}
	\top s &\eeq \top t \qiff \text{the codomains of $s$ and $t$ are equal,} \qquad \text{and} \\
	s \top &\eeq t \top \qiff \text{the domains of $s$ and $t$ are equal}.
\end{align*}%
For expressing incorrectness logic, an explicit comparison of codomains is necessary.
Hoare logic, on the other hand, can also be expressed without $\top$.
For details, we refer to \cite{zhang2022incorrectness}.

\paragraph{\TopKAT and Predicate Transformers}
There is a close relation between $\TopKAT$ and predicate transformers namely that some predicate transformers can be expressed as \TopKAT terms.
Let us express, for instance, $\asp{\program}{\pre}$ in \TopKAT.
Consider for this first the term
$
	\pre \program = \{ (\sigma,\tau) \mid \sigma \in \pre \text{ and } (\sigma,\tau) \in \program \}
$ describing all executions of $\program$ that start in $\pre$.
As described above, appending $\top$ on the left selects the codomain of that term, i.e.\ 
\[
	\top \pre \program \eeq \{ (\sigma',\tau) \mid \sigma' \in \Sigma \text{ and } \exists\, sigma \in \pre \text{ and } (\sigma,\tau) \in \program \}~.
\]%
Since $\asp{\program}{\pre}$ is precisely the set of states reachable by executing $\program$ on initial states satisfying $\pre$ and the codomain of $\pre \program$ is also precisely that set, we can morally equate $\asp{\program}{\pre}$ and $\top \pre \program$.

Similarly, appending $\top$ on the right selects the domain and $\program \post \top$ hence describes the set of states that can reach $\post$:%
\[
	\program \post \top \eeq \{ (\sigma,\tau') \mid \tau' \in \Sigma \text{ and } \exists\, tau \in \post \text{ and } (\sigma,\tau) \in \program \}.
\]
The domain of $\program \post$ is precisely the set $\awp{\program}{\post}$.
Notably, $\aspsymbol$ and $\awpsymbol$ are the only two transformers that are directly expressible in a $\TopKAT$ term.%

\begin{wrapfigure}[10]{R}{.4\textwidth}%
	\vspace*{-0.2\intextsep}
		\begin{center}
			\begin{tikzpicture}[->,
				node distance=1cm and 3cm,
				thick,
				mystate/.style = {circle,inner sep=3pt,draw,font=\small}
				]

				\node[mystate] (I4) {$\sigma$};
				\node[mystate] (F4) [right =of I4] {$\tau$};

				\node[rectangle,fill=black,scale=0.3] (choice) at ($(I4)!0.5!(F4)$) {b};
				\node (spiral) at ($(I4)!0.6!(F4)$) {};
				
				\path (I4) edge node {} (F4);
				\path (I4) edge [-] node {} (spiral);
				
				\draw let \p1=(I4) in
				[scale=0.25,domain=0:30,variable=\t,smooth,samples=75,-,shift={(7.31,-1.75)}] plot  ({\t r}: {-0.002*\t*\t});
			\end{tikzpicture}
			
			\medskip
			
			$p_1 = \{(\sigma,\tau)\}$
	%
	
	\bigskip\smallskip
	\hrule
	\bigskip\bigskip
	
		\begin{tikzpicture}[->,
			node distance=1cm and 3cm,
			thick,
			mystate/.style = {circle,inner sep=3pt,draw,font=\small}
			]

			\node[mystate] (I4) {$\sigma$};
			\node[mystate] (F4) [right =of I4] {$\tau$};

			\node (spiral) at ($(I4)!0.6!(F4)$) {};
			
			\draw let \p1=(I4) in
			[scale=0.25,draw = white,domain=0:30,variable=\t,smooth,samples=75,-,shift={(7.31,-1.75)}] plot  ({\t r}: {-0.002*\t*\t});

			\path (I4) edge node {} (F4);
			%
		\end{tikzpicture}
		
		\vspace{-1\intextsep}
		$p_2 = \{(\sigma,\tau)\}$
	\end{center}
\end{wrapfigure}%
\paragraph{The Relational Perspective and Divergence.}
The purely relational perspective on programs introduces some limitations, particularly in reasoning about divergence.
An initial state from which computation always diverges is related to no final state, making it distinguishable from states that lead to some final state.
However, consider an initial state $\sigma$ from which the computation can nondeterministically diverge or terminate in $\tau$, as illustrated by the top program $p_1$ to our right.
The relation corresponding to $\program_1$ contains only the pair $(\sigma, \tau)$.

Now, consider the bottom program $p_2$ to our right which \emph{always} terminates in $\tau$ when started in $\sigma$.
The relational representation of $\program_2$ is the same as the one of $\program_1$, making these two programs relationally indistinguishable.
Consequently, information about branching divergence is inevitably lost in a relational perspective, as was already observed in \Cref{rem:diverge-semantics}.

\subsection{Expressing Program Logics in \TopKAT}
\label{ssec:kat-properties}

Let us now explore how to express the program properties / logics of \Cref{fig:taxonomy} in the equational system of \TopKAT.%
\paragraph{Hoare Logic for Partial Correctness}
Partial correctness requires that all computation started in~$\pre$ can only terminate in $\post$.
This can be expressed in \KAT, for example, by $\pre \program \negate{\post} = 0$ or $\pre \program \post = \pre \program$.
The first equation intuitively states that there is no computation starting in $\pre$ and terminating outside of~$\post$.
The latter compares all states in $\post$ in which $\program$ can terminate starting from $\pre$ to the states in which $\program$ can terminate from $\pre$ at all.
In general, the equations do not uniquely express partial correctness.
In fact, there are many more (in)equations characterizing partial correctness, in particular the \TopKAT equation \mbox{$\top \pre \program \post = \top \pre \program$}.
We choose the latter as it aligns well with the equations for the other logics.%
%
%
\paragraph{Partial Incorrectness}
Partial \emph{in}correctness is contrapositive to partial correctness and thus immediately both \KAT and \TopKAT expressible by negating the conditions: \mbox{$\top \negate{\pre} \program \negate{\post} = \top \negate{\pre} \program$}.
An equivalent equation, however, is $\pre \program \post \top = \program \post \top$ and we will consider this latter one.%
\paragraph{Incorrectness Logic}
Incorrectness logic requires that all states in $\post$ be reachable from $\pre$.
This cannot be captured in a standard \KAT equation, as it involves reasoning about the codomain of relations.
Given that overcoming this was a key motivation behind \TopKAT, it is not surprising that incorrectness logic \emph{is} expressible in \TopKAT \cite{zhang2022incorrectness}, namely by $\top bpc = \top c$.
The left-hand side of the equation selects all final states in $\post$ reachable by executing~$\program$ on~$\pre$.
The right-hand side selects \emph{all} final states in~$\post$.%
\paragraph{Hoare Logic for Total Correctness and Angelic Partial Correctness}
For total correctness, all computation started in $\pre$ must terminate and do so in $\post$.
This requires reasoning about branching divergence, which, as argued earlier, is impossible.
Thus, standard total correctness is inexpressible both in \KAT~\cite{von2002kleene} and \TopKAT.
The same goes for angelic partial correctness.%
\paragraph{Lisbon logic}
Lisbon logic expresses that from all initial states in $\pre$, it is \emph{possible} for $\program$ to terminate in $\post$.
As this requires reasoning about the domain of a relation, Lisbon logic cannot be expressed in \KAT.
However, it \emph{is} indeed expressible in \TopKAT and to the best of our knowledge a \emph{novel result}:%
\green{%
\begin{theorem}[Expressibility of Lisbon Logic (Angelic Total Correctness) in \TopKAT]%
\[
	\hoare{\pre}{\program}{\post} \textnormal{ is a valid Lisbon triple / valid for angelic total correctness} \qqiff
		bpc\top = b\top
\]
\end{theorem}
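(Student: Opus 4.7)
The plan is to prove the equivalence by unfolding both sides in the standard relational interpretation of \TopKAT, where tests $b,c$ are the subidentities $\{(\sigma,\sigma) \mid \sigma \in \pre\}$ and $\{(\tau,\tau) \mid \tau \in \post\}$, the program symbol $p$ is interpreted as the relation $\{(\sigma,\tau) \mid \tau \in \sem{\program}{\sigma}\}$, and $\top = \States \times \States$. The strategy is to compute both $bpc\top$ and $b\top$ explicitly, observe that one inclusion is automatic, and then show that the reverse inclusion characterizes exactly the predicate-transformer condition $\pre \subseteq \awp{\program}{\post}$.

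First, I would compute the right-hand side: unfolding relational composition immediately yields
\[
  b\top \eeq \{(\sigma,\sigma') \mid \sigma \in \pre,\ \sigma' \in \States\}.
\]
Next, I would compute the left-hand side by composing $b$, $p$, $c$, $\top$ from left to right, using that the tests $b$ and $c$ act as filters on identity pairs. This gives
\[
  bpc\top \eeq \{(\sigma,\sigma') \mid \sigma \in \pre,\ \sigma' \in \States,\ \exists\tau \in \post \colon \tau \in \sem{\program}{\sigma}\}.
\]

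Since $c \leq 1$, clearly $bpc\top \subseteq bp\top \subseteq b\top$, so the inclusion $bpc\top \subseteq b\top$ is automatic. Thus the equation $bpc\top = b\top$ reduces to the reverse inclusion $b\top \subseteq bpc\top$. Comparing the two displayed sets, the reverse inclusion says: for every $\sigma \in \pre$ (and any $\sigma' \in \States$, whose existence is trivial), there exists $\tau \in \post$ with $\tau \in \sem{\program}{\sigma}$. The existential quantifier on $\sigma'$ plays no role because the membership condition does not depend on it, so the statement reduces to
\[
  \forall \sigma \in \pre \colon\ \exists \tau \in \sem{\program}{\sigma} \colon\ \tau \in \post.
\]
By \Cref{def:wp}, this is precisely $\sigma \in \awp{\program}{\post}$ for every $\sigma \in \pre$, i.e.\ $\pre \subseteq \awp{\program}{\post}$, which is the definition of validity for Lisbon logic / angelic total correctness.

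The main obstacle is not computational, but rather conceptual: one must recognize that appending $\top$ on the right of $bpc$ acts as a \emph{domain projection} that erases all information about the final state reached, which is exactly what transforms the (relationally invisible) existential statement \enquote{some terminating path to $\post$ exists from $\sigma$} into the (relationally expressible) equation asserting that every $\sigma \in \pre$ lies in the domain of $bpc$. Once this is recognized, the proof is a two-line unfolding, and in particular no reasoning about branching divergence is required — consistent with the fact that Lisbon logic, unlike total correctness, does not distinguish must- from may-termination.
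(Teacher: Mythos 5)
Your proposal is correct and takes essentially the same route as the paper, which does not spell the argument out but rests on exactly the observation you formalize: appending $\top$ on the right is a domain projection, and the domain of $\program\post$ (here of $bpc$ restricted by $b$) is precisely $\awp{\program}{\post}$ intersected with $\pre$, so $bpc\top = b\top$ unfolds to $\pre \subseteq \awp{\program}{\post}$. Your explicit relational computation, including the automatic inclusion $bpc\top \subseteq b\top$ and the remark that no reasoning about branching divergence is needed, is a faithful and complete rendering of that intended proof.
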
%
}%
\noindent%
%
%
\paragraph{Demonic Incorrectness}
The novel logic demonic incorrectness requires all states in $\post$ to be exclusively reachable from $\pre$.
We can divide this into two requirements, namely \emph{partial} incorrectness (\mbox{$\pre \program \post \top = \program \post \top$}) and reachability of all states in $\post$ ($\top \post = \top \program \post$).
These two cannot be expressed as a single equation.%
\paragraph{Angelic Partial Incorrectness}
The second novel logic, angelic partial incorrectness, requires that all states in post are either unreachable or can be reached from $\pre$.
This is expressible in \KAT by $\pre \program \post = \program \post$ and in \TopKAT by the equivalent equation $\top \pre \program \post = \top \program \post$, comparing all final states in~$\post$ reachable from $\pre$ to the reachable fragment of $\post$.
This is somewhat surprising (and asymmetric), since angelic partial \emph{correctness} cannot be expressed.
The root cause is, again, \Cref{obs:branching}.%
\begin{table}[t]
	\begin{tabular}{r | c | l}
		angelic total correctness \phantom{te}&\phantom{te} $\pre \subseteq \awp{\program}{\post}$ \phantom{te}&\phantom{te} $\pre \program \post \top = \pre \top$ \\	\hline	
		demonic partial correctness \phantom{te}&\phantom{te} $\pre \subseteq \dwlp{\program}{\post}$ \phantom{te}&\phantom{te} $\top \pre \program \post = \top \pre \program$ \phantom{te}\\ \hline
		angelic total incorrectness \phantom{te}&\phantom{te} $\post \subseteq \asp{\program}{\pre}$ \phantom{te}&\phantom{te}  $\top \pre \program \post = \top \post$  \\\hline
		angelic partial incorrectness \phantom{te}&\phantom{te} $\post \subseteq \aslp{\program}{\pre}$ \phantom{te}&\phantom{te}  $\top \pre \program \post = \top \program \post $ \phantom{te}\\\hline
		demonic partial incorrectness \phantom{te}&\phantom{te} $\post \subseteq \dslp{\program}{\pre}$ \phantom{te}&\phantom{te} $\pre \program \post \top=  \program \post \top$  \\	\hline
		???\phantom{te}&\phantom{te} ??? \phantom{te}&\phantom{te} $\pre \program \post \top= \pre \program \top$ \\ 
	\end{tabular}
	\caption{Overview of \TopKAT expressible program logics.}
	\label{tab:KAT}	
\end{table}
An overview of the five \TopKAT expressible program logics is given in \Cref{tab:KAT}.
Their contrapositives are of course also expressible by negation of all tests.
All equations are syntactically very similar.
There are more equations following this pattern, one of which is shown in the sixth row of \Cref{tab:KAT}.
We will discuss this logic in the following section.
While other equations also syntactically fit into the scheme, we exclude them from further investigation as they would, for example, interpret the precondition $\pre$ over final states or do other semantically nonsensical things.

\subsection{The In-Between Logics}
\label{ssec:unnamed}

\Cref{tab:KAT} shows that five of the six basic \TopKAT equations directly correspond to a predicate transformer-based logic, as outlined in \Cref{sec:taxonomy}.
However, the sixth equation, $\pre \program \post \top = \pre \program \top$, stands apart.
This equation expresses that, for all states in $\pre$, the program either \emph{always} diverges, or there exists a terminating path to $\post$. 
Interestingly, in \Cref{ssec:backward-forward}, we discussed a set that characterizes precisely such program states: $\awp{\program}{\post} \cup \dwlp{\program}{\post}$.
Therefore, when the precondition $\pre$ is chosen to underapproximate this set, we obtain:
\[
	\pre \program \post \top= \pre \program \top \qqiff \pre \subseteq \awp{\program}{\post} \cup \dwlp{\program}{\post}
\]%
In \Cref{ssec:backward-forward}, we also examined the intersection $\awp{\program}{\post} \cap \dwlp{\program}{\post}$.
Underapproximating this intersection yields a logic which combines partial correctness Hoare logic and Lisbon logic. 
Notably, this is a special case of outcome logic \cite{zilberstein2023outcome}, when restricted to traditional assertions, i.e.\ predicates, and instantiated to the powerset monad.
%
%
Similar to the equations for demonic total incorrectness, we cannot give a single \TopKAT equation for this, but a \emph{system} of two:
$
	\pre \program \post \neq 0 \; \text{and} \; \pre \program \negate{\post} = 0.
$


\section{A Taxonomy of Program Logics: Revisited}
\label{sec:taxonomy-rev}

\begin{figure}[t]
	\begin{adjustbox}{max width=\textwidth}
		\begin{tikzpicture}[ 
				align=center,	
				mynode/.style={minimum width=2.7cm},node distance=1.3cm and 1.3cm,
				mylabel/.style n args={1}{label={[label distance=-4mm,font=\tiny,text=DodgerBlue3!60,below]#1}},
				mylabelkatone/.style n args={1}{label={[label distance=-9.5mm,font=\tiny,text=DodgerBlue3!60,below]#1}},
				mylabelkatoneshift/.style n args={1}{label={[label distance=-9.5mm,font=\tiny,text=DodgerBlue3!60,below,xshift=1pt]#1}},
				mylabelkattwo/.style n args={1}{label={[label distance=-13.5mm,font=\tiny,text=DodgerBlue3!60,below]#1}}
				]

			\node[mynode, mylabelkatone={Lisbon logic (angelic tot.\ corr.)}](awpLB) {
				$\pre \subseteq \awp{\program}{\post}$ \\
				\katexpr{$\pre \program \post \top = \pre \top$} \vspaace };
			\node[mynode, mylabel={Hoare logic (total correctness)},below=of awpLB](dwpLB) {
				$\pre \subseteq \dwp{\program}{\post}$  };
			\node[mynode,right=of awpLB, mylabel={angelic partial correctness}](awlpLB) {
				$\pre \subseteq \awlp{\program}{\post}$ };
			\node[mynode,right=of dwpLB, mylabelkattwo={Hoare logic (partial correctness)}] (dwlpLB) {
				$\pre \subseteq \dwlp{\program}{\post}$\\
				\katexpr{$\top \pre \program \post = \top \pre \program $} \vspaace \\
				\katexpr{$\negate{\pre} \program \negate{\post} \top = \program \negate{\post} \top$}};

			\node[mynode,right= 0.9cm of awlpLB]  (dwpUB) {$\dwp{\program}{\post} \subseteq \pre$};
			\node[mynode,right=of dwpUB]  (dwlpUB) {
				$\dwlp{\program}{\post} \subseteq \pre$ \\
				\katexpr{$\negate{\pre} \program \negate{\post} \top = \negate{\pre} \top$ \vspaace}};
			\node[mynode,below=of dwlpUB]  (awlpUB) {$\awlp{\program}{\post} \subseteq \pre$};			
			\node[mynode,left=of awlpUB,mylabelkattwo={partial incorrectness}] (awpUB) {
				$\awp{\program}{\post} \subseteq \pre$ \\
				\katexpr{$\top \negate{\pre} \program \negate{\post} = \top \negate{\pre} \program $} \vspaace \\
				\katexpr{$\pre \program \post \top = \program \post \top$}};
			
			\node[mynode,below= 1cm of dwlpLB,mylabelkattwo={Hoare logic (partial correctness)}] (aspUB) {
				$\asp{\program}{\pre} \subseteq \post$ \\
				\katexpr{$\top \pre \program \post = \top \pre \program $} \vspaace \\
				\katexpr{$\negate{\pre} \program \negate{\post} \top = \program \negate{\post} \top$}};
			\node[mynode,left=of aspUB] (aslpUB) {$\aslp{\program}{\pre} \subseteq \post$};
			\node[mynode,below=of aslpUB]  (dslpUB) {
				$\dslp{\program}{\pre} \subseteq \post$ \\
				\katexpr{$\top \negate{\pre} \program \negate{\post} = \top \negate{\post}$}\vspaace};
			\node[mynode,right=of dslpUB] (dspUB) {
				$\dsp{\program}{\pre} \subseteq \post$ \\
				\katexpr{$\top \negate{\pre} \program \negate{\post} = \top \program \negate{\post}$}\vspaace};

			\node[mynode,below= 1cm of awpUB, mylabelkattwo={partial incorrectness}] (dslpLB) {
				$\post \subseteq \dslp{\program}{\pre}$ \\
				\katexpr{$\top \negate{\pre} \program \negate{\post} = \top \negate{\pre} \program $}\vspaace \\
				\katexpr{$\pre \program \post \top = \program \post \top$}};
			\node[mynode,right=of dslpLB, mylabel={demonic incorrectness}] (dspLB) {$\post \subseteq \dsp{\program}{\pre}$};
			\node[mynode,below=of dspLB, mylabelkatone={incorrectness logic}] (aspLB) {
				$\post \subseteq \asp{\program}{\pre}$ \\
				\katexpr{$\top \pre \program \post = \top \post$}\vspaace};
			\node[mynode,left=of aspLB, mylabelkatone={angelic partial incorrectness}] (aslpLB) {
					$\post \subseteq \aslp{\program}{\pre}$ \\
					\katexpr{$\top \pre \program \post = \top \program \post$}\vspaace};
				
			\draw[contrapos] (awpUB) edge[<->] (dwlpLB);
			\draw[contrapos] (dwpUB) edge[<->] (awlpLB);
			\draw[contrapos,bend left=12] (awpLB) edge[<->] (dwlpUB); 
			\draw[contrapos,bend left=17] (dwpLB) edge[<->] (awlpUB);
			
			\draw[contrapos] (aspUB) edge[<->] (dslpLB);
			\draw[contrapos] (dspUB) edge[<->] (aslpLB);
			\draw[contrapos,bend left=17] (aspLB) edge[<->] (dslpUB);
			\draw[contrapos,bend left=21] (dspLB) edge[<->,shorten <=9pt] (aslpUB);
			
			\path (dwlpLB) edge[galoisshort] node {} (aspUB);
			\path (awpUB) edge[galoisshort] node {} (dslpLB);
			
			\path (dwpLB) edge[implicationupkat] node {} (awpLB);
			\path (dwpLB) edge[implication] node {} (dwlpLB);
			\path (awpLB) edge[implication] node {} (awlpLB);
			\path (dwlpLB) edge[implicationup] node {} (awlpLB);
			
			
			\path (awpUB) edge[implication] node {} (dwpUB);
			\path (awlpUB) edge[implication] node {} (awpUB);
			\path (awlpUB) edge[implication] node {} (dwlpUB);
			\path (dwlpUB) edge[implication] node {} (dwpUB);
			
			\path (dspLB) edge[implicationdown] node {} (aspLB);
			\path (dspLB) edge[implication] node {} (dslpLB);
			\path (aspLB) edge[implication] node {} (aslpLB);
			\path (dslpLB) edge[implicationdownkat] node {} (aslpLB);
			
			\path (aspUB) edge[implicationdownkat] node {} (dspUB);
			\path (aslpUB) edge[implication] node {} (aspUB);
			\path (aslpUB) edge[implication] node {} (dslpUB);
			\path (dslpUB) edge[implication] node {} (dspUB);

			\def\roundedcorn{0.1cm}
			\def\innerxsepp{-2.5pt}
			\def\innerysepp{-1pt}
			\node (anim1) [katexpr,fit = {(aspLB)}] {}; 
			\node (anim1) [katexpr,fit = {(dslpLB)}] {}; 
			\node (anim1) [katexpr,fit = {(dslpUB)}] {}; 
			\node (anim1) [katexpr,fit = {(aspUB)}] {};  
			\node (anim1) [katexpr,fit = {(dspUB)}] {}; 
			\node (anim1) [katexpr,fit = {(aslpLB)}] {}; 
			
			\node (anim1) [katexpr,fit = {(awpUB)}] {};
			\node (anim1) [katexpr,fit = {(dwlpLB)}] {}; 
			\node (anim1) [katexpr,fit = {(awpLB)}] {}; 
			\node (anim1) [katexpr,fit = {(dwlpUB)}] {};
			
		\end{tikzpicture}
	\end{adjustbox}
	\caption{
		The taxonomy presented in \Cref{fig:taxonomy} with corresponding \TopKAT equations, if existing, in red.
	}
	\label{fig:taxonomy-kat}
\end{figure}
In \Cref{sec:taxonomy}, we presented a taxonomy of predicate transformer logics.
Taking that picture and highlighting the \TopKAT expressible logics in red yields \Cref{fig:taxonomy-kat}, \emph{revealing an asymmetry} which was invisible before:
The pattern of \TopKAT expressible logics in the upper half is not symmetric to the pattern in the lower half: 
This asymmetry is attributed to the fact that we cannot capture branching divergence, whereas the dual concept in forward analyses, confluence of unreachability, does not exist (see \Cref{obs:branching}).

\begin{figure}[t]
	\begin{adjustbox}{max width=\textwidth}
		\begin{tikzpicture}[ 
				align=center,	
				mynode/.style={minimum width=2.7cm},node distance=2.8cm and 3.3cm,
				mylabel/.style n args={1}{label={[label distance=-4mm,font=\tiny,text=DodgerBlue3!60,below]#1}},
				mylabelkatone/.style n args={1}{label={[label distance=-9.5mm,font=\tiny,text=DodgerBlue3!60,below]#1}},
				mylabelkatoneshift/.style n args={1}{label={[label distance=-9.5mm,font=\tiny,text=DodgerBlue3!60,below,xshift=1pt]#1}},
				mylabelkattwo/.style n args={1}{label={[label distance=-13.5mm,font=\tiny,text=DodgerBlue3!60,below]#1}}
				]
			\node[mynode, mylabelkatone={Lisbon logic (angelic tot.\ corr.)}](awpLB) {
				$\pre \subseteq \awp{\program}{\post}$ \\
				\katexpr{$\pre \program \post \top = \pre \top$} \vspaace };
			\node[mynode, mylabel={Hoare logic (total correctness)},below=of awpLB](dwpLB) {
				$\pre \subseteq \dwp{\program}{\post}$  };
			\node[mynode,right=of awpLB, mylabel={angelic partial correctness}](awlpLB) {
				$\pre \subseteq \awlp{\program}{\post}$ };
			\node[mynode,right=of dwpLB, mylabelkattwo={Hoare logic (partial correctness)}] (dwlpLB) {
				$\pre \subseteq \dwlp{\program}{\post}$\\
				\katexpr{$\top \pre \program \post = \top \pre \program $} \vspaace \\
				\katexpr{$\negate{\pre} \program \negate{\post} \top = \program \negate{\post} \top$}};
			\node[mynode,below right= 1.7cm and -1cm of awpLB](intLB) {
				$\pre \subseteq \awpCAPdwlp{\program}{\post}$  };
			\node[mynode,below right= 0.3cm and 0.3cm of awpLB, mylabelkatoneshift={in-between logic}](uniLB) {
				$\pre \subseteq \awpCUPdwlp{\program}{\post}$ \\
				\katexpr{$\pre \program \post\top  = \pre \program \top $ \vspaace}};

			\node[mynode,right= 0.9cm of awlpLB]  (dwpUB) {$\dwp{\program}{\post} \subseteq \pre$};
			\node[mynode,right=of dwpUB]  (dwlpUB) {
				$\dwlp{\program}{\post} \subseteq \pre$ \\
				\katexpr{$\negate{\pre} \program \negate{\post} \top = \negate{\pre} \top$ \vspaace}};
			\node[mynode,below=of dwlpUB]  (awlpUB) {$\awlp{\program}{\post} \subseteq \pre$};			
			\node[mynode,left=of awlpUB,mylabelkattwo={partial incorrectness}] (awpUB) {
				$\awp{\program}{\post} \subseteq \pre$ \\
				\katexpr{$\top \negate{\pre} \program \negate{\post} = \top \negate{\pre} \program $} \vspaace \\
				\katexpr{$\pre \program \post \top = \program \post \top$}};
			\node[mynode,below left= 1.7cm and -1cm of dwlpUB](uniUB) {$\awpCUPdwlp{\program}{\post} \subseteq \pre$};
			\node[mynode,below left= 0.3cm and 0.3cm of dwlpUB](intUB) {
				$\awpCAPdwlp{\program}{\post} \subseteq \pre$ \\
				\katexpr{$\negate{\pre} \program \negate{\post} \top = \negate{\pre} \program \top$} \vspaace};

			\node[mynode,below= 1cm of dwlpLB,mylabelkattwo={Hoare logic (partial correctness)}] (aspUB) {
				$\asp{\program}{\pre} \subseteq \post$ \\
				\katexpr{$\top \pre \program \post = \top \pre \program $} \vspaace \\
				\katexpr{$\negate{\pre} \program \negate{\post} \top = \program \negate{\post} \top$}};
			\node[mynode,left=of aspUB] (aslpUB) {$\aslp{\program}{\pre} \subseteq \post$};
			\node[mynode,below=of aslpUB]  (dslpUB) {
				$\dslp{\program}{\pre} \subseteq \post$ \\
				\katexpr{$\top \negate{\pre} \program \negate{\post} = \top \negate{\post}$}\vspaace};
			\node[mynode,right=of dslpUB] (dspUB) {
				$\dsp{\program}{\pre} \subseteq \post$ \\
				\katexpr{$\top \negate{\pre} \program \negate{\post} = \top \program \negate{\post}$}\vspaace};
			\node[mynode,below left= 1.0cm and -1cm of aspUB](intUBsp) {
				$\aspCAPdslp{\program}{\pre} \subseteq \post$ \\
				\katexpr{$\top \negate{\pre} \program \negate{\post} = \top \program \negate{\post}$}\vspaace};
			\node[mynode,below left= 0.2cm and 0.4cm of aspUB](uniUBsp) {$\aspCUPdslp{\program}{\pre} \subseteq \post$};

			\node[mynode,below= 1cm of awpUB, mylabelkattwo={partial incorrectness}] (dslpLB) {
				$\post \subseteq \dslp{\program}{\pre}$ \\
				\katexpr{$\top \negate{\pre} \program \negate{\post} = \top \negate{\pre} \program $}\vspaace \\
				\katexpr{$\pre \program \post \top = \program \post \top$}};
			\node[mynode,right=of dslpLB, mylabel={demonic incorrectness}] (dspLB) {$\post \subseteq \dsp{\program}{\pre}$};
			\node[mynode,below=of dspLB, mylabelkatone={incorrectness logic}] (aspLB) {
				$\post \subseteq \asp{\program}{\pre}$ \\
				\katexpr{$\top \pre \program \post = \top \post$}\vspaace};
			\node[mynode,left=of aspLB, mylabelkatone={angelic partial incorrectness}] (aslpLB) {
					$\post \subseteq \aslp{\program}{\pre}$ \\
					\katexpr{$\top \pre \program \post = \top \program \post$}\vspaace};
				\node[mynode,below right= 1cm and -1cm of dslpLB](uniLBsp) {
				$\post \subseteq \aspCUPdslp{\program}{\pre}$ \\
				\katexpr{$\top \pre \program \post = \top \program \post$}\vspaace};
			\node[mynode,below right= 0.2cm and 0.4cm of dslpLB](intLBsp) {$\post \subseteq \aspCAPdslp{\program}{\pre}$};

			\draw[contrapos] (awpUB) edge[<->] (dwlpLB);
			\draw[contrapos] (dwpUB) edge[<->] (awlpLB);
			\draw[contrapos,bend left=10] (awpLB) edge[<->] (dwlpUB);
			\draw[contrapos,bend left=10] (dwpLB) edge[<->] (awlpUB);
			\draw[contrapos] (uniLB) edge[<->] (intUB);
			\draw[contrapos] (intLB) edge[<->] (uniUB);
			\draw[contrapos] (uniLBsp) edge[<->] (intUBsp);
			\draw[contrapos] (intLBsp) edge[<->] (uniUBsp);
			
			\draw[contrapos] (aspUB) edge[<->] (dslpLB);
			\draw[contrapos,bend left=13] (aspLB) edge[<->] (dslpUB);
			\draw[contrapos] (dspUB) edge[<->] (aslpLB);
			\draw[contrapos,bend left=13] (dspLB) edge[<->] (aslpUB);
			
			\path (dwlpLB) edge[galoisshort] node {} (aspUB);
			\path (awpUB) edge[galoisshort] node {} (dslpLB);
			
			\path (dwpLB) edge[implicationupkat] node {} (awpLB);
			\path (dwpLB) edge[implication] node {} (dwlpLB);
			\path (awpLB) edge[implication] node {} (awlpLB);
			\path (dwlpLB) edge[implicationup] node {} (awlpLB);
			\path (intLB) edge[implicationupmid] node {} (awpLB);
			\path (intLB) edge[implication] node {} (dwlpLB);
			\path (awpLB) edge[implication] node {} (uniLB);
			\path (dwlpLB) edge[implication] node {} (uniLB.325);
			\path (dwpLB) edge[implication,ultra thick] node {} (intLB.197);
			\path (intLB.71) edge[implication] node {} (uniLB.219);
			\path (uniLB.25) edge[implicationupmid,ultra thick] node {} (awlpLB);
			
			\path (awpUB) edge[implication] node {} (dwpUB);
			\path (awlpUB) edge[implication] node {} (awpUB);
			\path (awlpUB) edge[implication] node {} (dwlpUB);
			\path (dwlpUB) edge[implication] node {} (dwpUB);
			\path (uniUB) edge[implication] node {} (awpUB);
			\path (uniUB) edge[implicationshort] node {} (dwlpUB);
			\path (awpUB) edge[implicationshort] node {} (intUB);
			\path (dwlpUB) edge[implication] node {} (intUB);
			\path (awlpUB) edge[implication,ultra thick] node {} (uniUB.343);
			\path (uniUB.109) edge[implication] node {} (intUB.321);
			\path (intUB.155) edge[implication,ultra thick] node {} (dwpUB);
			
			\path (dspLB) edge[implicationdown] node {} (aspLB);
			\path (dspLB) edge[implication] node {} (dslpLB);
			\path (aspLB) edge[implication] node {} (aslpLB);
			\path (dslpLB) edge[implicationdownkat] node {} (aslpLB);
			\path (dspLB) edge[equivalencemid,ultra thick] node {} (intLBsp);
			\path (uniLBsp) edge[equivalencekat,ultra thick] node {} (aslpLB);
			
			\path (aspUB) edge[implicationdownkat] node {} (dspUB);
			\path (aslpUB) edge[implication] node {} (aspUB);
			\path (aslpUB) edge[implication] node {} (dslpUB);
			\path (dslpUB) edge[implication] node {} (dspUB);
			\path (dspUB) edge[equivalencekat,ultra thick] node {} (intUBsp);
			\path (uniUBsp) edge[equivalence,ultra thick] node {} (aslpUB);

			\def\roundedcorn{0.1cm}
			\def\innerxsepp{-2.5pt}
			\def\innerysepp{-1pt}
			\node (anim1) [katexpr,fit = {(aspLB)}] {}; 
			\node (anim1) [katexpr,fit = {(dslpLB)}] {}; 
			\node (anim1) [katexpr,fit = {(dslpUB)}] {}; 
			\node (anim1) [katexpr,fit = {(aspUB)}] {};  
			\node (anim1) [katexpr,fit = {(dspUB)}] {}; 
			\node (anim1) [katexpr,fit = {(aslpLB)}] {}; 
			
			\node (anim1) [katexpr,fit = {(awpUB)}] {};
			\node (anim1) [katexpr,fit = {(dwlpLB)}] {}; 
			\node (anim1) [katexpr,fit = {(awpLB)}] {}; 
			\node (anim1) [katexpr,fit = {(dwlpUB)}] {};
			
			\node (anim1) [katexpr,fit = {(uniLB)}] {};
			\node (anim1) [katexpr,fit = {(intUB)}] {};
			
			\node (anim1) [katexpr,fit = {(intUBsp)}] {};
			\node (anim1) [katexpr,fit = {(uniLBsp)}] {};
			
		\end{tikzpicture}
	\end{adjustbox}
	\caption{
		The taxonomy including union and intersection logics and the \TopKAT equations, if existing, in red.
	}
	\label{fig:taxonomy-full}
\end{figure}

\subsection{Adding the In-Between Logics}
\label{ssec:adding_unnamed}

Missing from \Cref{fig:taxonomy-kat} are the logics that arise from unions and intersections of predicate transformers, as discussed in \Cref{ssec:unnamed}.
In \Cref{fig:taxonomy-full}, we include these logics at the center of each quadrant. 
For $\wpsymbol$ transformers, this yields indeed new logics.
For $\spsymbol$ transformers, the resulting logics are equivalent to existing ones, see \Cref{ssec:backward-forward}.
Notably, neither intersections nor unions of any $\spsymbol$ transformers generate new logics, nor do any of the \wpsymbol\ transformers, apart from \awpsymbol\ and \dwlpsymbol.
In the complete picture, the source of the earlier-mentioned asymmetry becomes clearer:
The bold connectives in the top half represent only implications, while those in the lower half represent equivalences, a distinction once again driven by \Cref{obs:branching}.

\subsection{Absence of Branching Divergence}
\label{ssec:assumption-2}

In \Cref{ssec:assumptions}, we examined how assumptions on $\program$ let our taxonomy partially collapse.
With the addition of union and intersection logics, we can now consider another assumption:
The absence of branching divergence.
If we assume that branching divergence is not present, the taxonomy becomes fully dual, as the bold implications in the upper half of the diagram turn into equivalences, 
see \Cref{fig:taxonomy-branching}, where the logics added in the center collapse into the corners.
Meanwhile, the bottom half remains unchanged, as the logics within the blue dashed lines were already equivalent.

\begin{figure}[t]
	\begin{adjustbox}{max width=\textwidth}
		\begin{tikzpicture}
			[ 
				mynode/.style={minimum width=2.7cm},node distance=2.5cm and 3.2cm,
				mylabel/.style n args={1}{label={[label distance=-4mm,font=\tiny,text=DodgerBlue3!60,below]#1}},
				mylabelkat/.style n args={1}{label={[label distance=-5mm,font=\tiny,text=DodgerBlue3!60,below]#1}}
			]
			\node[mynode](awpLB) {
				$\pre \subseteq \awp{\program}{\post}$ };
			\node[mynode,below=of awpLB](dwpLB) {
				$\pre \subseteq \dwp{\program}{\post}$  };
			\node[mynode,right=of awpLB](awlpLB) {
				$\pre \subseteq \awlp{\program}{\post}$ };
			\node[mynode,right=of dwpLB] (dwlpLB) {
				$\pre \subseteq \dwlp{\program}{\post}$};
			\node[mynode,below right= 1.7cm and -1cm of awpLB](intLB) {
				$\pre \subseteq \awpCAPdwlp{\program}{\post}$  };
			\node[mynode,below right= 0.3cm and 0.3cm of awpLB](uniLB) {
				$\pre \subseteq \awpCUPdwlp{\program}{\post}$ };

			\node[mynode,right= 0.9cm of awlpLB]  (dwpUB) {$\dwp{\program}{\post} \subseteq \pre$};
			\node[mynode,right=of dwpUB]  (dwlpUB) {
				$\dwlp{\program}{\post} \subseteq \pre$ };
			\node[mynode,below=of dwlpUB]  (awlpUB) {$\awlp{\program}{\post} \subseteq \pre$};			
			\node[mynode,left=of awlpUB] (awpUB) {
				$\awp{\program}{\post} \subseteq \pre$};
			\node[mynode,below left= 1.7cm and -1cm of dwlpUB](uniUB) {$\awpCUPdwlp{\program}{\post} \subseteq \pre$};
			\node[mynode,below left= 0.3cm and 0.3cm of dwlpUB](intUB) {
				$\awpCAPdwlp{\program}{\post} \subseteq \pre$ };

			\node[mynode,below= 1cm of dwlpLB] (aspUB) {$\asp{\program}{\pre} \subseteq \post$};
			\node[mynode,left=of aspUB] (aslpUB) {$\aslp{\program}{\pre} \subseteq \post$};
			\node[mynode,below=of aslpUB]  (dslpUB) {$\dslp{\program}{\pre} \subseteq \post$};
			\node[mynode,below=of aspUB] (dspUB) {$\dsp{\program}{\pre} \subseteq \post$};
			\node[mynode,below left= 1.5cm and -1cm of aspUB](intUBsp) {$\aspCAPdslp{\program}{\pre} \subseteq \post$};
			\node[mynode,below left= 0.5cm and 0.3cm of aspUB](uniUBsp) {$\aspCUPdslp{\program}{\pre} \subseteq \post$};
			
			\node[mynode,below= 1cm of awpUB] (dslpLB) {$\post \subseteq \dslp{\program}{\pre}$};
			\node[mynode,right=of dslpLB] (dspLB) {$\post \subseteq \dsp{\program}{\pre}$};
			\node[mynode,below=of dslpLB] (aslpLB) {$\post \subseteq \aslp{\program}{\pre}$};
			\node[mynode,below=of dspLB] (aspLB) {$\post \subseteq \asp{\program}{\pre}$};
			\node[mynode,below right= 1.5cm and -1cm of dslpLB](uniLBsp) {$\post \subseteq \aspCUPdslp{\program}{\pre}$};
			\node[mynode,below right= 0.5cm and 0.3cm of dslpLB](intLBsp) {$\post \subseteq \aspCAPdslp{\program}{\pre}$};
			
			\draw[weakcontrapos] (awpUB) edge[<->] (dwlpLB);
			\draw[weakcontrapos] (dwpUB) edge[<->] (awlpLB);
			\draw[weakcontrapos,bend left=8] (awpLB) edge[<->] (dwlpUB);
			\draw[weakcontrapos,bend left=8] (dwpLB) edge[<->] (awlpUB);
			\draw[weakcontrapos] (uniLB) edge[<->] (intUB);
			\draw[weakcontrapos] (intLB) edge[<->] (uniUB);
			\draw[weakcontrapos] (uniLBsp) edge[<->] (intUBsp);
			\draw[weakcontrapos] (intLBsp) edge[<->] (uniUBsp);
			
			\draw[weakcontrapos] (aspUB) edge[<->] (dslpLB);
			\draw[weakcontrapos,bend left=8] (aspLB) edge[<->] (dslpUB);
			\draw[weakcontrapos] (dspUB) edge[<->] (aslpLB);
			\draw[weakcontrapos,bend left=8] (dspLB) edge[<->] (aslpUB);
			
			\path (aspUB) edge[weakgalois] node {} (dwlpLB);
			\path (awpUB) edge[weakgalois] node {} (dslpLB);
			
			\path (dwpLB) edge[weakimplication] node {} (awpLB);
			\path (dwpLB) edge[weakimplication] node {} (dwlpLB);
			\path (awpLB) edge[weakimplication] node {} (awlpLB);
			\path (dwlpLB) edge[weakimplication] node {} (awlpLB);
			\path (intLB) edge[weakimplication] node {} (awpLB);
			\path (intLB) edge[weakimplication] node {} (dwlpLB);
			\path (awpLB) edge[weakimplication] node {} (uniLB);
			\path (dwlpLB) edge[implicationshort,lightgray] node {} (uniLB);
			\path (dwpLB) edge[equivalence,ultra thick] node {} (intLB.193);
			\path (intLB.95) edge[weakimplication] node {} (uniLB.265);
			\path (uniLB.17) edge[equivalence,ultra thick] node {} (awlpLB);
			
			\path (awpUB) edge[weakimplication] node {} (dwpUB);
			\path (awlpUB) edge[weakimplication] node {} (awpUB);
			\path (awlpUB) edge[weakimplication] node {} (dwlpUB);
			\path (dwlpUB) edge[weakimplication] node {} (dwpUB);
			\path (uniUB) edge[weakimplication] node {} (awpUB);
			\path (uniUB) edge[weakimplication] node {} (dwlpUB);
			\path (awpUB) edge[weakimplication] node {} (intUB);
			\path (dwlpUB) edge[weakimplication] node {} (intUB);
			\path (awlpUB) edge[equivalence,ultra thick] node {} (uniUB.347);
			\path (uniUB.85) edge[weakimplication] node {} (intUB.275);
			\path (intUB.163) edge[equivalence,ultra thick] node {} (dwpUB);
			
			\path (dspLB) edge[weakimplication] node {} (aspLB);
			\path (dspLB) edge[weakimplication] node {} (dslpLB);
			\path (aspLB) edge[weakimplication] node {} (aslpLB);
			\path (dslpLB) edge[weakimplication] node {} (aslpLB);
			\path (dspLB) edge[equivalence,ultra thick] node {} (intLBsp);
			\path (uniLBsp) edge[equivalence,ultra thick] node {} (aslpLB);
			
			\path (aspUB) edge[weakimplication] node {} (dspUB);
			\path (aslpUB) edge[weakimplication] node {} (aspUB);
			\path (aslpUB) edge[weakimplication] node {} (dslpUB);
			\path (dslpUB) edge[weakimplication] node {} (dspUB);
			\path (dspUB) edge[equivalence,ultra thick] node {} (intUBsp);
			\path (uniUBsp) edge[equivalence,ultra thick] node {} (aslpUB);

			\def\roundedcorn{0.1cm}
			\def\innerxsepp{-2.5pt}
			\def\innerysepp{-1pt}
			\node (anim1) [katexpr,fit = {(aspLB)}] {}; 
			\node (anim1) [katexpr,fit = {(dslpLB)}] {}; 
			\node (anim1) [katexpr,fit = {(dslpUB)}] {}; 
			\node (anim1) [katexpr,fit = {(aspUB)}] {};  
			\node (anim1) [katexpr,fit = {(dspUB)}] {}; 
			\node (anim1) [katexpr,fit = {(aslpLB)}] {}; 
			
			\node (anim1) [katexpr,fit = {(awpUB)}] {};
			\node (anim1) [katexpr,fit = {(dwlpLB)}] {}; 
			\node (anim1) [katexpr,fit = {(awpLB)}] {}; 
			\node (anim1) [katexpr,fit = {(dwlpUB)}] {};
			
			\node (anim1) [katexpr,fit = {(uniLB)}] {};
			\node (anim1) [katexpr,fit = {(intUB)}] {};
			
			\node (anim1) [katexpr,fit = {(intUBsp)}] {};
			\node (anim1) [katexpr,fit = {(uniLBsp)}] {};

			\node (anim1) [katexpr,fit = {(awlpLB)}] {};
			\node (anim1) [katexpr,fit = {(dwpUB)}] {};
			
			\node (anim1) [collapsewp,fit = {(awlpUB)(uniUB)}] {}; 
			\node (anim1) [collapsewp,fit = {(awlpLB)(uniLB)}] {}; 
			\node (anim1) [collapsewp,fit = {(dwpLB)(intLB)}] {}; 
			\node (anim1) [collapsewp,fit = {(dwpUB)(intUB)}] {};

			\node (anim1) [collapsewp,fit = {(aslpUB)(uniUBsp)}] {}; 
			\node (anim1) [collapsewp,fit = {(aslpLB)(uniLBsp)}] {}; 
			\node (anim1) [collapsewp,fit = {(dspLB)(intLBsp)}] {}; 
			\node (anim1) [collapsewp,fit = {(dspUB)(intUBsp)}] {};
		
		\end{tikzpicture}
	\end{adjustbox}
	\caption{
		Collapse of program logics under certain assumptions, part 3.
		Blue \emph{dashed} boxes enclose logics that collapse (i.e.\ the implications become equivalences) under \emph{absence of branching divergence}.
		Note that the logics in the lower half are equivalent in general, we mark them to demonstrate the symmetry.
	}
	\label{fig:taxonomy-branching}
\end{figure}

\begin{theorem}[Branching Divergence Collapse]
	\label{theo:branching}
	Let $\program$ be a program and $\post$ be a postcondition.
	If computation of $\program$ either always diverges or always terminates, we have
	\[
		\dwp{\program}{\post} \eeq \awp{\program}{\post} \cap \dwlp{\program}{\post}
	\qqand
		\awlp{\program}{\post} \eeq \awp{\program}{\post} \cup \dwlp{\program}{\post}.
	\]
\end{theorem}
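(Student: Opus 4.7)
The plan is to argue pointwise in the initial state $\sigma \in \States$, doing a case split on the two possibilities permitted by the hypothesis: either every computation of $\program$ from $\sigma$ diverges, or every computation of $\program$ from $\sigma$ terminates (i.e.\ $\program$ cannot diverge on $\sigma$). In the second case, note that $\sem{\program}{\sigma}$ is nonempty, since by assumption at least one computation terminates and this computation contributes a final state.

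\textbf{Case 1 (must diverge).} Here $\program$ cannot terminate from $\sigma$, so $\sem{\program}{\sigma} = \emptyset$. A direct unfolding of the four definitions (\Cref{def:wp,def:wlp}) yields
\[
	\awp{\program}{\post}(\sigma) = \false, \quad \dwp{\program}{\post}(\sigma) = \false, \quad \dwlp{\program}{\post}(\sigma) = \true, \quad \awlp{\program}{\post}(\sigma) = \true,
\]
from which $\dwp{\program}{\post}(\sigma) = \false = \false \wedge \true = (\awp{\program}{\post} \cap \dwlp{\program}{\post})(\sigma)$ and $\awlp{\program}{\post}(\sigma) = \true = \false \vee \true = (\awp{\program}{\post} \cup \dwlp{\program}{\post})(\sigma)$ follow immediately.

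\textbf{Case 2 (must terminate).} Here $\program$ cannot diverge on $\sigma$, and the \enquote{otherwise} branches of $\dwpsymbol$ and $\awlpsymbol$ apply. All four transformers then reduce to the obvious quantification over the nonempty set $\sem{\program}{\sigma}$, giving $\awp{\program}{\post}(\sigma) = \awlp{\program}{\post}(\sigma) = \bigvee_{\tau \in \sem{\program}{\sigma}} \post(\tau)$ and $\dwp{\program}{\post}(\sigma) = \dwlp{\program}{\post}(\sigma) = \bigwedge_{\tau \in \sem{\program}{\sigma}} \post(\tau)$. The two equations then amount to the trivial Boolean identities $(\bigvee \post(\tau)) \wedge (\bigwedge \post(\tau)) = \bigwedge \post(\tau)$ and $(\bigvee \post(\tau)) \vee (\bigwedge \post(\tau)) = \bigvee \post(\tau)$, where the first identity uses that the underlying set is nonempty (so the \enquote{$\bigwedge$} direction implies the \enquote{$\bigvee$} direction).

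\textbf{Main obstacle.} The argument itself is routine once the case split is made; the only subtlety is being careful about the edge case in which $\program$ terminates on $\sigma$ but the set of reachable final states $\sem{\program}{\sigma}$ could a priori be empty. The hypothesis (plus the convention that \enquote{termination} means there \emph{is} a terminating computation) rules this out, and this is precisely the content discussed around \Cref{fig:branching} and \Cref{obs:branching}: without excluding branching divergence, the identities fail (as witnessed by the counterexamples sketched in \Cref{ssec:backward-forward}). No induction on program structure is needed, since the claim is a semantic identity about the collecting semantics at a fixed~$\sigma$.
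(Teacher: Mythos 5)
Your proof is correct and takes essentially the same approach as the paper's: a pointwise case split on whether $\program$ must diverge or must terminate on $\sigma$, followed by unfolding the four definitions from \Cref{def:wp,def:wlp} (the paper proves only the union equation explicitly and dismisses the intersection one as dual, whereas you check both). The only tiny imprecision is that nonemptiness of $\sem{\program}{\sigma}$ is in fact needed for \emph{both} Boolean identities in Case~2 (if the set were empty, the big disjunction would be $\false$ while the big conjunction would be $\true$, breaking each equation), not just the first one; since you establish nonemptiness for the whole case, the argument is unaffected.
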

\begin{proof}
	See \Cref{ssec:proof-branching}. 
\end{proof}

\subsection{On the Semantics of Syntactic \TopKAT Transformations}
\label{ssec:transformations}

\Cref{fig:taxonomy-full} shows \TopKAT equations for each logic, if expressible.
As mentioned earlier, the equations are syntactically very similar (see \Cref{sec:kat}).
In fact, the equations can be transformed into others by systematical syntactic means.
Consider, for instance, $\top \pre \program \post = \top \pre \program$ for partial correctness.
When moving $\top$ from left to right, we obtain $\pre \program \post \top = \pre \program \top$ for the in-between logic (see \Cref{ssec:unnamed}).
However, just switching $\top$-sides in itself is not a meaningful transformation: Switching sides on $\pre \program \post \top = \pre \top$ (Lisbon logic) yields $\top \pre \program \post = \top \pre$: By comparing \emph{co}domains, this equation would somehow treat the \emph{pre}condition $\pre$ as a set of \emph{final} states (i.e.\ a \emph{post}condition) which is not meaningful.

Another transformation is adding or removing $\program$ on the right-hand side of the equation (depending on whether or not $\program$ is already present or not).
For example, from $\top \pre \program \post = \top \post$ (incorrectness logic) we obtain $\top \pre \program \post = \top \program \post$ (angelic partial incorrectness).
Since adding (removing) $\program$ can be seen as a filtering (or not) of unreachable states, this seems like a meaningful transformation.
However, from  $\top \pre \program \post = \top \pre \program$ (partial correctness) we obtain $\top \pre \program \post = \top \pre$, which again interprets $\pre$ as a postcondition.
Consequently, when searching for meaningful syntactic transformations, we must ensure that $\pre$ is a pre- and $\post$ is a postcondition.
We propose hence the following set of meaningful transformations:%
\begin{enumerate}
	\item[(t1)]
	Switch the $\top$-side \emph{and} switch between $\pre$ and $\post$ (or $\pre \program$ and $\program \post$) on the right-hand side.
	This corresponds to switching between incorrectness and correctness reasoning.
	For example, we get from partial \emph{correctness} ($\top \pre \program \post = \top \pre \program$) to partial \emph{incorrectness} ($\pre \program \post\top = \program \post \top $).
	In \Cref{fig:taxonomy-full}, this corresponds to going from the upper to the lower part.
	
	\item[(t2)] Switching from $* \top$ to $ * \program \top$ or $\top *$ to $\top \program *$. 
	This corresponds to switching from liberal to non-liberal reasoning, which makes sense as adding the program to the equation concentrates the reasoning on reachable states or dually states from which computation terminates.
	By requiring $\top$ to be a part of the equation, we avoid the issue described above and ensure that pre- and postconditions are interpreted correctly.

	\item[(t3)] Switching from $\pre \program$ to $\program \post$ and negating all conditions.
	Broadly speaking, this corresponds to switching between angelic and demonic resolution of nondeterminism.
	For example, we go from \emph{angelic} partial incorrectness ($\top \pre \program \post = \top \program \post$) to \emph{demonic} partial incorrectness ($\top \negate{\pre} \program \negate{\post} = \top \negate{\pre} \program$).
	However, starting with partial correctness, for example, we know that the corresponding angelic variant is inexpressible in \TopKAT.
	Applying this syntactic transformation on the equation $\negate{\pre} \program \negate{\post} \top = \program \negate{\post} \top$ for angelic total correctness, we end up with $\pre \program \post \top = \program \post \top$.
	This characterizes the in-between logic, which under the assumption of nonexistence of branching divergence is equivalent to partial correctness.
	So, even though we do not end up exactly with the demonic variant, we end up somewhere \enquote{close}.
\end{enumerate}%
Applying combinations of these three transformations, we can get from each logic in \Cref{fig:taxonomy-kat} to any other logic.
Additionally, the \TopKAT expressible logics are closed under t1 -- t3.
This speaks in favor of the meaningfulness of these transformations.
Nevertheless, the transformations considered appear still somewhat arbitrary to us and we wonder whether there is more underlying structure that we can find in these syntactic transformations.


\section{Open Questions}
\label{sec:open}
%


%
%
%
\paragraph{Complexity of Weakest Pre- and Strongest Postcondition Analyses}
Throughout this paper, we compared analyses using \wpsymbol\ with analyses using \spsymbol.
On several occasions, it appeared as if one is simpler than the other.
For example, in \Cref{sec:transformer}, we saw that there are no inductive definitions for \dspsymbol\ and \aslpsymbol.
This suggests that these transformers are harder to compute.
Is there an intuitive reason for this limitation?
Are there also \wpsymbol\ transformers for which no inductive definitions exists?

On the other hand, in \Cref{fig:wp}, we distinguished \emph{seven} coreachability classes related to $\wpsymbol$, but only \emph{four} reachability classes related to $\spsymbol$.
Also in \Cref{fig:taxonomy-full}, the structure of the bottom (i.e.: the $\spsymbol$) half, seems simpler.
This in turn leaves the impression that \wpsymbol\ is more complex than \spsymbol.

A related observation is that we can characterize each reachability class in \Cref{fig:sp} using (Boolean combinations of) $\spsymbol$ transformers:
\begin{enumerate}
    \item is \dspsymbol.
    \item is \aspsymbol\ without \dspsymbol.
    \item is \dslpsymbol\ without \dspsymbol.
    \item are all states without \aslpsymbol.
\end{enumerate}%
Making a similar list for coreachability classes using \wpsymbol\ transformers is not possible:
Classes (3) and (5) in \Cref{fig:wp} are always either contained in any particular $\wpsymbol$ transformer or not, as evidenced by the fact that they are always either both colored green or both left white.
The difference between (3) and (5) is that, while computation can terminate both inside and outside of the postcondition, (5) additionally allows divergence.
Is there a sensible transformer that we can define to distinguish classes (3) and (5)?
We also wonder whether there is a deeper reason as to why the full separation is possible for coreachability classes, but not for reachability classes.

Additionally, the \spsymbol\ transformers are complete in the sense that combining them does not yield any new logics, as can be seen in \Cref{fig:taxonomy-full}.
This is not the case for the $\wpsymbol$ setting, validated by the in-between logics presented in \Cref{ssec:unnamed}, which were defined as the union and intersection of transformers.
There seem to be more intricacies in the $\wpsymbol$ setting than in the $\spsymbol$ one, but this seems difficult to pin down.%
\paragraph{Kleene Algebraic Reasoning}
While numerous \TopKAT equations express the same program property, determining which should constitute the \enquote{canonical} equations remains an open challenge.
We chose one such a set in \Cref{sec:kat}.
Based on these, we saw that there are some logics which \emph{cannot} be expressed in \TopKAT, but \emph{can} be expressed using predicate transformers.
Can we thus infer that predicate transformers are more powerful than \TopKAT? 
Addressing this requires an argument on why a certain set of \TopKAT equations is indeed canonical.
Additionally, as previously noted in \Cref{sec:kat}, there exist extensions of Kleene algebra that can handle divergence.
With such an algebra, we might get new insights into the expressiveness of \KAT compared to predicate transformers.

Four logics were expressible using predicate transformers but required a \emph{system} of two \TopKAT equations, see \Cref{sec:kat,sec:taxonomy-rev}.
Are these logics inherently more complex than others, and if so, in what sense?
Are there formal arguments we can find for the complexity of program logics?%
\paragraph{Galois Connections in \TopKAT}
We have seen that there exist Galois connections between some of the logics we have considered.
These Galois connections do not seem to be visible yet in the \TopKAT equations.
We wonder whether there exists \emph{syntactical} transformations of \TopKAT equations that correspond to the Galois connections, perhaps under a different set of canonical \TopKAT (in)equations.
In this context, we also wonder whether the \emph{nonexistence} of other Galois connections can also be made visible somehow in \TopKAT.%
\paragraph{Algebraic Reasoning about Divergence}
Kleene algebra was extended to structures such as demonic refinement algebra \cite{von2004towards} or omega algebra \cite{cohen2000separation} in order to facilitate the analysis of nontermination.
Particularly interesting for our setting is \emph{weak omega algebra} \cite{moller2005wp}, which always has a top element and thus fits well into the setting of \TopKAT.
For a more elaborate discussion of the algebraic treatment of divergence, we refer to \cite{jules2011algebraic}.
In future work, we aim to explore how our findings can be extended to algebras capable of more nuanced handling of nontermination.%
%
%
\paragraph{Deriving Proof Rules}
Most of the predicate transformers presented in this paper are accompanied by a set of inductive rules.
For the novel transformers \dspsymbol\ and \aslpsymbol, establishing such rules is more challenging but possible, as discussed in \Cref{ssec:backward-forward}.
\citet{cousot2024calculational} demonstrates how proof systems for program logics can be constructed via abstractions of the semantics.
It would be interesting to explore whether proof rules could similarly be derived from \TopKAT equations.%
\paragraph{Healthy Transformers}
In theory, the seven coreachability classes give rise to $2^7 = 128$ $\wpsymbol$ transformers, and the four reachability classes give rise to $2^4 = 16$ $\spsymbol$ transformers.
Which of these are sensible or meaningful?
Dijkstra assessed the meaningfulness of predicate transformers based on \emph{healthiness conditions} like strictness, monotonicity, conjunctiveness, etc.
We wonder how many and which ones of the $\wpsymbol$ and $\spsymbol$ transformers meet these criteria.%
\paragraph{Relation to Runtime Transformers}
An extension of weakest pre transformers is the expected runtime transformer \cite{kaminski2018weakest}.
In accordance with \Cref{theo:termination-properties}, the expected runtime transformer can be used to prove termination of programs.
An interesting open question is whether a similar approach can be defined for reachability.
One idea would be to use a strongest postcondition style transformer to compute the expected number of steps required to reach a state.
If this number is finite, we might conclude that the state is reachable.%
\paragraph{The Coin}
Finally, in \Cref{ssec:asymmetries}, we spoke about the asymmetries in the taxonomy \Cref{fig:taxonomy}.
We have yet to find a satisfactory answer to why the \enquote{two sides of the same coin} are not mirrored.%
%
%

\section{Conclusion}
\label{sec:conclusion}

\begin{figure}[t]
    \vspace*{-1\intextsep}
    \begin{adjustbox}{max width=\textwidth}
        \begin{tikzpicture}[
            mynode/.style={minimum width=2.7cm},node distance=0.9cm and 0.9cm,
            mylabel/.style n args={1}{label={[label distance=-4mm,font=\tiny,text=DodgerBlue3!60,below]#1}}
            ]
            \node[mynode, mylabel={Lisbon logic (angelic tot.\ corr.)}](awpLB) {$\pre \subseteq \awp{\program}{\post}$};
            \node[mynode, mylabel={Hoare logic (total correctness)},below=of awpLB](dwpLB) {$\pre \subseteq \dwp{\program}{\post}$};
            \node[mynode, mylabel={angelic partial correctness},right=of awpLB](awlpLB) {$\pre \subseteq \awlp{\program}{\post}$};
            \node[mynode,right=of awlpLB]  (dwpUB) {$\dwp{\program}{\post} \subseteq \pre$};
            \node[mynode,right=of dwpUB]  (dwlpUB) {$\dwlp{\program}{\post} \subseteq \pre$};
            \node[mynode,below=of dwlpUB]  (awlpUB) {$\awlp{\program}{\post} \subseteq \pre$};
            
            \node[mynode, mylabel={Hoare logic (partial correctness)},below=of awlpLB] (dwlpLB) {$\pre \subseteq \dwlp{\program}{\post}$};
            \node[mynode,mylabel={partial incorrectness},right=of dwlpLB] (awpUB) {$\awp{\program}{\post} \subseteq \pre$};
            \node[mynode,mylabel={Hoare logic (partial correctness)}, below=of dwlpLB] (aspUB) {$\asp{\program}{\pre} \subseteq \post$};
            \node[mynode,below=of dwpLB] (aslpUB) {$\aslp{\program}{\pre} \subseteq \post$};
            \node[mynode,below=of aslpUB]  (dslpUB) {$\dslp{\program}{\pre} \subseteq \post$};
            \node[mynode,below=of aspUB] (dspUB) {$\dsp{\program}{\pre} \subseteq \post$};
            
            \node[mynode, mylabel={partial incorrectness},right=of aspUB] (dslpLB) {$\post \subseteq \dslp{\program}{\pre}$};
            \node[mynode, mylabel={demonic incorrectness},right=of dslpLB] (dspLB) {$\post \subseteq \dsp{\program}{\pre}$};
            \node[mynode, mylabel={angelic partial incorrectness},below=of dslpLB] (aslpLB) {$\post \subseteq \aslp{\program}{\pre}$};
            \node[mynode, mylabel={incorrectness logic},below=of dspLB] (aspLB) {$\post \subseteq \asp{\program}{\pre}$};
            
            \draw[contrapos] (awpUB) edge[<->] (dwlpLB);
            \draw[contrapos] (dwpUB) edge[<->] (awlpLB);
            \draw[contrapos,bend left=12] (awpLB) edge[<->] (dwlpUB);
            \draw[contrapos,bend left=12] (dwpLB) edge[<->] (awlpUB);
            
            \draw[contrapos] (aspUB) edge[<->] (dslpLB);
            \draw[contrapos,bend left=12] (aspLB) edge[<->] (dslpUB);
            \draw[contrapos] (dspUB) edge[<->] (aslpLB);
            \draw[contrapos,bend left=12] (dspLB) edge[<->] (aslpUB);
            
            \path (dwlpLB) edge[galoisshort] node {} (aspUB);
            \path (awpUB) edge[galoisshort] node {} (dslpLB);

    %
    %
    %

            \path (dwpLB) edge[implicationup] node {} (awpLB);
            \path (dwpLB) edge[implication] node {} (dwlpLB);
            \path (awpLB) edge[implication] node {} (awlpLB);
            \path (dwlpLB) edge[implicationup] node {} (awlpLB);
            
            \path (awpUB) edge[implication] node {} (dwpUB);
            \path (awlpUB) edge[implication] node {} (awpUB);
            \path (awlpUB) edge[implication] node {} (dwlpUB);
            \path (dwlpUB) edge[implication] node {} (dwpUB);
            
            \path (dspLB) edge[implicationdown] node {} (aspLB);
            \path (dspLB) edge[implication] node {} (dslpLB);
            \path (aspLB) edge[implication] node {} (aslpLB);
            \path (dslpLB) edge[implicationdown] node {} (aslpLB);
            
            \path (aspUB) edge[implicationdown] node {} (dspUB);
            \path (aslpUB) edge[implication] node {} (aspUB);
            \path (aslpUB) edge[implication] node {} (dslpUB);
            \path (dslpUB) edge[implication] node {} (dspUB);
            
            \path (aslpUB) edge[implication] node {} (aspUB);

            \node (ax1start) at ($(dwpLB)!0.5!(aslpUB) + (-1.5,-0.1)$) {};
            \node (ax1end) at ($(dspLB)!0.5!(awlpUB) + (1.5,-0.1) $) {};
            \path (ax1start) edge[axis,draw=rot] node [pos=1.02,opaque]{\textbf{\textcolor{rot}{(1)}}} (ax1end);

            \node (ax2start) at ($(awpLB)!0.5!(awlpLB) + (0,1)$) {};
            \node (ax2end) at ($(dslpUB)!0.5!(dspUB) + (0,-1) $) {};
            \path (ax2start) edge[axis,draw=mygreen] node [pos=1.03,opaque]{\textbf{\textcolor{mygreen}{(2)}}}(ax2end);

            \node (ax2bstart) at ($(dwpUB)!0.5!(dwlpUB) + (0,1)$) {};
            \node (ax2bend) at ($(aslpLB)!0.5!(aspLB) + (0,-1) $) {};
            \path (ax2bstart) edge[axis,draw=mygreen] node [pos=1.03,opaque]{\textbf{\textcolor{mygreen}{(2)}}} (ax2bend);

            \node (ax3start) at ($(awpLB)!0.5!(dwpLB) + (-1.5,0)$) {};
            \node (ax3end) at ($(dwlpUB)!0.5!(awlpUB) + (1.5,0) $) {};
            \path (ax3start) edge[axis,draw=violet] node [pos=1.02,opaque]{\textbf{\textcolor{violet}{(3)}}} (ax3end);

            \node (ax3bstart) at ($(aslpUB)!0.5!(dslpUB) + (-1.5,0)$) {};
            \node (ax3bend) at ($(dspLB)!0.5!(aspLB) + (1.5,0) $) {};
            \path (ax3bstart) edge[axis,draw=violet] node [pos=1.02,opaque]{\textbf{\textcolor{violet}{(3)}}} (ax3bend);

            \end{tikzpicture}
    \end{adjustbox}
    \caption{
        A taxonomy of predicate transformer-based program logics with axes (1) to (3) corresponding to the dimensions of program logics.
    }
    \label{fig:taxonomy-axis}
    \end{figure}%

We have provided a systematic overview of program logics defined by predicate transformers and Kleene algebra with top and tests.
Our graphical illustrations clarify the relationships among various logics.
A main point of interest was the symmetries and asymmetries between forward and backward reasoning.
Notably, we found that many asymmetries could be traced back to one main observation:
Running a nondeterministic program on some initial state can \emph{both} reach some final state \emph{and} diverge.
But no final state can be both reachable from somewhere and at the same time unreachable.
In other words: a nondeterministic computation has the potential to lead to somewhere or nowhere, but it cannot at the same time originate from somewhere or from nowhere.
We call this the \emph{absence of reachability confluence} (\Cref{obs:branching}).

Furthermore, we introduced new predicate transformers -- angelic strongest and demonic strongest liberal postconditions -- as well as novel logics involving union and intersection of transformers.
Thereby, we filled some gaps in the landscape of program logics which seemed to naturally arise when taking the Kleene algebraic view.
Additionally, we discussed in \Cref{ssec:assumptions} how assumptions about program properties, such as determinism or the reachability of final states, influence the taxonomy.
As conjectured at the very beginning, we can indeed identify three dimensions of program logics, each corresponding to an axis in \Cref{fig:taxonomy-axis}:
\begin{enumerate}
	\item
		\emph{correctness} (being able to reach) vs.~\emph{incorrectness} (being reachable)
	\item
		\emph{totality} vs.~\emph{partiality}
	\item
		\emph{angelic} vs.~\emph{demonic} resolution of nondeterminism
\end{enumerate}%
As discussed in \Cref{ssec:assumptions}, if we assume that totality and partiality coincide, i.e.\ if $\program$ always terminates and all states are reachable, the logics collapse along the vertical axis (2).
Dually, if we assume that $\program$ is deterministic and reversible, the logics collapse along the horizontal axis (3).

Apart from being of theoretical interest, the examination of the effect of assumptions is a step towards practical tools:
We explore conditions that have to be discharged, so that different logics happen to collapse.
The classical example of such a condition is
\[
    \text{partial correctness $+$ termination $=$ total correctness}.
\]
If we have a partial correctness proof, we \enquote{merely} have to prove termination to obtain a total correctness proof.
This is practically relevant because partial correctness is a lower bound on a greatest fixed point which can be discharged with invariant-based reasoning.
Total correctness, on the other hand, is a lower bound on a least fixed point, which is much harder to discharge.
Separation of concerns into partial correctness and termination aids to make proving total correctness more tractable.
In a similar manner, we have
\[
    \text{partial incorrectness $+$ reachability $=$ incorrectness}.
\]
Exactly the same least/greatest fixed point considerations apply to partial and \enquote{total} incorrectness.
Hence, partial incorrectness logic is easier to discharge and we obtain that an additional reachability proof would give us \enquote{total} incorrectness.

For Kleene algebra with top and tests, we investigated the relationship between \TopKAT expressible logics and predicate transformer logics.
In the course of this, we showed that we can express Lisbon logic (angelic total correctness) in \TopKAT.
We also saw that in \Cref{tab:KAT}, there is a basic \TopKAT equation which does not directly correspond to a predicate transformer equation.
However, we showed that this equation can be expressed by combining predicate transformers in \Cref{ssec:unnamed}.
This suggests that predicate transformers are stronger in the sense that all \TopKAT equations are expressible using predicate transformers, but not the other way around.
This is, however, due to the limitations of the chosen \TopKAT approach and could be fixed by including some mechanism for identifying divergence.


\begin{acks}                            
	We would like to thank Kevin Batz and Philipp Schröer for the valuable discussions on practical implications of this paper, as well as the anonymous reviewers for their very constructive and valuable feedback.
	This work was partially supported by the ERC Advanced Research Grant FRAPPANT (grant no. 787914).
\end{acks}

\bibliography{literature}

\pagebreak
\appendix
\section*{Appendix}


\section{Demonic Weakest Pre vs.\ Demonic Strongest Post}
\label{sec:illustration}

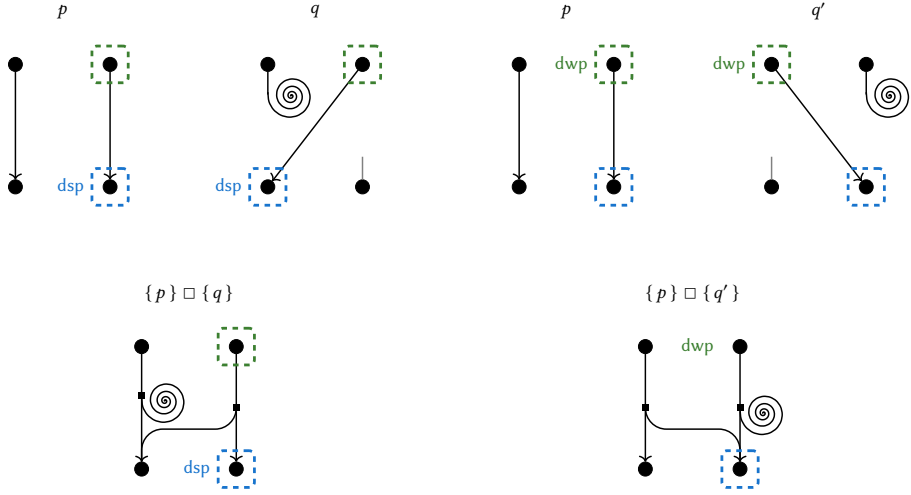
\begin{figure}[t]
	\begin{subfigure}{0.4\textwidth}  
		\begin{center}
			\scalebox{0.7}{
			\begin{tikzpicture}
				\def\statespace{1.8}
				\def\middlespace{3}
				\def\vertspace{2.3}
				\def\vertmiddlespace{3}

				\def\colorpre{mygreen}
				\def\colorpost{DodgerBlue3}
				
				\node (c1) at (\statespace/2,1) {$\program$};
				\node[state,fill=black,scale=0.3] (ai0) at (0,0) {};
				\node[state,fill=black,scale=0.3] (ai1) at (\statespace,0) {};
				\node[state,fill=black,scale=0.3] (af0) at (0,-\vertspace) {};
				\node[state,fill=black,scale=0.3] (af1) at (\statespace,-\vertspace) {};
				
				\path (ai0) edge[program] (af0);
				\path (ai1) edge[program] (af1);

				\node (posta) [line width=1.5pt,inner sep=6pt, dashed,rounded corners=0.1cm,fit = {(ai1)}, draw, color=\colorpre] {};
				\node (dspa) [line width=1.5pt,inner sep=6pt, dashed,rounded corners=0.1cm,fit = {(af1)}, draw, color=\colorpost, label=left:{\color{\colorpost}$\dspsymbol$}] {};
				
				\node (c2) at (\statespace+\middlespace+\statespace/2,1) {$\secprogram$};
				\node[state,fill=black,scale=0.3] (bi0) at (\statespace+\middlespace,0) {};
				\node[state,fill=black,scale=0.3] (bi1) at (\statespace+\middlespace+\statespace,0) {};
				\node[state,fill=black,scale=0.3] (bf0) at (\statespace+\middlespace,-\vertspace) {};
				\node[state,fill=black,scale=0.3] (bf1) at (\statespace+\middlespace+\statespace,-\vertspace) {};
				
				\node (spiralstartb0) at ($(bi0)!0.3!(bf0)$) {};
				\node (spiralstartb1) at ($(bi1)!0.7!(bf1)$) {};

				\path (bi0) edge[thick] (spiralstartb0);
				\path (bf1) edge[thick,color=gray] (spiralstartb1);
				\path (bi1) edge[program] (bf0);

				\node (spiralmidb0) at ($(spiralstartb0) - (-0.445,-0.1)$) {};
				\draw [scale=0.25,xscale=1,yscale=-1,rotate=90,domain=0:30,variable=\t,smooth,samples=75,thick,-,shift={(spiralmidb0)}] plot  ({\t r}: {-0.002*\t*\t});	

				\node (spiralmidb1) at ($(spiralstartb1) - (-0.445,0.1)$) {};

				\node (postb) [line width=1.5pt,inner sep=6pt, dashed,rounded corners=0.1cm,fit = {(bi1)}, draw, color=\colorpre] {};
				\node (dspb) [line width=1.5pt,inner sep=6pt, dashed,rounded corners=0.1cm,fit = {(bf0)}, draw, color=\colorpost, label=left:{\color{\colorpost}$\dspsymbol$}] {};

				\node (c12) at (\statespace/2+\middlespace/2+\statespace/2,-\vertspace-\vertmiddlespace+1) {$\NDCHOICE{\program}{\secprogram}$};
				\node[state,fill=black,scale=0.3] (ci0) at (\statespace/2+\middlespace/2,-\vertspace-\vertmiddlespace) {};
				\node[state,fill=black,scale=0.3] (ci1) at (\statespace/2+\middlespace/2+\statespace,-\vertspace-\vertmiddlespace) {};
				\node[state,fill=black,scale=0.3] (cf0) at (\statespace/2+\middlespace/2,-\vertspace-\vertmiddlespace-\vertspace) {};
				\node[state,fill=black,scale=0.3] (cf1) at (\statespace/2+\middlespace/2+\statespace,-\vertspace-\vertspace-\vertmiddlespace) {};
				
				\node[rectangle,fill=black,scale=0.5] (spiralstartc0) at ($(ci0)!0.4!(cf0)$) {};
				\node[rectangle,fill=black,scale=0.5] (spiralstartc1) at ($(ci1)!0.5!(cf1)$) {};

				\path (ci0) edge[thick] (spiralstartc0);
				\path (ci1) edge[thick] (spiralstartc1);
				\path (spiralstartc0) edge[program] (cf0);
				\path (spiralstartc1) edge[program] (cf1);

				\node (spiralmidc0) at ($(spiralstartc0) - (-0.445,0.1)$) {};
				\draw [scale=0.25,xscale=1,yscale=-1,rotate=90,domain=0:30,variable=\t,smooth,samples=75,thick,-,shift={(spiralmidc0)}] plot  ({\t r}: {-0.002*\t*\t});	

				\node (spiralmidc1) at ($(spiralstartc1) - (-0.445,-0.14)$) {};

				\draw [rounded corners=4mm,thick] (spiralstartc1) |- ++(-\statespace,-0.4) -- (cf0);

				\node (postc) [line width=1.5pt,inner sep=6pt, dashed,rounded corners=0.1cm,fit = {(ci1)}, draw, color=\colorpre] {};
				\node (dspc) [line width=1.5pt,inner sep=6pt, dashed,rounded corners=0.1cm,fit = {(cf1)}, draw, color=\colorpost, label=left:{\color{\colorpost}$\dspsymbol$}] {};
			
			\end{tikzpicture}}
		\end{center}
	\end{subfigure}
	%
	\hspace{3em}
	\begin{subfigure}{0.4\textwidth}  
		\begin{center}
			\scalebox{0.7}{
			\begin{tikzpicture}
				\def\statespace{1.8}
				\def\middlespace{3}
				\def\vertspace{2.3}
				\def\vertmiddlespace{3}

				\def\colorpre{mygreen}
				\def\colorpost{DodgerBlue3}
				
				\node (c1) at (\statespace/2,1) {$\program$};
				\node[state,fill=black,scale=0.3] (ai0) at (0,0) {};
				\node[state,fill=black,scale=0.3] (ai1) at (\statespace,0) {};
				\node[state,fill=black,scale=0.3] (af0) at (0,-\vertspace) {};
				\node[state,fill=black,scale=0.3] (af1) at (\statespace,-\vertspace) {};
				
				\path (ai0) edge[program] (af0);
				\path (ai1) edge[program] (af1);

				\node (dwpa) [line width=1.5pt,inner sep=6pt, dashed,rounded corners=0.1cm,fit = {(ai1)}, draw, color=\colorpre, label=left:{\color{\colorpre}$\dwpsymbol$}] {};
				\node (posta) [line width=1.5pt,inner sep=6pt, dashed,rounded corners=0.1cm,fit = {(af1)}, draw, color=\colorpost] {};
				
				\node (c2) at (\statespace+\middlespace+\statespace/2,1) {$\secprogram'$};
				\node[state,fill=black,scale=0.3] (bi0) at (\statespace+\middlespace,0) {};
				\node[state,fill=black,scale=0.3] (bi1) at (\statespace+\middlespace+\statespace,0) {};
				\node[state,fill=black,scale=0.3] (bf0) at (\statespace+\middlespace,-\vertspace) {};
				\node[state,fill=black,scale=0.3] (bf1) at (\statespace+\middlespace+\statespace,-\vertspace) {};
				
				\node (spiralstartb0) at ($(bi0)!0.7!(bf0)$) {};
				\node (spiralstartb1) at ($(bi1)!0.3!(bf1)$) {};

				\path (bi1) edge[thick] (spiralstartb1);
				\path (bf0) edge[thick,color=gray] (spiralstartb0);
				\path (bi0) edge[program] (bf1);

				\node (spiralmidb1) at ($(spiralstartb1) - (-0.445,-0.1)$) {};
				\draw [scale=0.25,xscale=1,yscale=-1,rotate=90,domain=0:30,variable=\t,smooth,samples=75,thick,-,shift={(spiralmidb1)}] plot  ({\t r}: {-0.002*\t*\t});	

				\node (spiralmidb0) at ($(spiralstartb0) - (-0.445,0.1)$) {};

				\node (dwpb) [line width=1.5pt,inner sep=6pt, dashed,rounded corners=0.1cm,fit = {(bi0)}, draw, color=\colorpre, label=left:{\color{\colorpre}$\dwpsymbol$}] {};
				\node (postb) [line width=1.5pt,inner sep=6pt, dashed,rounded corners=0.1cm,fit = {(bf1)}, draw, color=\colorpost] {};

				\node (c12) at (\statespace/2+\middlespace/2+\statespace/2,-\vertspace-\vertmiddlespace+1) {$\NDCHOICE{\program}{\secprogram'}$};
				\node[state,fill=black,scale=0.3] (ci0) at (\statespace/2+\middlespace/2,-\vertspace-\vertmiddlespace) {};
				\node[state,fill=black,scale=0.3] (ci1) at (\statespace/2+\middlespace/2+\statespace,-\vertspace-\vertmiddlespace) {};
				\node[state,fill=black,scale=0.3] (cf0) at (\statespace/2+\middlespace/2,-\vertspace-\vertmiddlespace-\vertspace) {};
				\node[state,fill=black,scale=0.3] (cf1) at (\statespace/2+\middlespace/2+\statespace,-\vertspace-\vertspace-\vertmiddlespace) {};
				
				\node[rectangle,fill=black,scale=0.5] (spiralstartc0) at ($(ci0)!0.5!(cf0)$) {};
				\node[rectangle,fill=black,scale=0.5] (spiralstartc1) at ($(ci1)!0.5!(cf1)$) {};

				\path (ci1) edge[thick] (spiralstartc1);
				\path (ci0) edge[thick] (spiralstartc0);
				\path (spiralstartc1) edge[program] (cf1);
				\path (spiralstartc0) edge[program] (cf0);

				\node (spiralmidc1) at ($(spiralstartc1) - (-0.445,0.1)$) {};
				\draw [scale=0.25,xscale=1,yscale=-1,rotate=90,domain=0:30,variable=\t,smooth,samples=75,thick,-,shift={(spiralmidc1)}] plot  ({\t r}: {-0.002*\t*\t});	

				\node (spiralmidc0) at ($(spiralstartc0) - (-0.445,-0.14)$) {};

				\draw [rounded corners=4mm,thick] (spiralstartc0) |- ++(\statespace,-0.4) -- (cf1);

				\node (dwpc) [line width=1.5pt,inner sep=6pt, dashed,rounded corners=0.1cm,fit = {(ci1)}, draw=none, color=\colorpre, label=left:{\color{\colorpre}$\dwpsymbol$}] {};
				\node (postc) [line width=1.5pt,inner sep=6pt, dashed,rounded corners=0.1cm,fit = {(cf1)}, draw, color=\colorpost] {};
			
			\end{tikzpicture}}
		\end{center}
	\end{subfigure}
	\caption{
		Differences between computing the demonic strongest post and demonic weakest pre for nondeterministic choices.
		Preconditions are in green dashes, postconditions in blue.
		The black spirals indicate divergence. 
	}
	\label{fig:adwpsp_asymm}
\end{figure}


\Cref{obs:branching} described how divergence can branch from a terminating program path, but unreachability cannot confluence to any path.
In \Cref{fig:adwpsp_asymm}, we see a consequence of this observation.
The left-hand side shows three programs $\program$, $\secprogram$, and \mbox{$\NDCHOICE{\program}{\secprogram}$}.
We assume the state space to consist of two states.
The second state fulfills the precondition (in green).
The demonic strongest post w.r.t.\ this precondition for each program is circled in blue.
Notably, for the program $\NDCHOICE{\program}{\secprogram}$, the second state is in $\dspsymbol$, as it is exclusively reachable from the precondition.
The fact that state two was unreachable in $\program_2$ is lost.

On the right-hand side, we see a mirrored scenario for the weakest pre transformer.
Again, the second state fulfills the postcondition (in blue), and the demonic weakest pre w.r.t.\ this postcondition is circled in green.
In contrast to before, the demonic weakest pre of $\NDCHOICE{\program}{\secprogram'}$ is empty.
If \spsymbol\ and \wpsymbol\ were truly dual, the second state would be included, which it is not because divergence from this state is possible.
On the left-hand side, this was not a problem since unreachability cannot confluence by \Cref{obs:branching}.


\section{Counterexamples for Galois Connections}
\label{sec:counter}

\Cref{sec:taxonomy} presents seven predicate transformer based program logics:
\begin{enumerate}
    \item (demonic) partial correctness (Hoare logic),
    \item (demonic) total correctness (Hoare logic),
    \item angelic total correctness (Lisbon logic),
    \item angelic partial correctness,
    \item (angelic total) incorrectness,
    \item demonic (total) incorrectness, and
    \item angelic partial incorrectness.
\end{enumerate}

We know that there is a Galois connection between demonic partial correctness and the contrapositive of demonic partial incorrectness.
In the following, we present counterexamples for all other potential Galois connections between these seven logics and also their contrapositives, which gives a total of $\binom{14}{2} = 91$ combinations.

\paragraph{Part One}
To do so, we first consider all pairs between the aforementioned seven logics, which are $\binom{7}{2} = 21$.
We can directly exclude some pairs.
First of all, \emph{partial} and \emph{total} (in)correctness logics are never equal.
For correctness logics, we get counterexamples by adding states from which computation diverges to the precondition.
For incorrectness logics, we can add unreachable states to the postcondition.
We can therefore exclude the following pairs:
\begin{enumerate}
    \item[(1) - (2)] demonic partial correctness and demonic total correctness,
    \item[(3) - (4)] angelic partial correctness and angelic total correctness, and
    \item[(5) - (7)] angelic total incorrectness and angelic partial incorrectness.
\end{enumerate}

Similarly, we note that \emph{demonic} and \emph{angelic} variants are never equal.
Counterexamples can be constructed by adding a state to the precondition from which computation can terminate both within and outside of the postcondition.
For incorrectness logics, we add a final state to the postcondition that is reachable from within and outside of the precondition.
Then, we can exclude the following pairs
\begin{enumerate}
    \item[(1) - (4)] demonic partial correctness and angelic partial correctness,
    \item[(2) - (3)] demonic total correctness and angelic total correctness, and
    \item[(5) - (6)] demonic total incorrectness and angelic total incorrectness.
\end{enumerate}
Next, note that correctness and incorrectness logics are not equal:
Consider postcondition that includes a state which is unreachable from the precondition.
This does not affect correctness logics (1),(2),(3), and (4), however, such a triple can definitely not be valid for any of the incorrectness logics (5),(6), and (7).
We can therefore exclude all pairs between them:
\begin{enumerate}
    \item[(1) - (5)] demonic partial correctness and angelic total incorrectness
    \item[(1) - (6)] demonic partial correctness and demonic total incorrectness
    \item[(1) - (7)] demonic partial correctness and angelic partial incorrectness
    \item[(2) - (5)] demonic total correctness and angelic total incorrectness
    \item[(2) - (6)] demonic total correctness and demonic total incorrectness
    \item[(2) - (7)] demonic total correctness and angelic partial incorrectness
    \item[(3) - (5)] angelic total correctness and angelic total incorrectness
    \item[(3) - (6)] angelic total correctness and demonic total incorrectness
    \item[(3) - (7)] angelic total correctness and and angelic partial incorrectness
    \item[(4) - (5)] angelic partial correctness and angelic total incorrectness
    \item[(4) - (6)] angelic partial correctness and demonic total incorrectness
    \item[(4) - (7)] angelic partial correctness and angelic partial incorrectness
\end{enumerate}

Now there are only three pairs left, for which we now describe a condition under which a triple might be valid for the first, but not for the second logic.
\begin{enumerate}
    \item[(1) - (3)] demonic partial correctness and angelic total correctness:
    $\pre$ contains a state from which computation diverges
    
    \item[(4) - (2)] angelic partial correctness and demonic total correctness:
    $\pre$ contains a state from which computation diverges

    \item[(7) - (6)] angelic partial incorrectness and demonic total incorrectness:
    $\post$ contains an unreachable state
\end{enumerate}

We conclude that all seven logics are truly different.
This also implies that the seven contrapositive logics are different from another.

\paragraph{Part Two}
It remains to be shown that no logic is equivalent to the contrapositive of another.
For this, we start by examining the contrapositive of logic (1), denoted by ${(\negate{1})}$, and show that this is not equal to any other logic.

The characterizing subset equation is given by $\awp{\program}{\post} \subseteq \pre$, so all states from which computation can terminate in $\post$ are included in $\pre$.
In other words, this means that all states in $\post$ are either unreachable or exclusively reachable from $\pre$.
If $\post$ contains a state that is reachable from $\pre$ and from $\negate{\pre}$, the triple cannot be valid for ${(\negate{1})}$.
However, this does not matter for the correctness logics (1) to (4), and also for (7), implying that ${(\negate{1})}$ cannot be equal to those logics.
For logics (5) and (6), all states in $\post$ must be reachable.
For ${(\negate{1})}$, this is note the case, i.e.\ unreachable states can be included in $\post$.
Thus, ${(\negate{1})}$ is also not equivalent to (5) or (6).

Logic ${(\negate{2})}$ is described by $\awlp{\program}{\post} \subseteq \pre$, so all states from which computation can diverge or terminate in $\post$ are included in $\pre$.
Let $\negate{\pre}$ contain a state from which computation diverges, which implies that the triple is not valid for ${(\negate{2})}$.
The correctness logics (1) to (4) do not require anything from states outside of $\pre$.
The incorrectness logics (5) to (6) do not require anything for all initial states.
Consequently, ${(\negate{2})}$ is not equal to any of the logics.

Logic ${(\negate{3})}$ is described by $\dwlp{\program}{\post} \subseteq \pre$, so all states from which computation always either diverges or terminates in $\post$ are included in $\pre$.
If $\negate{\pre}$ contains a state from which computation diverges, we can apply the same reasoning as for ${(\negate{2})}$.

Logic ${(\negate{4})}$ is described by $\dwp{\program}{\post} \subseteq \pre$, so all states from which computation always terminates in $\post$ are included in $\pre$.
Let $\negate{\pre}$ contain a state from which computation always terminates in $\post$.
The triple cannot be valid for ${(\negate{4})}$.
For the same reasons as above, the correctness logics are not affected by this.
For the incorrectness logics, assume another triple which is valid for ${(\negate{4})}$, but additionally there is a state in $\post$ which is only reachable from outside of $\pre$.
This is not a contradiction as long as the states outside of $\pre$ that can reach $\post$ can also either diverge or terminate in $\negate{\post}$.
Such a triple is invalid for all incorrectness logics, proving that also ${(\negate{4})}$ is not equal to any of the logics (1) to (7).

Logic ${(\negate{5})}$ is described by $\dslp{\program}{\pre} \subseteq \post$, so all states that are unreachable or exclusively reachable from $\pre$ must be included in $\post$.
Assume there is an unreachable state outside of $\post$.
A corresponding triple cannot be valid for ${(\negate{5})}$, but the logics (1) to (7) are unaffected by this.

Logic ${(\negate{6})}$ is described by $\aslp{\program}{\pre} \subseteq \post$, so all states that are unreachable or reachable from $\pre$ must be included in $\post$.
Again, assume there is an unreachable state outside of $\post$.
A corresponding triple cannot be valid for ${(\negate{6})}$, but the logics (1) to (7) are unaffected by this.

Lastly, logic ${(\negate{7})}$ is described by $\dsp{\program}{\pre} \subseteq \post$, so all states that are exclusively reachable from $\pre$ must be included in $\post$.
Assume there is a state outside of $\post$ which is exclusively reachable from $\pre$.
This does not affect the incorrectness logics.
For the correctness logics, we consider another triple which is valid for ${(\negate{7})}$, but there is a state in $\pre$ from which computation always terminates outside of $\post$.
Such a triple can exist as the final states in question are not necessarily exclusively reachable from $\pre$.
Then, this triple is not valid for any correctness logic (1) to (4).

We have now discussed all 91 combinations:
21 in the first part, which implied 21 more on the contrapositive side.
The second part excluded the remaining 49 pairs: seven possible connections for all seven contrapositive logics.


\section{Kleene Algebra}
\label{sec:kat-definitions}

Formally, a Kleene algebra is an idempotent (and thus partially ordered) semiring  endowed with a closure operator.
It generalizes the operations known from regular languages.

\begin{definition}[Kleene algebra]
	A \emph{Kleene algebra} is a set $A$ together with two binary operations $+\colon A \to A$ and $\cdot \colon A \to A$ and a function $*\colon A \to A$, such that the following axioms are fulfilled:
	\begin{itemize}
		\item associativity of $+$ and $\cdot$
		\item commutativity of $+$
		\item distributivity of $\cdot$ over $+$
		\item identity elements $0$ for $+$ and $1$ for $\cdot$
		\item annhiliation by $0$: $0 \cdot a = a\cdot 0 = 0 \quad \forall a \in A$
		\item idempotence of $+$: $a + a = a \quad \forall a \in A$
	\end{itemize}
	We can define a partial order as usual by $ a \leq b $ iff $a+b = b$.
	Additionally, for all $a,x \in A$:
	\begin{itemize}
		\item $1 + a(a^*) \leq a^*$
		\item $1 + (a^*)a \leq a^*$
		\item $ax \leq x \implies a^* x \leq x$
		\item $xa \leq x \implies x a^*  \leq x$
	\end{itemize}
\end{definition}

Intuitively, one can think of $a+b$ as the union or least upper bound of $a$ and $b$, and of $ab$ as some multiplication that is monotonic, in the sense that $a \leq b$ implies $ax \leq bx$.
The idea behind the star operation is $a^* = 1 + a + aa + aaa + \dots$. 
From the standpoint of prgramming language theory, one may interpret $+$ as nondeterministic choice, $\cdot$ as sequence, and $*$ as iteration.

\begin{definition}[Kleene algebra with tests \cite{kozen1997kleene}]
	A \emph{Kleene algebra with tests (\KAT)} is a two sorted algebra $(K,B,+,\cdot,*,0,1,\negate{\phantom{a}})$ where $B \subseteq K$ and $\negate{\phantom{a}}$ is a unary operator on $B$ such that
	\begin{itemize}
		\item $(K,+,\cdot,*,0,1)$ is a Kleene algebra and
		\item $(B,+,\cdot,\negate{\phantom{a}},0,1)$ is a Boolean algebra.
	\end{itemize}
	The elements of $B$ are called tests.
\end{definition}

Programs can be modelled in \KAT as follows:
\begin{itemize}
	\item fail is $0$ 
	\item skip is $1$
	\item $p;q$ is $pq$
	\item if $b$ then $p$ else $q$ is $bp + \negate{b}q$
	\item while $b$ do $p$ is $(bp)^* \negate{b}$	
\end{itemize}


In \KAT, we can express e.g.\ partial correctness.
To express incorrectness, we need to assume the existence of a top element.

\begin{definition}[Kleene algebra with top and tests \cite{zhang2022incorrectness}]
	A \emph{Kleene algebra with top and tests (\TopKAT)} is a \KAT $(K,B,+,\cdot,*,0,1,\negate{\phantom{a}})$ with a top element $\top$ such that for all $a \in K$ it holds that $a \leq \top$.
\end{definition}

\section{Inductive Rules for Predicate Transformers}
\label{sec:rules}

\subsection{Weakest Pre}
\label{ssec:wp-rules}

\begin{table}[H]
	\begin{adjustbox}{max width=0.9\textwidth}
	\renewcommand{\arraystretch}{1.5}
	\begin{tabular}{@{\hspace{.5em}}l@{\hspace{2em}}l@{\hspace{2em}}l@{\hspace{.5em}}}
		\hline\hline
		$\boldsymbol{\program}$			& $\boldawp{\program}{\post}$ & $\bolddwp{\program}{\post}$\\
		\hline
		$\SKIP$				& $\post$ & \lightgray{$\post$}	\\
		$\DIVERGE$				& $\false$ & \lightgray{$\false$}	\\
		$\ASSIGN{x}{\ee}$			& $\post \subst{x}{\ee}$ & \lightgray{$\post \subst{x}{\ee}$} \\
		$\COMPOSE{\program_1}{\program_2}$		& $\awp{\program_1}{\vphantom{\big(}\awp{\program_2}{\post}}$ & \lightgray{$\awp{\program_1}{\vphantom{\big(}\dwp{\program_2}{\post}}$}\\
		$\NDCHOICE{\program_1}{\program_2}$		& $\awp{\program_1}{\post} \vee \awp{\program_2}{\post}$ & $\dwp{\program_1}{\post} \wedge \dwp{\program_2}{\post}$ \\
		$\ITE{\guard}{\program_1}{\program_2}$		& ${\guard} \wedge \awp{\program_1}{\post} \vee \iverson{\neg \guard} \wedge \awp{\program_2}{\post}$ & \lightgray{${\guard} \wedge \dwp{\program_1}{\post} \vee \iverson{\neg \guard} \wedge \dwp{\program_2}{\post}$} \\
		$\WHILEDO{\guard}{\program'}$		& $\lfp  X\mydot \quad {\neg \guard} \wedge \post \vee \iverson{\guard} \wedge \awp{\program'}{X}$ & \lightgray{$\lfp  X\mydot \quad {\neg \guard} \wedge \post \vee \iverson{\guard} \wedge \dwp{\program'}{X}$}\\[.25em]
		\hline\hline
	\end{tabular}%
	\end{adjustbox}%
	\vspace{.5em}
	\caption{Rules for $\awpsymbol$ and $\dwpsymbol$. 
			$\lfp g\mydot \Phi(g)$ and $\gfp g\mydot \Phi(g)$ denote the least and greatest fixed point of $\Phi$.}%
	\label{table:wp}
	\vspace{-1em}
\end{table}%

\subsection{Weakest Liberal Pre}
\label{ssec:wlp-rules}

\begin{table}[H]
	\begin{adjustbox}{max width=0.9\textwidth}
	\renewcommand{\arraystretch}{1.5}
	\begin{tabular}{@{\hspace{.5em}}l@{\hspace{2em}}l@{\hspace{2em}}l@{\hspace{.5em}}}
		\hline\hline
		$\boldsymbol{\program}$			& $\boldawlp{\program}{\post}$ & $\bolddwlp{\program}{\post}$\\
		\hline
		$\SKIP$				& $\post$ & \lightgray{$\post$}	\\
		$\DIVERGE$				& $\true$ & \lightgray{$\true$}	\\
		$\ASSIGN{x}{\ee}$			& $\post \subst{x}{\ee}$ & \lightgray{$\post \subst{x}{\ee}$} \\
		$\COMPOSE{\program_1}{\program_2}$		& $\awlp{\program_1}{\vphantom{\big(}\awlp{\program_2}{\post}}$ & \lightgray{$\dwlp{\program_1}{\vphantom{\big(}\dwlp{\program_2}{\post}}$}\\
		$\NDCHOICE{\program_1}{\program_2}$		& $\awlp{\program_1}{\post} \vee \awlp{\program_2}{\post}$ & $\dwlp{\program_1}{\post} \wedge \dwlp{\program_2}{\post}$\\
		$\ITE{\guard}{\program_1}{\program_2}$		& ${\guard} \wedge \awlp{\program_1}{\post} \vee \iverson{\neg \guard} \wedge \awlp{\program_2}{\post}$ & \lightgray{${\guard} \wedge \dwlp{\program_1}{\post} \vee {\neg \guard} \wedge \dwlp{\program_2}{\post}$} \\
		$\WHILEDO{\guard}{\program'}$		& $\gfp  X\mydot \quad {\neg \guard} \wedge \post \vee {\guard} \wedge \awlp{\program'}{X}$ & \lightgray{$\gfp  X\mydot \quad {\neg \guard} \wedge \post \vee {\guard} \wedge \dwlp{\program'}{X}$}\\[.25em]
		\hline\hline
	\end{tabular}%
	\end{adjustbox}%
	\vspace{.5em}
	\caption{Rules for $\awlpsymbol$ and $\dwlpsymbol$. 
			$\lfp g\mydot \Phi(g)$ and $\gfp g\mydot \Phi(g)$ denote the least and greatest fixed point of $\Phi$.}%
	\label{table:wlp}
	\vspace{-1em}
\end{table}%

\subsection{Strongest Post}
\label{ssec:sp-rules}

\begin{table}[H]
	\begin{adjustbox}{max width=0.9\textwidth}
	\renewcommand{\arraystretch}{1.5}
	\begin{tabular}{@{\hspace{.5em}}l@{\hspace{2em}}l@{\hspace{2em}}l@{\hspace{.5em}}}
		\hline
		$\boldsymbol{\program}$			& $\boldasp{\program}{\pre}$ \\
		\hline
		$\SKIP$					& $\pre$ \\
		$\DIVERGE$				& $\false$ 										\\
		$\ASSIGN{x}{\ee}$			& $\exists{\alpha}\mydot {x = e\subst{x}{\alpha}}\, \wedge\, \pre\subst{x}{\alpha}$ \\
		$\COMPOSE{\program_1}{\program_2}$		& $\asp{\program_2}{\vphantom{\big(}\asp{\program_1}{\pre}}$ \\
		$\NDCHOICE{\program_1}{\program_2}$		& $\asp{\program_1}{\pre} \vee \asp{\program_2}{\pre}$ \\
		$\ITE{\guard}{\program_1}{\program_2}$		& $\asp{\program_1}{{\guard} \wedge \pre} \vee \asp{\program_2}{{\neg \guard} \wedge \pre}$  \\
 		$\WHILEDO{\guard}{C'}$		& ${\neg\guard} \wedge \bigl( \lfp Y\mydot \pre \vee \asp{\program'}{{\guard} \wedge Y}\bigr)$ \\
		\hline\hline
	\end{tabular}%
	\end{adjustbox}
	\caption{
		Rules for $\aspsymbol$.
		$\lfp g\mydot \Psi(g)$ denotes the least fixed point of $\Phi$.
	}
	\label{table:sp}
\end{table}%

\subsection{Strongest Liberal Post}
\label{ssec:slp-rules}

\begin{table}[H]
	\begin{adjustbox}{max width=0.9\textwidth}
	\renewcommand{\arraystretch}{1.5}
	\begin{tabular}{@{\hspace{.5em}}l@{\hspace{2em}}l@{\hspace{2em}}l@{\hspace{.5em}}}
		\hline
		$\boldsymbol{\program}$			& $\bolddslp{\program}{\pre}$ \\
		\hline
		$\SKIP$					& $\pre$ \\
		$\DIVERGE$				& $\true$ 										\\
		$\ASSIGN{x}{\ee}$			& $\forall{\alpha}\mydot {x \neq e\subst{x}{\alpha}} \, \vee\,  \pre\subst{x}{\alpha}$ \\
		$\COMPOSE{\program_1}{\program_2}$		& $\sp{\program_2}{\vphantom{\big(}\sp{\program_1}{\pre}}$ \\
		$\NDCHOICE{\program_1}{\program_2}$		& $\sp{\program_1}{\pre} \wedge \sp{\program_2}{\pre}$ \\
		$\ITE{\guard}{\program_1}{\program_2}$		& $\sp{\program_1}{{\neg \guard} \vee \pre} \wedge \sp{\program_2}{{\guard} \vee \pre}$  \\
 		$\WHILEDO{\guard}{C'}$		& ${\guard} \vee \bigl( \gfp Y\mydot \pre \wedge \sp{\program'}{\neg {\guard} \wedge Y}\bigr)$ \\
		\hline\hline
	\end{tabular}%
	\end{adjustbox}%
	\caption{
		Rules for $\dslpsymbol$.
		$\gfp g\mydot \Psi(g)$ denotes the greatest fixed point of $\Phi$.
	}
	\label{table:dslp}
\end{table}%


\section{Omitted Proofs}
\label{sec:proofs}

\subsection{Proof of \Cref{theo:complements}}
\label{ssec:proof-complements}

Properties (1), (2), and (4) of \Cref{theo:complements} are folklore knowledge.
We prove property (4):
\[
	\dsp{\program}{\pre} \eeq \negatedbl{\aslp{\program}{\negate{\pre}}}
\]
by showing that the (disjoint) union $\dsp{\program}{\pre} \quad \cupdot \quad  \aslp{\program}{\negate{\pre}}$ is equal to the full state space $\Sigma$
for all programs $p \in \ngcl$ and predicates $\pre$.

\begin{proof}
	\begin{align*}
		& \dsp{\program}{\pre} \quad \cupdot \quad  \aslp{\program}{\negate{\pre}} \\
		= &\ \mylambda{\tau}
		\begin{cases}
			\bigwedge\limits_{\sigma \in \seminv{\program}{\tau}} \pre(\sigma),  & \text{ if }\  \seminv{\program}{\tau} \neq \emptyset\\
			\false, & \text{ otherwise .}
		\end{cases}
		\quad \cupdot \quad  
		\mylambda{\tau}
		\begin{cases}
			\bigvee\limits_{\sigma \in \seminv{\program}{\tau}} \negate{\pre}(\sigma),  & \text{ if }\ \seminv{\program}{\tau} \neq \emptyset \\
			\true, & \text{ otherwise .}
		\end{cases}  \tag{\Cref{def:dsp,def:aslp}}\\
		= &\ \mylambda{\tau}
		\begin{cases}
			\left(\bigwedge\limits_{\sigma \in \seminv{\program}{\tau}} \pre(\sigma)\right) 
			\lordot \left(\bigvee\limits_{\sigma \in \seminv{\program}{\tau}} \negate{\pre}(\sigma) \right),  & \text{ if }\  \seminv{\program}{\tau} \neq \emptyset\\
			\true, & \text{ otherwise .}
		\end{cases} \\
		= &\ \mylambda{\tau}
		\begin{cases}
			\true  & \text{ if }\  \seminv{\program}{\tau} \neq \emptyset\\
			\true, & \text{ otherwise .}
		\end{cases} \\
		= &\ \mylambda{\tau} \true \\
		= &\ \Sigma
	\end{align*}
	This concludes the proof of \Cref{theo:complements}.
\end{proof}

\newpage
\subsection{Proofs from \Cref{ssec:assumptions}}
\label{ssec:proof-assumptions}

\subsubsection{Proof of \Cref{theo:termination}}
\label{ssec:proof-termination}

\begin{proof}
	Let $\program \in \ngcl$ be a program and $\post$ a postcondition.
	Fix a state $\sigma \in \Sigma$ such that there is no diverging path starting in $\sigma$.
	We prove the theorem pointwise for all states via the soundness criteria of the transformers.
	\begin{align*}
		& \awlp{\program}{\post}(\sigma) \\
		=\ & \begin{cases}
				\true & \text{ if there exists a diverging path starting in } \sigma \\
				\bigvee\limits_{\tau \in  \sem{\program}{\sigma}} F(\tau) & \text{ else}
			\end{cases} \tag{\Cref{def:wlp}}\\
		=\ & \bigvee\limits_{\tau \in \sem{\program}{\sigma}} F(\tau) \tag{by assumption}\\
		=\ & \awp{\program}{\post}(\sigma) \tag{\Cref{def:wp}}
	\end{align*}
\end{proof}

\subsubsection{Proof of \Cref{theo:reachability}}
\label{ssec:proof-reachability}

\begin{proof}
	Let $\program \in \ngcl$ be a program and $\pre$ a precondition.
	Assume $\tau \in \Sigma$ is reachable.
	This means that there exists a state $\sigma \in \Sigma$ such that $\sigma \in \seminv{\program}{\tau}$ and dually $\tau \in \sem{\program}{\sigma}$.
	Using the soundness criteria of the transformers, we get that
	\begin{align*}
		& \aslp{\program}{\pre}(\tau) \\
		=\ & \begin{cases}
				\bigvee\limits_{\sigma \in \seminv{\program}{\tau}} \pre(\sigma)  & \text{ if } \exists \sigma \in \seminv{\program}{\tau} \\
				\true & \text{ else}
			\end{cases} \tag{\Cref{def:aslp}}\\
		=\ & \bigvee\limits_{\sigma \in \seminv{\program}{\tau}} \pre(\sigma) \tag{by assumption}\\
		=\ & \asp{\program}{\pre}(\tau) \tag{\Cref{def:asp}}
	\end{align*}
	The second part of the theorem can be proved dually.
\end{proof}

\subsubsection{Proof of \Cref{theo:reachability-properties}}
\label{ssec:proof-reachability-properties}

\begin{proof}
	We prove \Cref{theo:reachability-properties} pointwise, i.e.\ we show that for $\program \in \ngcl$ and $\tau \in \Sigma$, iff $\tau$ is reachable, we have that
	\[
		\dsp{\program}{\true}(\tau) = \true
	\]
	and dually
	\[
		\dslp{\program}{\false}(\tau) = \false
	\]
	and dually
	\[
		\asp{\program}{\true}(\tau) = \true
	\]
	and dually
	\[
		\aslp{\program}{\false}(\tau) = \false.
	\]	

	We prove this using the soundness criteria of the transformers.
	We start by showing that if $\tau$ is reachable, all of the four properties above hold.
	First, assume that $\tau$ is reachable.
	This means that there exists a $\sigma \in \Sigma$ such that $\sigma \in \seminv{\program}{\tau}$.
	We get
	\begin{align*}
		& \dsp{\program}{\true}(\tau) \\
		=\ &	\begin{cases}
			\bigwedge\limits_{\sigma \in \seminv{\program}{\tau}} \true(\sigma)  & \text{ if } \exists \sigma \in \seminv{\program}{\tau} \\
			\false & \text{ else}
		\end{cases} \tag{\Cref{def:dsp}}\\
		=\ & \bigwedge\limits_{\sigma \in \seminv{\program}{\tau}} \true \\
		=\ & \true \tag{$\tau$ reachable}
	\end{align*}
	By \Cref{theo:order-trans}, we have that $\dsp{\program}{\true}(\tau) = \true$ implies that  $\asp{\program}{\true}(\tau) = \true$.

	For the liberal transformers, we get that
	\begin{align*}
		& \aslp{\program}{\false}(\tau) \\
		=\ & \begin{cases}
			\bigvee\limits_{\sigma \in \seminv{\program}{\tau}} \false(\sigma)  & \text{ if } \exists \sigma \in \seminv{\program}{\tau} \\
			\true & \text{ else}
		\end{cases} \tag{\Cref{def:aslp}}\\
		=\ & \begin{cases}
			\false  & \text{ if } \exists \sigma \in \seminv{\program}{\tau} \\
			\true & \text{ else}
		\end{cases}  \\
		=\ & \false \tag{$\tau$ reachable}
	\end{align*}
	Again using \Cref{theo:order-trans}, this implies $\dslp{\program}{\false}(\tau) = \false$.
	This concludes the first direction of the proof.

	For the second direction, we assume that $\asp{\program}{\true}(\tau) = \true$.
	We then have that
	\begin{align*}
		& \asp{\program}{\true}(\tau) \\
		=\ & \bigvee\limits_{\sigma \in \seminv{\program}{\tau}} \true \tag{\Cref{def:asp}} \\
		=\ & \true
	\end{align*}
	If $\seminv{\program}{\tau}$ were empty, which means that $\tau$ is unreachable, the above conjunction would evaluate to false.
	We can therefore conclude that $\tau$ must be reachable.

	Assume that $\dsp{\program}{\true}(\tau) = \true$.
	Then by \Cref{theo:order-trans}, this implies $\asp{\program}{\true}(\tau) = \true$, which by the above reasoning implies that $\tau$ is reachable.

	Now assume that $\dslp{\program}{\false}(\tau) = \false$.
	We then have that
	\begin{align*}
		& \dslp{\program}{\false}(\tau) \\
		=\ & \bigwedge\limits_{\sigma \in \seminv{\program}{\tau}} \false \tag{\Cref{def:dslp}} \\
		=\ & \false
	\end{align*}
	If $\seminv{\program}{\tau}$ were empty, which means that $\tau$ is unreachable, the above disjunction would evaluate to true.
	This is a contradiction and we can conclude that $\tau$ must be reachable.

	Lastly, assume that $\aslp{\program}{\false}(\tau) = \false$.
	Then by \Cref{theo:order-trans}, this implies $\dslp{\program}{\false}(\tau) = \false$, which by the above reasoning implies that $\tau$ is reachable.

	This concludes the proof of \Cref{theo:reachability-properties}.
\end{proof}

\subsubsection{Proof of \Cref{theo:determinism}}
\label{ssec:proof-determinism}

\begin{proof}
	Assume a deterministic program $\program \in \ngcl$.
	We know that for all $\sigma \in \Sigma$, the set of reachable states $\sem{\program}{\sigma}$ is either a singleton or the empty set.
	Therefore, we have that
	\begin{align*}
		&\awp{\program}{\post}(\sigma) \\
		=\ & \bigvee\limits_{\tau \in \sem{\program}{\sigma}} \post(\tau) \tag{\Cref{def:wp}} \\
		=\ & \begin{cases}
			\post(\tau)  & \text{ for the unique } \tau \in \sem{\program}{\sigma} \text{ if it exists} \\
			\false & \text{ if } \sem{\program}{\sigma} = \emptyset
		\end{cases} \\
		=\ & \begin{cases}
			\post(\tau)  & \text{ for the unique } \tau \in \sem{\program}{\sigma} \text{ if it exists} \\
			\false & \text{ if there exists a diverging path starting in } \sigma
		\end{cases} \\
		=\ & \begin{cases}
			\bigwedge\limits_{\tau \in \sem{\program}{\sigma}} \post(\tau) & \text{ if } \sem{\program}{\sigma} \neq \emptyset \\
			\false & \text{ if computation diverges from } \sigma \\
		\end{cases} \\
		=\ & \dwp{\program}{\post}(\sigma) \tag{\Cref{def:wp}}
	\end{align*}

	The proof for the liberal case is dual.
\end{proof}

\subsubsection{Proof of \Cref{theo:reversibility}}
\label{ssec:proof-reversibility}

\begin{proof}
	Assume a reversible program $\program \in \ngcl$.
	We know that for all $\tau \in \Sigma$, the set of states that can reach $\tau$, $\seminv{\program}{\tau}$, is either a singleton or the empty set.
	Therefore, we have that
	\begin{align*}
		&\asp{\program}{\pre}(\tau) \\
		=\ & \bigvee\limits_{\sigma \in \seminv{\program}{\tau}} \pre(\sigma) \tag{\Cref{def:asp}}\\
		=\ & \begin{cases}
			\pre(\sigma)  & \text{ for the unique } \sigma \in \seminv{\program}{\tau} \text{ if it exists} \\
			\false & \text{ if } \seminv{\program}{\tau} = \emptyset
		\end{cases} \\
		=\ & \begin{cases}
			\bigwedge\limits_{\sigma \in \seminv{\program}{\tau}} \pre(\sigma)  & \text{ if } \exists \sigma \in \seminv{\program}{\tau} \\
			\false & \text{ else}
		\end{cases} \\
		=\ & \dsp{\program}{\pre}(\tau) \tag{\Cref{def:dsp}}
	\end{align*}

	The proof for the liberal case is dual.
\end{proof}

\subsection{Proofs from \Cref{ssec:assumption-2}}

\subsubsection{Proof of \Cref{theo:branching}}
\label{ssec:proof-branching}

\begin{proof}
    Let $\program \in \ngcl$ be a program and $\post$ a postcondition.
	Fix a state $\sigma \in \Sigma$ such that computation started in $\sigma$ either always diverges or always terminates.
    Assume we are in the first case, so computation always diverges and we have $\sem{\program}{\sigma} = \emptyset$.

    \begin{align*}
		& \awlp{\program}{\post}(\sigma) \\
        \eeq & \begin{cases}
			\true & \text{ if there exists a diverging path starting in } \sigma \\
			\bigvee\limits_{\tau \in \sem{\program}{\sigma}} \post(\tau) & \text{ else}
		\end{cases} \\
        \eeq & \true \tag{by assumption} \\
        \eeq & \false \quad \lor \quad  \true  \\
        \eeq & \bigvee\limits_{\tau \in \sem{\program}{\sigma}} \post(\tau)  \quad \lor \quad \bigwedge\limits_{\tau \in \sem{\program}{\sigma}} \post(\tau) \tag{since $\sem{\program}{\sigma} = \emptyset$}\\
        \eeq & \awp{\program}{\post}(\sigma) \quad \lor \quad \dwlp{\program}{\post}(\sigma)
	\end{align*}

    Now assume we are in the second case, so computation always terminates.
    Note that this implies that $\sem{\program}{\sigma}$ is not empty.
    We get:
    \begin{align*}
		& \awlp{\program}{\post}(\sigma) \\
        \eeq & \begin{cases}
			\true & \text{ if there exists a diverging path starting in } \sigma \\
			\bigvee\limits_{\tau \in \sem{\program}{\sigma}} \post(\tau) & \text{ else}
		\end{cases} \\
        \eeq & \bigvee\limits_{\tau \in \sem{\program}{\sigma}} \post(\tau) \tag{by assumption} \\
        \eeq & \bigvee\limits_{\tau \in \sem{\program}{\sigma}} \post(\tau)  \quad \lor \quad \bigwedge\limits_{\tau \in \sem{\program}{\sigma}} \post(\tau) \tag{since $\sem{\program}{\sigma} \neq \emptyset$}\\
        \eeq & \awp{\program}{\post}(\sigma) \quad \lor \quad \dwlp{\program}{\post}(\sigma)
	\end{align*}

    We can conclude that
	\[
		\awlp{\program}{\post}(\sigma) = \awp{\program}{\post}(\sigma) \cup \dwlp{\program}{\post}(\sigma).
	\]
	The proof for showing
	\[
		\dwp{\program}{\post}(\sigma) = \awp{\program}{\post}(\sigma) \cap \dwlp{\program}{\post}(\sigma)
	\]
    is dual.
\end{proof}


\end{document}